\def\doi{8(3:20)2012}
\def\by#1{\mathop{{\hbox{\setbox0=\hbox{$\scriptstyle{#1\quad}$}{$%
\mathrel{\mathop{\setbox1=\hbox to \wd0{\rightarrowfill}\ht1=3pt\dp1=-2pt\box1}\limits^{#1}}%
$}}}}}
\newcommand{\lby}[1]{{\xLongrightarrow  
{#1}}}
\newcommand{\bby}[2]{\xLongrightarrow  
[#1]{#2}}
\newcommand{\m}[1]{\mathcal{#1}}
\newcommand{\co}{\langle}
\newcommand{\cf}{\rangle}
\newcommand{\ra}{\rightarrow}
\newcommand{\Ra}{\Rightarrow}
\begin{document}

\title[Model-Checking of Ordered Multi-Pushdown Automata]{ Model-Checking of Ordered Multi-Pushdown Automata\rsuper*}

\author[M.~F.~Atig]{Mohamed Faouzi Atig}	%required
\address{Uppsala University, Sweden}	%required
\email{mohamed\_faouzi.atig@it.uu.se}  %optional
%\thanks{thanks 1, optional.}	%optional

%% etc.

%% required for running head on odd and even pages, use suitable
%% abbreviations in case of long titles and many authors:

%% mandatory lists of keywords and classifications:
\keywords{Multi-pushdown Automata, Program Verification, LTL model-Checking}
\subjclass{D.2.4, D.3.1, F.4.3,  I.2.2}
\titlecomment{{\lsuper*}A shorter version of this paper has been published in the Proceedings of CONCUR '10  and FSTTCS'10}
%%%%%%%%%%%%%%%%%%%%%%%%%%%%%%%%%%%%%%%%%%%%%%%%%%%%%%%%%%%%%%%%%%%%%%%%%%%

%% the abstract has to PRECEED the command \maketitle:
%% be sure not to issue the \maketitle command twice!

\begin{abstract}
   We address  the  verification problem of  ordered multi-pushdown automata: A multi-stack extension of  pushdown automata that comes with a constraint on stack transitions such that  a pop   can only be performed on the first non-empty stack. First, we show that the emptiness problem for  ordered multi-pushdown automata is in 2ETIME. Then, we prove 
  that, for an ordered multi-pushdown automata,   the set of all  predecessors of a regular set of configurations is an effectively constructible regular set. We exploit this result to solve the global model-checking  which consists in computing the set of all configurations of an ordered multi-pushdown automaton that satisfy a given $w$-regular property (expressible in linear-time temporal logics  or the linear-time $\mu$-calculus). As an immediate consequence, we obtain an 2ETIME upper bound for the model-checking problem of $w$-regular properties for ordered multi-pushdown automata (matching its
lower-bound). 

\end{abstract}

\maketitle

%% start the paper here:
\section*{Introduction}
Automated verification of multi-threaded programs is an important and a highly challenging problem. In fact, even  when such programs manipulate data ranging over finite domains, their control structure can be complex due to the handling of (recursive) procedure calls in the presence of concurrency and synchronization between threads.

In the last few years, a lot of effort has been devoted to the verification problem for models of concurrent programs (see, e.g., \cite{BMOT05,MadhuLICS07,kahlon09,ABT08,TACAS08,AT09,HLMS10,LR08,GantyMM10,conf/popl/EmmiQR11,conf/sas/BouajjaniEP11,LaTorreN11}) where each thread corresponds to a sequential program with (recursive) procedure calls.  In fact, it is well admitted that 
pushdown automata   are  adequate models for such kind  of threads  \cite{EK99,RSJ03}, and  therefore, it is natural  to model  recursive concurrent programs  as multi-stack automata.

In general,  multi-stack  automata are Turing powerful and
hence come along with undecidability of basic decision problems \cite{Ram00}.  A lot of efforts have been nevertheless devoted recently to the development of  precise analysis algorithms of specific formal models of some classes of programs \cite{LS98,EP00,BT03,SV06,JhalaMajumdar07}.

 Context-bounding  has been proposed in \cite{QR05} as a suitable technique for the analysis of multi-stack automata.  The idea is to consider only runs of the automaton  that  can be divided into a given number of contexts, where in  each context pop and push transitions are exclusive to one stack.  
The state space which may  be explored is  still unbounded in presence of recursive procedure calls, but the context-bounded reachability problem is NP-complete even in this case.
In fact, context-bounding provides a very useful tradeoff between computational complexity and verification coverage.

 In
\cite{MadhuLICS07}, La Torre et al.\  propose a more general definition of the notion of a context.  For that,  they define the  class   of \emph{bounded-phase visibly
  multi-stack pushdown automata} (BVMPA) where  only those runs are taken into
consideration that can be split into a given number of phases, where each
phase admits pop transitions of one particular stack only. In the above case, the emptiness problem    is  decidable in  double exponential  time by reducing it to the emptiness problem for tree automata.

Another way to regain decidability is to  impose some order  on stack transitions. In \cite{multi96}, Breveglieri et
al.\ define \emph{ordered multi-pushdown automata} (OMPA), which impose a linear
ordering on stacks. Stack transitions are constrained in such a way
that a pop transition is reserved to the first non-empty stack.  In \cite{ABH-dlt08}, the emptiness  problem for OMPA is  shown to be 2ETIME-complete. ({Recall that  2ETIME is the class of all decision problems solvable by a deterministic Turing machine in time $2^{2^{dn}}$ for some constant $d$.}) The proof  of this result lies in an  encoding of OMPA  into  some class of grammars for which the emptiness problem is decidable.  Moreover,  the class of ordered multi-pushdown automata  with $2k$ stacks is shown to be  strictly more expressive than bounded-phase visibly   multi-stack pushdown automata with $k$ phases \cite{ABH-dlt08}.

In this paper, we consider the problem of verifying ordered multi-pushdown automata with respect to a given $w$-regular property (expressible in  the linear-time temporal logics \cite{Pnu77} or the linear-time $\mu$-calculus \cite{Var88}). In particular, we are interested  in solving the global model checking for ordered multi-pushdown automata which consists in computing the set of all configurations  that satisfy a given $w$-regular property.  The basic ingredient for achieving this goal is to define  a procedure for computing the set of backward reachable configurations from a given set of configurations. Therefore, our first task is to find a finite symbolic representation of the possibly infinite state-space of an ordered multi-pushdown automaton. For that, we consider the class of  recognizable sets of configurations defined  using  finite state automata  \cite{QR05,ABT08,anil10}.

We show   that  for an ordered multi-pushdown automaton  $\m{M}$ the set of all  predecessors $\mathit{Pre}^*(C)$ of a recognizable set of configurations $C$ is an effectively constructible recognizable set. For this, we introduce the class of {\em effective generalized pushdown automata} (EGPA) where transitions on stacks are (1) pop the top symbol of the stack, and  (2) push a word in some {\em effective language}   $L$ over the stack alphabet. The language  $L$ is said to be {\em effective}  if the problem consisting in  checking whether $L$ intersects a given regular language is decidable.  Observe that  $L$  can be    any  finite union of languages defined  by  a class of  automata closed under intersection with regular languages and  for which 
 the emptiness problem is decidable  (e.g., pushdown automata, Petri nets, lossy channel machines,  etc).
  Then, we show that the automata-based saturation procedure for computing the set of predecessors in standard pushdown automata  \cite{BEM97} can be extended to prove that for  EGPA  too the set of all predecessors of a  regular  set of configurations is  a regular set and  effectively constructible.  As an immediate consequence of this result, we obtain  similar   decidability results   of   the decision problems for EGPA  like  the ones obtained for pushdown automata.

Then,  we show that, given an OMPA $\m{M}$ with $n$ stacks, it is possible  to construct an  EGPA  $\m{P}$, whose  pushed languages are defined by   OMPA with $(n-1)$ stacks, such that  the following invariant is preserved:  The state  and the stack content of $\m{P}$ are respectively the same as the state and the content of the $n^{th}$ stack of $\m{M}$ when its first $(n-1)$ stacks are empty.  Let $C$ be a recognizable set of configurations of $\m{M}$, and $\mathit{Pre}^*(C)$ the set of predecessors of $C$.  
Then, we can apply the saturation procedure  to $\m{P}$ to show that  the set of configurations  $C_n$,  consisting of   $\mathit{Pre}^*(C)$  restricted to the configurations in which   the first $(n-1)$ empty stacks are empty, is   recognizable and effectively constructible.  To compute the intermediary configurations in  $\mathit{Pre}^*(C)$ where  the first $(n-1)$ stacks   are not empty, we construct an ordered multi-pushdown automaton  $\m{M}'$ with $(n-1)$ stacks that: $(1)$ performs  the same transitions on its stacks as the ones   performed by $\m{M}$ on its first $(n-1)$ stacks, and $(2)$ simulates a push transition  of $\m{M}$ over  its  $n^{th}$ stack  by a transition of  the finite-state automaton accepting  the recognizable  set of configurations $C_n$. Now, we can apply the induction hypothesis to $\m{M}'$ and construct a finite-state automaton accepting the set of all  predecessors $\mathit{Pre}^*(C)$.

As an application of this result, we show that the set of configurations   of an ordered multi-pushdown automaton satisfying a given $w$-regular property  is recognizable and effectively constructible. Our approach also allows us to obtain  an 2ETIME upper bound for the model checking problem of $w$-regular properties for ordered multi-pushdown automata (matching its lower-bound \cite{ABH-dlt08}).

\medskip

\noindent
{\bf Related works:}  
As mentioned earlier, context-bounding has been introduced by Qadeer and Rehof in \cite{QR05} for detecting safety bugs in shared memory concurrent programs. Several extensions of context-bounding to other classes of programs and efficient procedures for context-bounded analysis have been proposed in \cite{BESS05,BFQ07,LR08,ABQ09,TMP09,LaTorreMP-PLDI09,conf/cav/TorreMP10}. Other bounding concepts allowing for larger/incomparable coverage of the explored behaviors have been proposed in \cite{MadhuLICS07,GantyMM10,conf/popl/EmmiQR11,conf/sas/BouajjaniEP11,LaTorreN11}.

In \cite{anil10}, A. Seth   shows that  the set of predecessors of a recognizable set of configurations of a bounded-phase visibly
  multi-stack pushdown automaton is recognizable and effectively constructible. In fact, 
our results generalize the obtained  result in \cite{anil10}  since any bounded-phase visibly
  multi-stack pushdown automaton with $k$ phases can be simulated by an ordered multi-pushdown automaton  with $2k$ stacks \cite{ABH-dlt08}.

In this line of work, the focus has been  on checking safety properties.    In \cite{MadP11}, P. Madhusudan and G. Parlato propose a unified and generalized technique to show the decidability of  the emptiness problem  for several restricted classes of  concurrent pushdown automata (including  ordered multi-pushdown automata). The proof is done by showing that the graphs of each such computations (seen as a multi-nested words) have a bounded tree-width.  This result implies that model checking MSO  properties (over finite-computations) for these systems is decidable for OMPA. In the conclusion of   \cite{MadP11}, the authors claim that their approach can be used to show the decidability of the model checking  of $\omega$-regular properties over infinite computations of OMPA but no proof was provided. Moreover, the authors does not address the global model-checking problem for OMPA neither establish its  complexity  as we do.

To the best of our knowledge, this is the first work that addresses the global model checking for ordered multi-pushdown automata. In this paper, we extend \cite{atig10,fsttcs/Atig10}  by adding details and missing proofs.

%This paper is an extension of the  published papers 
%where missing proofs are provided. 

%Moreover, in this paper, we use these results  to establish  the decidability of the  CTL$^*$ model checking for OMPA. 

\section{Preliminaries}
\label{prel}

In this  section, we  introduce  some basic definitions and notations that will be  used in the rest of the paper.

\medskip

\noindent
{\bf Integers:} Let   $\mathbb{N}$ be    the set of natural numbers. For every  $i,j \in \mathbb{N}$  such that $i \leq j$,   we use  $[i,j]$ (resp. $[i,j[$) to denote    the set $\{k \in \mathbb{N}\,|\, i\leq k \leq j\}$  (resp. $\{k \in \mathbb{N}\,|\, i\leq k < j\}$).

\medskip

\paragraph{\bf Words and languages:}
Let $\Sigma$ be a finite alphabet. We denote by $\Sigma^*$ (resp. $\Sigma^+$) the set of all words (resp. non empty words) over $\Sigma$, and by $\epsilon$ the empty word. A language is a (possibly infinite) set of words.   We use $\Sigma_{\epsilon}$  and $\mathit{Lang}(\Sigma)$ to denote  respectively  the set $\Sigma \cup \{\epsilon\}$ and  the set of all  languages over $\Sigma$. Let $u$ be a word over $\Sigma$. The length of $u$ is denoted by $|u|$. For every    $j \in  [1,|u|]$, we use $u(j)$ to denote the $j^{th} $ letter  of $u$. We denote  by   $u^R$ the mirror of $u$. 

%
%Let $\Theta $ be a subset of $\Sigma$.  Given  a word $v \in \Sigma^*$, we denote by $v|_{\Theta}$ the projection of $v$ over $\Theta$, i.e., the word obtained from $v$ by erasing all the symbols that are not in $\Theta$.  
%This definition is extended to languages as follows: If $L$  is a language over $\Sigma$, then $L|_{\Theta}=\{v|_{\Theta}\,\mid \, v \in L\}$.

\medskip

\noindent
{\bf Transition systems:}
\label{trans-syste} A transition system (TS for short) is a triplet  $\m{T}=(C,\Sigma,\rightarrow)$ where: $(1)$   $C$ is a (possibly infinite)  set of configurations, 
$(2)$ $\Sigma$ is a finite set of labels (or actions) such that $C \cap \Sigma=\emptyset$, and $(3)$  $\rightarrow \subseteq C \times \Sigma_{\epsilon} \times C$ is a transition relation.  We write  $c \by{a}_{\m{T}} c'$ whenever $c$ and $c'$ are two configurations and  $a$ is an action  such that  $(c,a,c') \in \rightarrow$.

  Given two configurations $c,c' \in C$, a finite run  $\rho$ of $\m{T}$ from $c$ to $c'$ is a finite sequence $c_0 a_1c_1  \cdots a_n c_n$, for some $n \geq 1$, such that: $(1)$  $c_0=c$ and $c_n=c'$, and $(2)$  $c_i \by{a_{i+1}}_{\m{T}} c_{i+1}$ for all $i \in [0,n[$. 
In this case, we say that  $\rho$ has length $n$  and is labelled by the word $a_1 a_2 \cdots a_n$.

Let     $c,c' \in C$ and  $u \in \Sigma^*$. We  write $c \,{\bby{n}{u}} {}_{\m{T}} \,c'$ if one of the following two cases holds: (1)  $n=0$, $c=c'$,  and $u=\epsilon$, and (2)   there is a run $\rho$ of length $n$ from $c$ to $c'$ labelled  by $u$.  We also write $c \, \lby{u}{}_{\m{T}}^*\, {c'}$   (resp. $c \, \lby{u}{}_{\m{T}}^+\, {c'}$) to denote   that $c \,{\bby{n}{u}} {}_{\m{T}} \,c'$    for some $n\geq 0$ (resp. $n>0$).

For every      $C_1, C_2 \subseteq C$, let    $\mathit{Traces}_{\m{T}}(C_1,C_2)=\{u\in \Sigma^* \,|\, \exists (c_1,c_2) \in C_1 \times C_2\,,\, c_1 \, \lby{u}{}_{\m{T}}^* \, {c_2}\}$ be   the set of sequences of actions generated by the  runs of  $\m{T}$ from a configuration  in $C_1$  to a configuration in $C_2$.

For every  $C' \subseteq C$,    let $Pre_{\m{T}}(C')=\{c \in C\,|\, \exists (c',a) \in C' \times \Sigma_{\epsilon}\,,\,  c \,\by{a}_{\m{T}} \,c'\}$ be the set of immediate predecessors of $C'$. Let  $Pre_{\m{T}}^*$  be  the reflexive-transitive closure of $Pre_{\m{T}}$, and  let $Pre_{\m{T}}^+=Pre_{\m{T}} \circ Pre_{\m{T}}^*$ where the operator  $\circ$ stands for the function composition.

\medskip

\noindent
{\bf Finite state automata:}
\label{sec.fsa}
A finite state automaton (FSA) is a tuple $\m{A}=(Q,\Sigma,\Delta,I,F)$ where: $(1)$
$Q$ is the  finite non-empty set of states, $(2)$ $\Sigma$ is the  finite  input  alphabet, $(3)$ $\Delta \subseteq (Q \times \Sigma_{\epsilon} \times Q)$ is the transition relation,  $(4)$
$I \subseteq Q$ is the  set of initial states, and $(5)$ $F \subseteq Q$ is the  set of final states. We represent  a transition $(q,a,q')$ in $\Delta$ by $q\by{a}_{\m{A}} q'$.   Moreover, if   $I'$ and $F'$ are two subsets of $Q$, then  we  use    $\m{A}(I',F')$ to denote the finite state automaton  defined  by  the tuple $(Q,\Sigma,\Delta,I',F')$.

 The size of  $\m{A}$ is defined by $|\m{A}|=(|Q|+|\Sigma|+|\Delta|)$. We use  $\m{T}({\m{A}})=(Q,\Sigma,\Delta)$ to denote   the  transition system associated with  $\m{A}$. The language  accepted (or recognized) by $\m{A}$ is   given by  $L(\m{A})=\mathit{Traces}_{\m{T}(\m{A})}(I,F)$.

%%%%%%%%%%%%%%%%%%%%%%%%%%%%%%%%%%%%%%%%%%%
%%%%%%%%%%%%%%%%%%%%%%%%%%%%%%%%%%%%%%%%%%%

%%%%%%%%%%%%%%%%%%%%%%%%%%%%%%%%%%%%%%%%%%%
%%%%%%%%%%%%%%%%%%%%%%%%%%%%%%%%%%%%%%%%%%%

\section{Generalized pushdown automata}
\label{chap2.gpa}
In this section, we  introduce  the class of generalized pushdown automata where transitions on stacks are (1) pop the top symbol of the stack, and (2) push a word in some (effectively) given set of words $L$ over the stack alphabet.  
A transition $t$ is of the form $\delta(p,\gamma,a,p')=L$ where $L$ is a (possibly infinite) set  of words. Being in a configuration $(q,w)$ where $q$ is a state and $w$ is a stack content, $t$ can be applied if both $p=q$ and the content of the stack is of the form $\gamma w'$ for some $w'$. Taking the transition and reading the input letter $a$ (which may be the empty word), the automaton moves to the successor configuration $(p',uw')$ where $u \in L$ (i.e., the new state is $p'$, and $\gamma$ is replaced with a word $u$ belonging to the language $L$).   Formally, we have:

\begin{defi}[Generalized pushdown automata]
A generalized pushdown automaton (GPA for short)  is a tuple $\m{P}=(P,\Sigma,\Gamma,\delta,p_0,\gamma_0,F)$ where: $(1)$  $P$ is the   finite non-empty set of  states, $(2)$ $\Sigma$ is the  input alphabet, $(3)$  $\Gamma$ is  the  stack alphabet, $(4)$ $\delta\,:\,  P \times \Gamma \times \Sigma_{\epsilon}\times P  \ra  \mathit{Lang}({\Gamma})$ is the transition function, $(5)$ $p_0 \in P$ is the initial state, $(6)$ $\gamma_0 \in \Gamma$ is the initial stack symbol,  and $(7)$  $F \subseteq P$ is the set of final states.

\end{defi}

Next, we define the effectiveness  property for generalized pushdown automata. Intuitively,  the  generalized pushdown automaton $\m{P}$ is said to be effective if for any possible pushed language $L$ by $\m{P}$ (i.e., $\delta(p,\gamma,a,p')=L$ for some $p, p' \in Q$, $\gamma \in \Gamma$, and $a \in \Sigma_\epsilon$), the  problem of checking the non-emptiness of the intersection of $L$ and any given regular language (i.e. accepted by a finite-state automaton) is decidable. 

\begin{defi}[Effectiveness Property]
\label{def-effec}
A GPA $\m{P}=(P,\Sigma,\Gamma,\delta,p_0,\gamma_0,F)$  is   effective if and only if for every finite state automaton $\m{A}$ over the alphabet  $\Gamma$, it is decidable  whether $L(\m{A}) \cap \delta(p,\gamma,a,p') \neq \emptyset$ for all  $p,p' \in P$, $\gamma \in \Gamma$, and $a \in \Sigma_{\epsilon}$.
\end{defi}

A configuration of a GPA $\m{P}=(P,\Sigma,\Gamma,\delta,p_0,\gamma_0,F)$ is a pair $(p,w)$ where $p \in P$ and $w \in \Gamma^*$. The set of all configurations of $\m{P}$ is denoted by $\mathit{Conf}({\m{P}})$.   Similarly to the case of pushdown automata \cite{BEM97}, we   use  the class of $\m{P}$-automata  as  finite symbolic representation of a set of configurations of  GPA. Formally,  a $\m{P}$-automaton is a FSA  $\m{A}=(Q_{\m{A}},\Gamma,\Delta_{\m{A}}, I_{\m{A}}, F_{\m{A}})$ such  that  $I_{\m{A}}=P$. We say that a configuration $(p,w)$ of $\m{P}$ is accepted (or recognized) by $\m{A}$ if $w \in L(\m{A}(\{p\},F_{\m{A}}))$. The set of all configurations recognized by $\m{A}$ is denoted by $L_{\m{P}}(\m{A})$. A set of configurations of $\m{P}$ is said to be recognizable if and only if  it is accepted by some $\m{P}$-automaton.

\begin{table}[h]

\begin{center}
\fbox{$\begin{array}{lllclllclll}
\multicolumn{11}{l}{\m{P}=(\{p_0,p_1,p_2,p_f\},\{a,b,c\},\{\bot, \gamma_0, \gamma_1,\gamma_2\},\delta,p_0,\bot,\{p_f\})}\;\;\; \;\;\; \; \;\;\; \; \;\;\; \; \;\;\; \; \;\;\;  \\[1ex]

\delta(p_0,\bot,\epsilon,p_2)& = & \{\gamma_2^i \gamma_1^i \gamma_0^i\bot \mid i \in \mathbb{N}\} & \;\;\;\;\;\;\;\; \;\;\;  \;\;\;\; \;\;\;  &\delta(p_2,\gamma_2,a,p_2)& = &\{\epsilon\} & ~ &  \\

\delta(p_2,\gamma_1,b,p_1) & = &  \{\epsilon\} & &
\delta(p_1,\gamma_1,b,p_1)& = & \{\epsilon\}& ~ & 
\\
\delta(p_1,\gamma_0,c,p_0)& = & \{\epsilon\}& &
\delta(p_0,\gamma_0,c,p_0)& =& \{\epsilon\}& ~ &  \\

\delta(p_0,\bot,\epsilon,p_f)& = & \{\epsilon\}& &
 \text{otherwise}& & \emptyset& ~ & \\

\end{array}$}

\caption{\label{exp2}\footnotesize A GPA $\m{P}$ for $\{\epsilon\} \cup \{a^{i_1}b^{i_1} c^{i_1} a^{i_2} b^{i_2}c^{i_2}\cdots a^{i_k}b^{i_k}c^{i_k}\;|\; k \geq 1$ and $i_1,\ldots,i_k > 0\} $}
\end{center}

\end{table}

The transition system $\m{T}(\m{P})$ associated with the generalized pushdown automaton $\m{P}$  is defined   by the tuple  $(\mathit{Conf}(\m{P}),\Sigma,\ra)$  where $\ra$ is the smallest transition relation such that:  For every  $p,p' \in P$, $\gamma \in \Gamma$, and $a \in \Sigma_{\epsilon}$,  if $\delta(p,\gamma,a,p')\neq \emptyset$, then $(p, \gamma w) \by{a}_{\m{T}(\m{P})} (p' ,u w)$ for all $u \in \delta(p,\gamma,a,p')$ and  $w \in \Gamma^*$.  Let  $L(\m{P})=\mathit{Traces}_{\m{T}(\m{P})}(\{(p_0, \gamma_0)\},F \times \{\epsilon\})$  denote the language accepted by   $\m{P}$.

Observe that   pushdown automata  can be seen as  a particular class  of effective GPA  where  $\delta(p,\gamma,a,p')$ is a finite set of words for all $(p,\gamma,a,p')$.

Table~\ref{exp2} shows an example of an effective  generalized pushdown automaton where the pushed language $\{\gamma_2^i \gamma_1^i \gamma_0^i\bot \mid i \in \mathbb{N}\} $ can be accepted by a Petri net (with reachability as  acceptance condition).

% 
% 
%We can show  that the class of effective generalized pushdown automata  is  closed under  concatenation, union, Kleene star,  projection, homomorphism, and intersection with a regular language using a similar techniques as for pushdown automata \cite{Eil74,HU79}. However,   effective generalized pushdown automata  are not closed under  intersection. 
% 

 \subsection{Computing the set of    predecessors for an GPA}
 \label{sat}
 In this section, we show that the set of predecessors of a recognizable set of configurations of   an effective GPA is recognizable and effectively constructible.
 This is done by adapting the  construction given in \cite{BEM97,EHRS00b,Sch02b}. 
  On the other hand, it is easy to observe that the set of successors  of a recognizable set of configurations of an effective GPA is not recognizable in general (see the example given in  Table~\ref{exp2}).

 \begin{thm}
 \label{pred-EPDA}
 For every effective generalized pushdown automaton $\m{P}$, and every $\m{P}$-automaton $\m{A}$, it is possible to construct  a $\m{P}$-automaton recognizing $Pre^*_{\m{T}(\m{P})}(L_{\m{P}}(\m{A}))$.
 \end{thm}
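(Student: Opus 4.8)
The plan is to adapt the automata-theoretic saturation procedure of Bouajjani, Esparza and Maler \cite{BEM97} to the generalized setting, the only genuinely new ingredient being the effectiveness property of Definition~\ref{def-effec}. Fix an effective GPA $\m{P}=(P,\Sigma,\Gamma,\delta,p_0,\gamma_0,F)$ and a $\m{P}$-automaton $\m{A}=(Q_{\m{A}},\Gamma,\Delta_{\m{A}},P,F_{\m{A}})$. As usual I would first normalise $\m{A}$ so that no transition enters an initial state (i.e.\ no transition of $\Delta_{\m{A}}$ has its target in $P$); this is harmless, since one can add a fresh copy of each initial state carrying only its outgoing transitions. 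Starting from $\m{A}_0=\m{A}$ I would then build an increasing sequence $\m{A}_0 \incl \m{A}_1 \incl \cdots$ over the same state set $Q_{\m{A}}$ by repeatedly applying the saturation rule: whenever $\delta(p,\gamma,a,p')\neq\emptyset$ for some $p,p'\in P$, $\gamma\in\Gamma$, $a\in\Sigma_{\epsilon}$, and there is a state $q\in Q_{\m{A}}$ together with a word $u\in\delta(p,\gamma,a,p')$ labelling a path from $p'$ to $q$ in the current automaton, add the transition $(p,\gamma,q)$. Because every added transition is labelled by a stack symbol and has source in $P$ and target in $Q_{\m{A}}$, at most $|P|\cdot|\Gamma|\cdot|Q_{\m{A}}|$ transitions can ever be added, so the sequence stabilises at some saturated automaton $\m{A}_*$ after finitely many steps.

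The decisive point is that each instance of the saturation rule is effectively testable. The set of words labelling a path from $p'$ to $q$ in the current automaton is exactly the regular language $L(\m{A}_i(\{p'\},\{q\}))$ accepted by a finite-state automaton, and the premise of the rule asks precisely whether $\delta(p,\gamma,a,p') \cap L(\m{A}_i(\{p'\},\{q\})) \neq \emptyset$. This is decidable by the effectiveness property of $\m{P}$ (Definition~\ref{def-effec}), applied to the finite-state automaton $\m{A}_i(\{p'\},\{q\})$. Since the construction performs only finitely many such tests --- one per quadruple $(p,\gamma,p',q)$ at each of the finitely many iterations --- the automaton $\m{A}_*$ is effectively constructible. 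Note that the input letters $a$ play no role beyond ranging over $\Sigma_{\epsilon}$, which is consistent with $Pre^*$ being defined on the transition relation irrespective of labels.

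It then remains to prove $L_{\m{P}}(\m{A}_*) = Pre^*_{\m{T}(\m{P})}(L_{\m{P}}(\m{A}))$, which I would split into the two inclusions. For completeness ($\supseteq$) I would argue by induction on the length of a shortest run $(p,\gamma w)\Rightarrow^* c'$ with $c'\in L_{\m{P}}(\m{A})$: the base case is immediate since $\m{A}\incl\m{A}_*$, and in the inductive step a first move $(p,\gamma w)\by{a}_{\m{T}(\m{P})}(p',uw)$ with $u\in\delta(p,\gamma,a,p')$ gives, by the induction hypothesis applied to the shorter run from $(p',uw)$, an accepting path that factors as $p'\xrightarrow{u} q \xrightarrow{w} q_f$ in $\m{A}_*$ with $q_f\in F_{\m{A}}$; since $u$ witnesses the premise of the saturation rule for the pair $(p',q)$ and $\m{A}_*$ is a fixpoint, the transition $(p,\gamma,q)$ is already present, yielding the accepting path $p \xrightarrow{\gamma} q \xrightarrow{w} q_f$ for $(p,\gamma w)$. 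For soundness ($\incl$) I would prove the invariant that whenever there is a path from $p\in P$ reading $w$ to a state $q$ in $\m{A}_*$, there is a configuration $(p',w')$ with $(p,w)\Rightarrow^*(p',w')$ and a path from $p'$ reading $w'$ to $q$ in the \emph{original} automaton $\m{A}$; applied to an accepting path (so $q\in F_{\m{A}}$) this exhibits a configuration $(p',w')\in L_{\m{P}}(\m{A})$ reachable from $(p,w)$, hence $(p,w)\in Pre^*$.

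I expect the soundness invariant to be the main obstacle, exactly as in the classical proof: it must be established by induction on the order in which transitions were added to $\m{A}_*$, so that when a path uses a transition $(p,\gamma,q)$ justified by some $u\in\delta(p,\gamma,a,p')$ and an earlier $p'$-to-$q$ path, one may replace that single transition by the GPA move $(p,\gamma w)\Rightarrow(p',uw)$ followed by the path for $u$, invoking the induction hypothesis on the strictly-earlier-justified sub-paths. The generalisation from ordinary pushdown automata is painless here, because the argument only ever uses a \emph{single} witness word $u$ for each applied rule; the infiniteness of the pushed language $L$ is confined to the decidability of the saturation test and never enters the correctness argument itself.
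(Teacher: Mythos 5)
Your proposal follows the paper's proof essentially step for step: the same saturation rule, the same termination argument (finitely many addable transitions), the same use of the effectiveness property to make each saturation test decidable, and the same two correctness lemmas --- completeness by induction on the length of the run to a configuration in $L_{\m{P}}(\m{A})$, and soundness via the invariant that every path of the saturated automaton traces back to a run of $\m{P}$ ending in a configuration accepted by the \emph{original} $\m{A}$, with the no-incoming-transitions normalization yielding the strengthened claim (state $q$, empty stack) when the target is an initial state. Your closing observation that the infinite pushed languages affect only the decidability of the test, never the correctness argument, is exactly what makes the paper's adaptation of \cite{BEM97} work.

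The one point where your plan, as written, would not quite go through is the induction measure for soundness. You propose induction ``on the order in which transitions were added,'' invoking the hypothesis on ``strictly-earlier-justified sub-paths''; but after you replace one occurrence of the newest transition $(p,\gamma,q)$ by the GPA move and the witness path for $u$, the remaining suffix of the path may still use that very same transition, so it is \emph{not} justified strictly earlier, and a single induction on addition order gets stuck. The paper resolves this exactly as the classical proof does, with a nested induction: the outer one on the index $n$ of the last transition added, the inner one on the number $m$ of occurrences of that transition in the path. The replacement step splices in only transitions of index $<n$ (the witness word is accepted in $\m{A}_{n-1}$) and strictly decreases $m$, so the inner hypothesis applies to the rewritten path. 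With this adjustment your argument coincides with the paper's proof.
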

 
The rest of this section is devoted to the proof of  Theorem \ref{pred-EPDA}. For that, let  $\m{P}=(P,\Sigma,\Gamma,\delta,p_0,\gamma_0,F)$ be an effective generalized pushdown automata  and  $\m{A}=(Q_{\m{A}},\Gamma,\Delta_{\m{A}},I_{\m{A}},F_{\m{A}})$ be an $\m{P}$-automaton. Without loss of generality, we assume that $\m{A}$ has no transition leading to an initial state. We compute $Pre^*_{\m{T}(\m{P})}(L_{\m{P}}(\m{A}))$ as the set of configurations recognized by an $\m{P}$-automaton $\m{A}_{pre^*}=(Q_{\m{A}},\Gamma,\Delta_{pre^*},I_{\m{A}},F_{\m{A}})$ obtained from $\m{A}$ by means of a saturation procedure. Initially, we have $\m{A}_{pre^*}=\m{A}$. Then, the procedure adds new transitions to $\m{A}_{pre^*}$, but no new states. New transitions are added according to the following saturation rule:

%\vspace{-0.5cm}

\begin{table}[h]
\begin{center}

\begin{tabular}{|l|}
\hline
{\normalsize \em   For every $p,p' \in P$, $\gamma \in \Gamma$, and $a \in \Sigma_{\epsilon}$, if $\delta(p,\gamma,a,p') \neq \emptyset$, then for every  $q \in Q_{\m{A}}$ } \\  {\normalsize  \em such that $\delta(p,\gamma,a,p') \cap L(\m{A}_{pre^*}(\{p'\},\{q\})) \neq \emptyset$,  add the transition  $(p,\gamma,q)$ to $\m{A}_{pre^*}$}\\

\hline
\end{tabular}

\end{center}

\end{table}

%\vspace{-0.8cm}

It is easy to see that the saturation procedure eventually reaches a fixed point because the number of possible new transitions is finite.  Moreover, the saturation procedure is well defined since    the emptiness problem of   the language $\big(\delta(p,\gamma,a,p') \cap L(\m{A}_{pre^*}(\{p'\},\{q\}))\big)$  is decidable ($\m{P}$ is an effective GPA).  Then, the relation between  the set of configurations recognized by $\m{A}_{pre^*}$  and  the set $Pre^*_{\m{T}(\m{P})}(L_{\m{P}}(\m{A}))$ is established  by Lemma \ref{lemm_pre_epda}. (Observe that Theorem \ref{pred-EPDA} follows from Lemma  \ref{lemm_pre_epda}.)

\begin{lem}
\label{lemm_pre_epda}
$L_{\m{P}}(\m{A}_{pre^*})=Pre^*_{\m{T}(\m{P})}(L_{\m{P}}(\m{A}))$.
\end{lem}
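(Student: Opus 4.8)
The plan is to prove the two inclusions $L_{\m{P}}(\m{A}_{pre^*}) \supseteq Pre^*_{\m{T}(\m{P})}(L_{\m{P}}(\m{A}))$ and $L_{\m{P}}(\m{A}_{pre^*}) \subseteq Pre^*_{\m{T}(\m{P})}(L_{\m{P}}(\m{A}))$ separately, following the standard saturation argument of \cite{BEM97} but taking care of the fact that a single GPA transition may push an entire word $u$ (of arbitrary length) rather than a bounded-length stack contents. The key technical point throughout is to relate runs of $\m{A}_{pre^*}$ to runs of $\m{T}(\m{P})$, keeping track of how added transitions of the form $(p,\gamma,q)$ correspond to backward steps of the automaton.

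For the inclusion $\supseteq$, I would show that whenever $(p,\gamma w) \by{a}_{\m{T}(\m{P})} (p',uw)$ is a single backward-reachability step and $(p',uw) \in L_{\m{P}}(\m{A}_{pre^*})$, then $(p,\gamma w) \in L_{\m{P}}(\m{A}_{pre^*})$. The point is that if $(p',uw)$ is accepted, there is an accepting run of $\m{A}_{pre^*}$ reading $uw$; it splits into a run on $u$ from $p'$ to some state $q$ and a run on $w$ from $q$ to a final state. Since $u \in \delta(p,\gamma,a,p')$, this $u$ witnesses $\delta(p,\gamma,a,p') \cap L(\m{A}_{pre^*}(\{p'\},\{q\})) \neq \emptyset$, so the saturation rule has added the transition $(p,\gamma,q)$. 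Reading $\gamma$ via this transition from $p$ to $q$ and then $w$ from $q$ to a final state gives an accepting run for $(p,\gamma w)$. A subtlety is that the saturation rule is applied to the \emph{final} $\m{A}_{pre^*}$ (after the fixed point), so I must argue that the relevant transition $(p,\gamma,q)$ is indeed present at the fixed point; this is immediate since $\m{A}_{pre^*}$ only grows and the rule tests membership against the final automaton. Induction on the length of the backward run then yields $Pre^*_{\m{T}(\m{P})}(L_{\m{P}}(\m{A})) \subseteq L_{\m{P}}(\m{A}_{pre^*})$.

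For the converse inclusion $\subseteq$, which is the harder direction, I would proceed by induction on the number of saturation steps, or equivalently assign to each transition of $\m{A}_{pre^*}$ a ``level'' equal to the stage at which it was added (level $0$ for the original transitions of $\m{A}$). The claim to prove by induction is: if $(p,w) \in L_{\m{P}}(\m{A}_{pre^*})$, witnessed by an accepting run using transitions of total level bounded appropriately, then $(p,w) \in Pre^*_{\m{T}(\m{P})}(L_{\m{P}}(\m{A}))$. The base case is that a run using only level-$0$ transitions lies in $L_{\m{P}}(\m{A})$ itself, hence trivially in $Pre^*$. For the inductive step, I take an accepting run of $\m{A}_{pre^*}$ on $w$ and locate the first transition $(p,\gamma,q)$ of maximal level used; this transition was added because $\delta(p,\gamma,a,p') \cap L(\m{A}_{pre^*}(\{p'\},\{q\})) \neq \emptyset$, so there is some word $u$ in that intersection accepted by a run of strictly smaller level. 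Replacing the single edge $(p,\gamma,q)$ by the run on $u$ from $p'$ to $q$ produces an accepting run on $u w'$ (where $w = \gamma w'$) of strictly smaller level, and $(p',uw')$ is a $\m{T}(\m{P})$-successor of $(p,w)$ via the transition $\delta(p,\gamma,a,p')$. The induction hypothesis gives $(p',uw') \in Pre^*_{\m{T}(\m{P})}(L_{\m{P}}(\m{A}))$, and since $(p,w) \by{a}_{\m{T}(\m{P})} (p',uw')$ we conclude $(p,w) \in Pre^*_{\m{T}(\m{P})}(L_{\m{P}}(\m{A}))$.

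The main obstacle I anticipate is setting up a clean induction measure for the $\subseteq$ direction that decreases under the edge-replacement step, because the replacement substitutes one edge by a possibly long run on $u$, and those edges may themselves have been added at various stages. The standard fix is to use the stage/level of edges together with a lexicographic or well-founded ordering guaranteeing that the substituted subrun on $u$ uses only edges of \emph{strictly smaller} level than the maximal-level edge being eliminated; one must verify that the saturation rule, when it adds $(p,\gamma,q)$ at stage $k$, only relies on transitions of $\m{A}_{pre^*}$ present before stage $k$, so that $u$ is accepted using edges of level $< k$. Once this well-foundedness is pinned down, the remaining steps are routine decompositions of automaton runs.
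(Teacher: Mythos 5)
Your $\supseteq$ direction is sound and coincides with the paper's Lemma~\ref{lemma1-gpa}: induction on the length of the backward run, splitting the accepting run on $uw$ at an intermediate state $q$, and invoking the saturation rule at the fixed point. The gap is in the $\subseteq$ direction, in the choice of the edge you replace. You ``locate the first transition $(p,\gamma,q)$ of maximal level'' but then manipulate it as if it were the \emph{first} transition of the accepting run: your notation $w=\gamma w'$ and your claim $(p,w)\by{a}_{\m{T}(\m{P})}(p',uw')$ only make sense when the replaced edge reads the topmost stack symbol from the run's starting state. These two descriptions do not match, because the first maximal-level edge can occur strictly inside the run: added edges can enter initial states (for instance $(p,\gamma,p')$ is added whenever $\epsilon\in\delta(p,\gamma,a,p')$), so a low-level added edge can lead to an initial state from which a higher-level added edge is then taken. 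For an added edge $(r,\gamma',s)$ lying in the middle of the run the replacement fails twice over: the witness word $u$ for that edge is accepted by a run starting at the state $p'$ given by $u\in\delta(r,\gamma',a,p')$, not at $r$, so splicing it in place of the edge does not even yield a run of the automaton; and semantically the rewriting would replace a symbol $\gamma'$ buried inside the stack, which no step of $\m{T}(\m{P})$ can match, since GPA transitions only rewrite the topmost symbol. If, to repair this, you always replace the \emph{first} edge of the run, two obligations appear that your proposal does not discharge: (i) you must show that whenever the run uses at least one added edge, its first edge is an added one --- this holds, but only because every added edge emanates from an initial state together with the assumption (made explicitly, without loss of generality, in the paper and never mentioned by you) that $\m{A}$ has no transitions leading into initial states; and (ii) the first edge need not have maximal level, so neither ``maximal level'' nor ``total level'' decreases under the replacement; you need something like the multiset of levels under the multiset ordering.

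For comparison, the paper sidesteps all of this by proving the stronger Lemma~\ref{lemm-gpa}, which speaks of runs from an initial state $p$ to an \emph{arbitrary} state $q$ and concludes $(p,w)\lby{\tau}{}_{\m{T}(\m{P})}^*\,(p',w')$ with $w'\in L(\m{A}_0(\{p'\},\{q\}))$, plus the clause that $p'=q$ and $w'=\epsilon$ when $q$ is initial. Its induction is nested --- outer on the stage $n$, inner on the number of uses of the $n$-th added transition --- and the run is cut at the \emph{first occurrence} of that newest transition: the prefix before it lies in $\m{A}_{n-1}$ and ends at the initial state which is the source of that transition, so the outer induction hypothesis together with the ``moreover'' clause lets the GPA consume this prefix entirely, after which the witness supplies the one genuine GPA step and the inner induction hypothesis finishes the run. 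Your plan is repairable along either route (the strengthened statement with nested induction, or first-edge replacement with the structural fact and multiset measure above), but as written the inductive step of the $\subseteq$ direction does not go through.
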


 Lemma \ref{lemm_pre_epda} is an immediate consequence  of Lemma \ref{lemma1-gpa} and Lemma \ref{lemm-gpa}: Lemma \ref{lemma1-gpa} shows that  $Pre^*_{\m{T}(\m{P})} (L_{\m{P}}(\m{A})) \subseteq L_{\m{P}}(\m{A}_{pre^*})$ while Lemma \ref{lemm-gpa} establishes $L_{\m{P}}(\m{A}_{pre^*})  \subseteq Pre^*_{\m{T}(\m{P})} (L_{\m{P}}(\m{A}))$.

\begin{lem}
\label{lemma1-gpa}
For every configuration $(p',w') \in L_{\m{P}}(\m{A})$,  if $(p, w)\, \lby{\tau}{}_{\m{T}(\m{P})}^*\, (p',w')$  for some $\tau \in \Sigma^*$, then $(p,w)  \in L_{\m{P}}(\m{A}_{pre^*})$.
\end{lem}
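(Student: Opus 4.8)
The plan is to prove the statement by induction on the number $n$ of steps in the run $(p,w)\,\bby{n}{\tau}{}_{\m{T}(\m{P})}\,(p',w')$. For the base case $n=0$ we have $(p,w)=(p',w')\in L_{\m{P}}(\m{A})$; since $\m{A}_{pre^*}$ is obtained from $\m{A}$ by only \emph{adding} transitions and no new states, every accepting run of $\m{A}$ is still an accepting run of $\m{A}_{pre^*}$, hence $L_{\m{P}}(\m{A})\subseteq L_{\m{P}}(\m{A}_{pre^*})$ and we are done.

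For the inductive step, suppose $n\geq 1$ and split off the first transition, writing $(p,w)\,\by{a}_{\m{T}(\m{P})}\,(p_1,w_1)\,\bby{n-1}{\tau'}{}_{\m{T}(\m{P})}\,(p',w')$ with $\tau=a\tau'$. By the definition of $\m{T}(\m{P})$, a transition can fire only if the stack has a top symbol, so $w=\gamma w''$ for some $\gamma\in\Gamma$ and $w''\in\Gamma^*$, and $w_1=u w''$ for some $u\in\delta(p,\gamma,a,p_1)$. Applying the induction hypothesis to the remaining $(n-1)$-step run yields $(p_1,u w'')\in L_{\m{P}}(\m{A}_{pre^*})$, i.e.\ there is an accepting run of $\m{A}_{pre^*}$ starting in the initial state $p_1$ and reading $u w''$ into a final state. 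Splitting this run at the boundary between $u$ and $w''$ produces a state $q\in Q_{\m{A}}$ with $u\in L(\m{A}_{pre^*}(\{p_1\},\{q\}))$ and $w''\in L(\m{A}_{pre^*}(\{q\},F_{\m{A}}))$.

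It then remains to exhibit the transition $(p,\gamma,q)$ in $\m{A}_{pre^*}$: having it lets us read $w=\gamma w''$ from $p$ by first moving to $q$ on $\gamma$ and then reading $w''$ into a final state, which gives exactly $(p,w)\in L_{\m{P}}(\m{A}_{pre^*})$. This is where the saturation rule enters. Since $u\in\delta(p,\gamma,a,p_1)$ and $u\in L(\m{A}_{pre^*}(\{p_1\},\{q\}))$, we have $\delta(p,\gamma,a,p_1)\cap L(\m{A}_{pre^*}(\{p_1\},\{q\}))\neq\emptyset$. As $\m{A}_{pre^*}$ denotes the fixed point of the procedure, every rule instance that is applicable has already fired, so the transition $(p,\gamma,q)$ is present.

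The step I expect to require the most care is this last one: the witness $u$ for the non-emptiness condition lives in the \emph{final}, saturated automaton, whereas the rule is phrased in terms of whatever automaton is available when it fires. This is reconciled by \emph{monotonicity} — transitions are only ever added — so any run present at an earlier stage persists, and conversely the run certifying $u\in L(\m{A}_{pre^*}(\{p_1\},\{q\}))$ in the fixed point shows that the non-emptiness condition holds at the fixed point, whence the rule must already have been applied (otherwise it would still be applicable, contradicting saturation). I would also dispatch the degenerate cases explicitly: when the tail $w''$ is read partly through $\epsilon$-transitions one must fix precisely where to place the split state $q$, and when $u=\epsilon$ one simply takes $q=p_1$, using that $\epsilon\in L(\m{A}_{pre^*}(\{p_1\},\{p_1\}))$ via the empty path. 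Both become routine once the monotonicity argument is in place.
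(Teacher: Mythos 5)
Your proposal is correct and follows essentially the same route as the paper's own proof: induction on the run length, peeling off the first transition $(p,w)\by{a}_{\m{T}(\m{P})}(p_1,uw'')$, applying the induction hypothesis to the tail, splitting the accepting run of $\m{A}_{pre^*}$ on $uw''$ at a state $q$, and invoking the saturation rule at the fixed point to obtain the transition $(p,\gamma,q)$. Your explicit treatment of the fixed-point/monotonicity subtlety and of the case $u=\epsilon$ is a welcome addition that the paper leaves implicit.
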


\begin{proof}
Assume $(p,w) \,{{\bby{n}{\tau}}}{}_{\m{T}(\m{P})} \, (p',w')$. We proceed by induction on $n$.

\medskip

\noindent
{\bf Basis.} $n=0$. Then, $p=p'$ and $w'=w$. Since $(p',w') \in L_{\m{P}}(\m{A})$ and $L_{\m{P}}(\m{A}) \subseteq L_{\m{P}}(\m{A}_{pre^*})$, we have $(p,w)  \in L_{\m{P}}(\m{A}_{pre^*})$.

\medskip

\noindent
{\bf Step.} $n>0$. Then, there is a configuration $(p'',w'') \in \mathit{Conf}(\m{P})$ such  that:

$$ (p,w) \,{\by{a}}_{\m{T}(\m{P})} \, (p'',w'') \,{{\bby{n-1}{\tau'}}}{}_{\m{T}(\m{P})} \, (p',w')$$
\noindent
 for some $a \in \Sigma_{\epsilon}$ and  $\tau' \in \Sigma^*$ such that $\tau=a \tau'$.
 
 \medskip
 
 \noindent
 We apply the induction hypothesis to $(p'',w'') \, {{\bby{n-1}{\tau}}}{}_{\m{T}(\m{P})} \, (p',w')$, and we obtain: $$(p'',w'') \in L_{\m{P}}(\m{A}_{pre^*})$$
 
 \noindent
 Since $(p,w) \,{\by{a}}_{\m{T}(\m{P})} \, (p'',w'') $, there are $\gamma \in \Gamma$ and  $u,v \in \Gamma^*$  such that:
 
 \begin{center}
 $w=\gamma v$, $w''=u v$, and $u \in  \delta(p,\gamma,a,p'')$
 \end{center}
 
 \noindent
 Let $q$ be a state of $\m{A}_{pre^*}$ such that:

 \begin{center}
 $u \in L(\m{A}_{pre^*}(\{p''\},\{q\}))$ and $v \in L(\m{A}_{pre^*}(\{q\},F_{\m{A}}))$.
 \end{center}
 
Such a state $q$ exists since $uv  \in L(\m{A}_{pre^*}(\{p''\},F_{\m{A}}))$.
By the saturation rule, we have that  $(p,\gamma,q)$ is a transition of $\m{A}_{pre^*}$ since $u \in L(\m{A}_{pre^*}(\{p''\},\{q\})) \cap \delta(p,\gamma,a,p'')$. This implies that $w=\gamma v \in L(\m{A}_{pre^*}(\{p\},F_{\m{A}})) $ since $(p,\gamma,q) \in \Delta_{\m{A}_{pre^*}}$ and $v \in L(\m{A}_{pre^*}(\{q\},F_{\m{A}}))$. 
Hence, we have $ (p,w) \in L_{\m{P}} (\m{A}_{pre^*})$. \end{proof}

In the following, we  establish that $L_{\m{P}}(\m{A}_{pre^*})  \subseteq Pre^*_{\m{T}(\m{P})} (L_{\m{P}}(\m{A}))$. This is  an an immediate corollary of the following lemma:

\begin{lem}
\label{lemm-gpa}
If $w \in L(\m{A}_{pre^*}(\{p\},\{q\}))$, then $(p,w)\, \lby{\tau}{}_{\m{T}(\m{P})}^* \, (p',w')$ for a configuration $(p',w')$ and $\tau \in \Sigma^*$ such that  $w' \in L(\m{A}_0(\{p'\},\{q\}))$. Moreover,   if $q $ is an initial state of $\m{A}_{pre^*}$, then we have $p'=q$ and $w'=\epsilon$.\end{lem}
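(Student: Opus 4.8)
The plan is to prove Lemma~\ref{lemm-gpa} by induction on the number of transitions added to $\m{A}_{pre^*}$ by the saturation procedure \emph{before} the run of $w$ through $\m{A}_{pre^*}(\{p\},\{q\})$ becomes available. More precisely, since $\m{A}_{pre^*}$ is obtained from $\m{A}=\m{A}_0$ by adding transitions one at a time, I would fix an accepting run of $w$ in $\m{A}_{pre^*}(\{p\},\{q\})$ and induct on the maximal \emph{age} of the transitions used in that run, where the age of a transition is the step at which the saturation rule inserted it (transitions of $\m{A}_0$ having age $0$). This is the standard device from the $\mathit{Pre}^*$ saturation argument of \cite{BEM97}, adapted to the fact that our pushed strings $u$ are words (of possibly unbounded length) drawn from an effective language rather than single symbols or bounded strings.

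The induction would be carried out jointly on the length $|w|$ of the word and on the age described above, ordered lexicographically. For the \textbf{base case}, if the run of $w$ uses only transitions of $\m{A}_0$, then already $w \in L(\m{A}_0(\{p\},\{q\}))$, so I take $p'=p$, $w'=w$, and $\tau=\epsilon$; the zero-length run witnesses the claim, and the ``moreover'' clause is vacuous here since by the standing assumption $\m{A}$ has no transition leading to an initial state, forcing $q$ to be non-initial whenever a genuine transition is traversed. For the \textbf{inductive step}, I would locate the \emph{first} transition $(p,\gamma,q_1)$ along the run that is a saturation-added transition of maximal age $k>0$; by the saturation rule, this transition was added precisely because there exist $p''\in P$, $a\in\Sigma_\epsilon$, and a word $u$ with $u\in\delta(p,\gamma,a,p'')\cap L(\m{A}_{pre^*}(\{p''\},\{q_1\}))$, where the run of $u$ uses only transitions of age strictly less than $k$. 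Writing $w=\gamma v$ with $v$ read from $q_1$ to $q$, I get a one-step move $(p,\gamma v)\by{a}_{\m{T}(\m{P})}(p'',uv)$ of the GPA.

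The heart of the argument is then to assemble the two halves. I would first apply the induction hypothesis to $u\in L(\m{A}_{pre^*}(\{p''\},\{q_1\}))$ (legitimate because its run uses only strictly-younger transitions) to obtain a run $(p'',u)\,\lby{\tau_1}{}^*_{\m{T}(\m{P})}\,(\hat p,\hat w)$ with $\hat w\in L(\m{A}_0(\{\hat p\},\{q_1\}))$; the ``moreover'' clause is what guarantees that this sub-computation empties the stack down to the $q_1$-level cleanly, yielding $\hat p=q_1$ and $\hat w=\epsilon$ whenever $q_1$ is initial, which is exactly the hook needed to stitch the stack contents together. Appending the suffix $v$ and using that GPA transitions act on the top of the stack while leaving $v$ untouched, I can lift this to $(p'',uv)\,\lby{\tau_1}{}^*\,(q_1,v)$, and then continue reading $v$ from state $q_1$ through the (older, hence IH-covered, or $\m{A}_0$) transitions to reach the desired $(p',w')$ with $w'\in L(\m{A}_0(\{p'\},\{q\}))$. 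I expect the \textbf{main obstacle} to be the bookkeeping that makes this splicing rigorous: one must ensure the decrease in the lexicographic measure at each appeal to the hypothesis (the $u$-run is strictly younger, the $v$-tail is strictly shorter), and one must handle the interaction with the ``moreover'' clause so that the stack-emptying guarantee propagates correctly—this is where the no-transition-into-initial-states assumption and an honest case split on whether $q_1$ is initial do the real work.
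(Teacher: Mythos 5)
Your front-split scheme is genuinely different from the paper's proof, but as written it contains a step that fails. The paper fixes the $n$-th added transition $t=(p'',\gamma,q')$, splits $w=u\gamma v$ at the \emph{first occurrence of $t$} — which may lie deep inside the run — and runs a double induction (outer on $n$, inner on the number $m$ of occurrences of $t$ in the run), applying the outer hypothesis with the ``moreover'' clause to the prefix $u$ and the inner hypothesis to the recombined word $w_2v$, where $w_2$ is the saturation witness. You instead want the split transition at the very front: you call it ``the first transition along the run that is a saturation-added transition of maximal age $k$'' and simultaneously write it as $(p,\gamma,q_1)$, reading the first letter of $w=\gamma v$ from the starting state $p$. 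These two descriptions are incompatible. Saturated transitions can enter initial states — for instance $(p,\gamma,p_1)$ is added as soon as $\epsilon\in\delta(p,\gamma,a,p_1)$ — so a run may begin with several saturated transitions whose ages come in any order, say a transition of age $2$ from $p$ into $p_1\in P$ followed by one of age $5$ out of $p_1$; the first maximal-age transition is then not the run's first transition, and the decomposition $w=\gamma v$ you build on simply does not exist for it.

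What is true, and what your argument actually needs, is a structural fact you never state: every saturated transition leaves a state of $P$, and no $\m{A}_0$-transition enters one, so once a run takes an $\m{A}_0$-transition it can never fire a saturated transition again; hence if the run uses any saturated transition at all, its \emph{first} transition is saturated (of some age $k'$, not necessarily maximal), and that is where you must split. With this fix your plan does go through, but only with the lexicographic order (age, length), age dominant: for the witness $u$ only the age drops (its length can vastly exceed $|w|$), while for the tail $v$ only the length drops — when $q_1\in P$ the tail can reuse transitions of the run's maximal age, so your claim that $v$ is read ``through the (older, hence IH-covered$\ldots$) transitions'' is false. Your sentence ``jointly on the length $|w|$ of the word and on the age$\ldots$, ordered lexicographically'' reads as length-first, under which the appeal to the hypothesis on $u$ is illegal. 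The case $q_1\notin P$, which you defer to an ``honest case split'', is where the structural fact is needed a second time: there the ``moreover'' clause gives nothing, and you must argue the entire tail run lies in $\m{A}_0$ because no saturated transition can ever fire from a non-initial state. Finally, if you insist on splitting at the maximal-age transition in its true (mid-run) position, your measure breaks outright: for the recombined witness-plus-tail word neither age nor length decreases, which is exactly why the paper's inner induction counts occurrences of the newest transition rather than letters.
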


\begin{proof}
Let  $\m{A}_n=(Q_{\m{A}},\Gamma,\Delta_i,I_{\m{A}},F_{\m{A}})$ be  the $\m{P}$-automaton obtained   after adding $n$ transitions to $\m{A}$. In particular, we have $\m{A}_0=\m{A}$. Then,  it is easy to see   that $L_{\m{P}}(\m{A})=L_{\m{P}}(\m{A}_0) \subseteq L_{\m{P}}(\m{A}_1) \subseteq L_{\m{P}}(\m{A}_2)  \subseteq \cdots \subseteq L_{\m{P}}(\m{A}_{pre^*}) $.

Let $n$ be an index such that $w \in L(\m{A}_{n}(p,q))$ holds. We shall prove the first part of Lemma   \ref{lemm-gpa} by induction on $n$. The second part follows immediately from the fact that initial states have no incoming transitions in $\m{A}_0$.

\medskip

\noindent
{\bf Basis.} $n=0$. Since $w \in L(\m{A}_{n}(\{p\},\{q\}))$  holds, take $w'=w$ and $p'=p$.

\medskip

\noindent
{\bf Step.} $n>0$. Let $t=(p'',\gamma,q')$  be the $n$-th transition added to $\m{A}_{pre^*}$. Let $m$ be the number of times that $t$ is used in $p \,\lby{w}{}_{\m{T}(\m{A}_n)}^* \, q$. 

The proof is by induction on $m$. If $m=0$, then we have $w \in L(\m{A}_{n-1}(\{p\},\{q\}))$, and the property (1) follows from the induction hypothesis (induction on $n$). So,  assume that $m>0$. Then there exist $u$ and $v$ such that $w=u \gamma v$ and 

\begin{center}
$u \in L(\m{A}_{n-1}(\{p\},\{p''\}))$, $p'' \by{\gamma}_{\m{T}(\m{A}_n)} q'$, and $v \in L(\m{A}_{n}(\{q'\},\{q\}))$.
\end{center}

\noindent
The application of the induction hypothesis to $u \in L(\m{A}_{n-1}(\{p\},\{p''\}))$ yields to that:

\begin{center}
  $(p,u) \, \lby{\tau'}{}_{\m{T}(\m{P})}^* \, (p'',\epsilon)$ for some $\tau' \in \Sigma^*$.
\end{center}  

\medskip

\noindent
Since the transition $(p'',\gamma,q')$ has been added by applying the saturation procedure, there exist $p'''$,  $w_2$, and $a \in \Sigma_{\epsilon}$ such that:

\begin{center}
 $w_2 \in \delta(p'',\gamma,a,p''')$, and $w_2 \in L({\m{A}_{n-1}}(\{p'''\},\{q'\}))$.
\end{center}

\medskip

\noindent
From  $w_2 \in L({\m{A}_{n-1}}(\{p'''\},\{q'\}))$ and $v \in L(\m{A}_{n}(\{q'\},\{q\}))$, we get that there is a computation  $\rho=p''' \,\lby{w_2v}{}_{\m{T}(\m{A}_n)}^* \, q$ such that  the number of times transition $t$ is used is strictly less than $m$. So, we can apply the induction hypothesis (induction on $m$) to $\rho$, and we obtain:

\medskip

\noindent
$(p''',w_2v) \, \lby{\tau''}{}^*_{\m{T}(\m{P})} \, (p',w')$ for a configuration $(p',w')$ and $\tau'' \in \Sigma^*$ s.t. $w' \in L(\m{A}_{0}(\{p'\},\{q\}))$.

\medskip

\noindent
Putting all previous equations together, we get  $w' \in L(\m{A}_{0}(\{p'\},\{q\}))$, and:

\begin{center}
$ (p,w)=(p,u \gamma v)\,   \lby{\tau'}{}_{\m{T}(\m{P})}^* \, (p'',\gamma v)\,  \by{a}_{\m{T}(\m{P})} \, (p''',w_2 v) \,  \lby{\tau''}{}_{\m{T}(\m{P})}^* \,(p',w')$
\end{center}

\noindent
This terminates the proof of Lemma \ref{lemm-gpa}.\end{proof}

%%
%%The  proof of  Lemma \ref{lemm_pre_epda}    is given in Appendix \ref{App.pred}.
%%
%%
%%As an immediate consequence of Theorem \ref{pred-EPDA}, we obtain the decidability of the  emptiness problem and the membership   for effective generalized pushdown automata.
%%
%% 
%% \begin{thm}[{\sc Emptiness, Membership}]
%% \label{emptiness-EPDA}
%% The emptiness and the membership problems are  decidable for effective generalized pushdown automata.
%% \end{thm}
%%
%%%%%%%%%%%%%%%%%%%%%%%%%%%%%%%%%%%%%%%%%%%%
%%%%%%%%%%%%%%%%%%%%%%%%%%%%%%%%%%%%%%%%%%%

 \subsection{Emptiness problem and closure properties for GPA}
 \label{chap2.empt}
 In this section, we show that the emptiness problem  is decidable  for effective generalized pushdown automata. This is an immediate consequence  of the fact that the set of predecessors of a recognizable set of configurations of  an effective generalized pushdown automaton is also recognizable and effectively constructible.
 
 \begin{thm}
 \label{emptiness-EPDA}
 The emptiness problem is decidable for effective generalized pushdown automata.
 \end{thm}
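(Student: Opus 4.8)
The plan is to reduce the emptiness problem to a single membership query on the $\m{P}$-automaton produced by Theorem~\ref{pred-EPDA}. Recall that $L(\m{P})=\mathit{Traces}_{\m{T}(\m{P})}(\{(p_0,\gamma_0)\},F\times\{\epsilon\})$, so $L(\m{P})\neq\emptyset$ if and only if there is a run of $\m{T}(\m{P})$ from the initial configuration $(p_0,\gamma_0)$ to some accepting configuration $(p,\epsilon)$ with $p\in F$. Equivalently, $L(\m{P})\neq\emptyset$ if and only if $(p_0,\gamma_0)\in Pre^*_{\m{T}(\m{P})}(F\times\{\epsilon\})$. Thus it suffices to compute a finite symbolic representation of $Pre^*_{\m{T}(\m{P})}(F\times\{\epsilon\})$ and then test membership of $(p_0,\gamma_0)$ in it.

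First I would observe that the target set $F\times\{\epsilon\}$ is recognizable. Indeed, consider the $\m{P}$-automaton $\m{A}_F=(P,\Gamma,\emptyset,P,F)$ with state set exactly $P$, empty transition relation, initial states $I_{\m{A}}=P$ (as required for a $\m{P}$-automaton), and final states $F_{\m{A}}=F$. Since $\m{A}_F$ has no transitions, $L(\m{A}_F(\{p\},F))=\{\epsilon\}$ when $p\in F$ and $L(\m{A}_F(\{p\},F))=\emptyset$ otherwise; hence $L_{\m{P}}(\m{A}_F)=F\times\{\epsilon\}$. Note that $\m{A}_F$ trivially satisfies the standing assumption that no transition leads to an initial state.

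Next I would apply Theorem~\ref{pred-EPDA} to $\m{P}$ and $\m{A}_F$: since $\m{P}$ is an effective GPA, the theorem yields an effectively constructible $\m{P}$-automaton $\m{A}_{pre^*}$ with $L_{\m{P}}(\m{A}_{pre^*})=Pre^*_{\m{T}(\m{P})}(L_{\m{P}}(\m{A}_F))=Pre^*_{\m{T}(\m{P})}(F\times\{\epsilon\})$. Finally, to decide emptiness I would test whether $(p_0,\gamma_0)\in L_{\m{P}}(\m{A}_{pre^*})$, that is, whether $\gamma_0\in L(\m{A}_{pre^*}(\{p_0\},F_{\m{A}}))$. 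This is a single membership query for a word of length one in the regular language recognized by a finite state automaton, and is therefore decidable. Consequently $L(\m{P})\neq\emptyset$ if and only if $\gamma_0\in L(\m{A}_{pre^*}(\{p_0\},F_{\m{A}}))$, which settles the claim.

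The entire computational burden, and hence the only genuine obstacle, is hidden in the construction of $\m{A}_{pre^*}$, which is already discharged by Theorem~\ref{pred-EPDA}. The delicate point there is that the saturation procedure be both well defined and terminating: well-definedness relies precisely on the effectiveness property of $\m{P}$ (Definition~\ref{def-effec}), which guarantees that each test $\delta(p,\gamma,a,p')\cap L(\m{A}_{pre^*}(\{p'\},\{q\}))\neq\emptyset$ is decidable, while termination follows from the fact that only finitely many transitions can ever be added. Given Theorem~\ref{pred-EPDA}, the reduction above is routine.
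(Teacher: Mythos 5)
Your proof is correct and follows essentially the same route as the paper's: reduce emptiness to the test $(p_0,\gamma_0)\in Pre^*_{\m{T}(\m{P})}(F\times\{\epsilon\})$, apply Theorem~\ref{pred-EPDA} to obtain $\m{A}_{pre^*}$, and decide membership of $(p_0,\gamma_0)$ in the recognized set. The only addition is your explicit transition-free $\m{P}$-automaton witnessing that $F\times\{\epsilon\}$ is recognizable, a detail the paper merely asserts.
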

 
 \begin{proof}
 Let $\m{P}=(P,\Sigma,\Gamma,\delta,p_0,\gamma_0,F)$ be an effective generalized pushdown automaton. It is easy to see that $L(\m{P}) \neq \emptyset$ if and only if $(p_0,\gamma_0) \in Pre_{\m{T}(\m{P})}^*(F \times \{\epsilon\})$. By Theorem \ref{pred-EPDA}, we can construct a $\m{P}$-automaton ${\m{A}_{pre^*}}$ that recognizes exactly the set $Pre_{\m{T}(\m{P})}^*(F \times \{\epsilon\})$ since $F \times \{\epsilon\}$ is a recognizable set of configurations. Hence, the emptiness problem for $\m{P}$ is  decidable since checking whether  $(p_0,\gamma_0)$ is in  $L_{\m{P}}(\m{A}_{pre^*})$ is decidable. \end{proof}

Next, we show some  closure properties for effective generalized pushdown automata. 
\label{closure-property}

 \begin{thm}
The class of effective GPAs  is  closed under  concatenation, union, Kleene star,  projection, homomorphism, and intersection with a regular language. However,   effective GPAs are not closed under  intersection.
 \end{thm}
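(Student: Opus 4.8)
The plan is to establish each of the six positive closure properties by a direct construction on effective GPA, carefully verifying that effectiveness (Definition~\ref{def-effec}) is preserved, and then to exhibit a counterexample refuting closure under intersection. The essential observation throughout is that the pushed languages of a GPA only appear in the transition function's \emph{values}, so most constructions manipulate states, the alphabet, or the input letters while leaving the pushed languages either untouched or combined in ways that retain effectiveness.

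\medskip

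\noindent\textbf{The positive closures.}
For \emph{union}, given effective GPAs $\m{P}_1$ and $\m{P}_2$, I would take a disjoint union of their state sets together with a fresh initial state $p_0$ and a fresh initial stack symbol $\gamma_0$; two $\epsilon$-transitions from $p_0$ that push the respective old initial-stack contents route a computation nondeterministically into either machine. Since each resulting pushed language is either a singleton or one of the original pushed languages, effectiveness is inherited. \emph{Concatenation} and \emph{Kleene star} are handled in the same spirit: one links final states of the first machine back to the (re)initialization of the second (respectively, of the same) machine via $\epsilon$-transitions, taking care to first empty or flag the stack appropriately so that control is only transferred at an accepting configuration. For \emph{projection} and \emph{homomorphism}, the stack behavior is completely unaffected—only the input letters labelling transitions change—so one simply relabels the $\Sigma$-component of each transition according to the homomorphism (reading $\delta(p,\gamma,a,p')$ and reassigning it to $\delta(p,\gamma,h(a),p')$, with $\epsilon$ handled as the empty word), and the pushed languages are literally unchanged, hence still effective. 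For \emph{intersection with a regular language} $R$ recognized by a finite automaton $\m{B}=(Q_{\m{B}},\Sigma,\Delta_{\m{B}},I_{\m{B}},F_{\m{B}})$, I would form the usual product: states are pairs $(p,q)$, and a transition pushes $u\in\delta(p,\gamma,a,p')$ precisely when $\m{B}$ can read $a$ from $q$ to $q'$. The key point is that each new pushed language is again one of the old pushed languages $\delta(p,\gamma,a,p')$ (indexed now by the pair of $\m{B}$-states), so checking its nonempty intersection with an arbitrary regular language reduces immediately to the effectiveness of the original machine.

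\medskip

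\noindent\textbf{Non-closure under intersection.}
To refute closure under intersection, I would reuse the expressive power already witnessed in Table~\ref{exp2}. The example there recognizes a language of the form $a^{i_1}b^{i_1}c^{i_1}\cdots$ using a Petri-net-definable pushed language, which already shows that effective GPA can enforce one matching constraint (equality of two exponent blocks). The standard strategy is to produce two effective GPA languages $L_1$ and $L_2$ whose intersection is a language known not to be the language of any effective GPA—for instance the non-context-free, non-semilinear-style language $\{a^n b^n c^n \mid n\geq 0\}$ realized as the intersection of $\{a^n b^n c^m\}$ and $\{a^m b^n c^n\}$, each of which is context-free and hence (by the remark that pushdown automata are a special case of effective GPA) an effective GPA language. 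To make the refutation airtight one must argue that the intersection is \emph{not} an effective GPA language; here I would invoke the decidability of emptiness (Theorem~\ref{emptiness-EPDA}) together with a pumping- or counting-style argument showing that no single GPA stack discipline—pop-the-top plus push-an-effective-word—can simultaneously track two independent counters against a single last-in-first-out stack.

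\medskip

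\noindent The step I expect to be the main obstacle is the non-closure argument, specifically proving that the chosen intersection language genuinely escapes the class of effective GPA languages. The positive constructions are routine bookkeeping on states and labels with an easy effectiveness check, but a clean impossibility proof requires pinning down a structural limitation of the single-stack, pop-top model; the cleanest route is to exhibit two context-free (hence effective-GPA) languages whose intersection is a classically non-context-free language and then to show that effective GPA over a single stack cannot exceed context-free power on \emph{bounded} inputs of the relevant shape, so that the intersection cannot be captured.
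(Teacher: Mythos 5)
Your positive closure constructions are essentially what the paper intends (it simply notes these closures are ``similar to the case of pushdown automata''), so that part is fine. The genuine gap is in your non-closure argument. Your proposed witness $\{a^n b^n c^n \mid n \geq 0\}$ does \emph{not} escape the class of effective GPA languages: the paper's own example in Table~\ref{exp2} is an effective GPA that recognizes $\{\epsilon\} \cup \{a^{i_1}b^{i_1}c^{i_1} \cdots a^{i_k}b^{i_k}c^{i_k} \mid k \geq 1,\ i_1,\ldots,i_k>0\}$ by pushing, in a single transition, a word from the Petri-net language $\{\gamma_2^i\gamma_1^i\gamma_0^i\bot \mid i \in \mathbb{N}\}$ and then popping it while reading the input; restricting that machine to a single block yields an effective GPA for exactly $\{a^n b^n c^n\}$. (Note you misread the example: it enforces equality of \emph{three} exponent blocks, not two.) For the same reason, your auxiliary claim that an effective GPA ``cannot exceed context-free power on bounded inputs'' is false --- the entire point of allowing pushed languages such as Petri-net languages is that effective GPAs exceed context-free power precisely on bounded languages like $a^*b^*c^*$. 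So the pair $\{a^n b^n c^m\}$, $\{a^m b^n c^n\}$ witnesses nothing: both languages \emph{and} their intersection are effective GPA languages, and your plan cannot be repaired by a pumping argument.

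The paper sidesteps any language-theoretic separation. It argues by contradiction via decidability: if effective GPAs were closed under intersection then, since pushdown automata are a special case of effective GPAs, for any two pushdown automata $\m{P}_1,\m{P}_2$ there would be an effective GPA $\m{P}$ with $L(\m{P})=L(\m{P}_1)\cap L(\m{P}_2)$; applying Theorem~\ref{emptiness-EPDA} (decidability of emptiness for effective GPAs) would then make the emptiness problem for the intersection of two context-free languages decidable, contradicting its well-known undecidability. This reduction is the idea missing from your proposal: non-closure is established not by exhibiting a concrete intersection lying outside the class --- which would be difficult, given how rich the class of effective pushed languages is --- but by showing that closure would contradict an undecidability result, using the decidability of emptiness that the paper has already proved.
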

 
  \begin{proof}
 Showing the  closure of the class of effective generalized pushdown automata  under  concatenation, union, Kleene star,  projection, homomorphism, and intersection with a regular language is similar to the case of pushdown automata.  The only issue to   prove  is the closure under intersection. For that, let us  assume by contradiction that the class of effective 
 GPAs is closed under the intersection operation. Let $\m{P}_1$ and $\m{P}_2$ two pushdown automata. Since the class of  pushdown automata is  a particular class of effective generalized pushdown automata and  the class of effective generalized pushdown automata  is closed under the intersection operation (from  the contradiction's  hypothesis),  there is an effective  generalized pushdown automaton  $\m{P}$  such that $L(\m{P})=L(\m{P}_1) \cap L(\m{P}_2)$. Applying  Theorem \ref{emptiness-EPDA} to  $\m{P}$, we obtain the decidability  of the emptiness problem  of the intersection of two context-free languages, which is a contradiction.
   \end{proof}

%%%%%%%%%%%%%%%%%%%%%%%%%%%%%%%%%%%%%%%%%%%%%%%%%%
%%%%%%%%%%%%%%%%%%%%%%%%%%%%%%%%%%%%%%%%%%%%%%%%%%
%%%%% Models
%%%%%%%%%%%%%%%%%%%%%%%%%%%%%%%%%%%%%%%%%%%%%%%%%%
%%%%%%%%%%%%%%%%%%%%%%%%%%%%%%%%%%%%%%%%%%%%%%%%%%
%%%%%%%%%%%%%%%%%%%%%%%%%%%%%%%%%%%%%%%%%%%
%%%%%%%%%%%%%%%%%%%%%%%%%%%%%%%%%%%%%%%%%%%

\section{Ordered multi-pushdown automata}
\label{chap2.model}

In this section, we first recall the  definition of \emph{multi-pushdown automata}. Then   \emph{ordered multi-pushdown automata}   appear as a special case of
{multi-pushdown automata}.

\subsection{Multi-pushdown automata}
Multi-pushdown automata have one read-only left to right input
tape and $n \geq 1$ read-write memory tapes (stacks) with a last-in-first-out
rewriting policy. A transition is of the form $t = \co q,\gamma_1,\ldots,\gamma_n\cf
\by{a} \co q',\alpha_1,\ldots,\alpha_n \cf$. Being in a configuration
$(p,w_1,\ldots,w_n)$, which is composed of a state
$p$ and a stack content $w_i$ for each  stack
$i$, $t$ can be applied if both $q = p$ and the $i$-th stack is of the form
$\gamma_i w_i'$ for some $w_i'$. Taking the transition and reading the input symbol 
$a$ (which might be the empty word), the automaton moves to the successor
configuration $({q',\alpha_1 w_1',\ldots,\alpha_n w_n'})$.

\begin{defi}[Multi-pushdown automata]
  A {\em multi-pushdown automaton} (MPA) is a tuple  $\m{M}=(n,Q,\Sigma,\Gamma,\Delta,q_0,\gamma_0, F)$ where: 
  
  \begin{iteMize}{$\bullet$}
  \item    $n \ge 1$ is the number of stacks.
  \item    $Q$ is the finite non-empty set of \emph{states}.
  \item 
$\Sigma$ is the finite set of \emph{input symbols}. 
\item 
$\Gamma$ is the finite set of \emph{stack symbols} containing the special stack symbol $\bot$. 

\item 
 $\Delta \subseteq \big(Q \times (\Gamma_{\epsilon})^n\big) \times \Sigma_{\epsilon} \times \big(Q
  \times (\Gamma^\ast )^n\big)$ is the
  \emph{transition  relation} such that, for all
  $((q,\gamma_1,\ldots,\gamma_n),a,(q',\alpha_1,\ldots,\alpha_n)) \in \Delta
 $ and $i \in [1,n]$,  we have: 
 
 \begin{iteMize}{$-$}
  \item $|\alpha_i| \leq 2$.
  \item If $\gamma_i \neq  \bot$, then  $\alpha_i  \in (\Gamma \setminus \{\bot\})^*$.
  \item   If $\gamma_i = \bot$, then  $\alpha_i =
  \alpha'_i \bot$ for some $\alpha'_i \in (\Gamma_{\epsilon} \setminus \{\bot\})$. 
  \end{iteMize}

\item      $q_0 \in Q$ is the \emph{initial state}.
\item 
 $\gamma_0 \in (\Gamma \setminus \{\bot\})$ is the \emph{initial stack symbol}.
 \item 
$F \subseteq Q$ is the set of \emph{final states}. 

 \end{iteMize} 
  \end{defi}

  The size of $\m{M}$, denoted by $|\m{M}|$, is defined  by $(n+|Q|+|\Sigma|+|\Gamma|+|\Delta|)$.   
In the rest of this paper,  we   use  $\co q,\gamma_1,\ldots,\gamma_n \cf \by{a}_{\m{M}}
 \co q',\alpha_1,\ldots,\alpha_n \cf$ to denote that the transition     $((q,\gamma_1,\ldots,\gamma_n),a,(q',\alpha_1,\ldots,\alpha_n))$  is in $\Delta$. Moreover, we denote by $\m{M}(q,\gamma,q')$ the multi-pushdown automaton  defined by the tuple $(n,Q,\Sigma,\Gamma,\Delta,q,\gamma,\{q'\})$.

 A {stack content} of $\m{M}$ is an element  of $\mathit{Stack}(\m{M}) = (\Gamma \setminus \{\bot\})^*
\{\bot\}$. A { configuration} of
$\m{M}$ is a $(n+1)$-tuple $({q,w_1,\ldots,w_n})$ with $q \in
Q$, and $w_1,\ldots,w_n \in \mathit{Stack}(\m{M})$. A configuration $(q,w_1,\ldots,w_n)$ is final  if  $q \in F$ and $w_1=\cdots=w_n=\bot$.  The
set of configurations of $\m{M}$ is denoted by $\mathit{Conf}{(\m{M})}$.

The behavior of  $\m{M}$  is
described by its corresponding transition system $\m{T}(\m{M})$  defined by  the tuple $(\mathit{Conf}(\m{M}),\Sigma,\ra)$ where $\ra$ is the smallest transition relation satisfying the following condition:  if $\co q,\gamma_1,\ldots,\gamma_n \cf \by{a}_{\m{M}} \co q',\alpha_1,\ldots,\alpha_n \cf$, then $(q,\gamma_1 w_1,\ldots,\gamma_n w_n) \by{a}_{\m{T}(\m{M})} (q',\alpha_1 w_1,\ldots,\alpha_n w_n)$ for all $w_1,\ldots,w_n \in \Gamma^*$ such that $\gamma_1 w_1,\ldots,\gamma_n w_n \in \mathit{Stack}(\m{M})$.  Observe that the symbol $\bot$ marks the bottom of a stack. According to the
transition relation, $\bot$ can never be popped. 

The language accepted (or recognized) by $\m{M}$ is defined by the set 
$L(\m{M})  = \{\tau \in \Sigma^* \mid  (q_0,\gamma_0\bot,\bot,\ldots,\bot)\,
  \lby{\tau}{}_{\m{T}(\m{M})}^*\, c\; \text{ for some final configuration } c\}$.

\subsection{Symbolic representation of MPA configurations}

We show in this section how we can symbolically represent infinite sets of multi-pushdown automaton configurations using special kind of finite automata which were introduced in \cite{anil10}.  Let  $\m{M}=(n,Q,\Sigma,\Gamma,\Delta,q_0,\gamma_0, F)$ be a multi-pushdown automaton. An $\m{M}$-automaton for accepting configurations of $\m{M}$ is a finite state automaton $\m{A}=(Q_{\m{M}},\Gamma,\Delta_{\m{M}},I_{\m{M}},F_{\m{M}})$ such that $I_{\m{M}}=Q$. We say that a configuration $(q,w_1,\ldots,w_n)$ of $\m{M}$ is accepted (or recognized) by $\m{A}$ if and only if the word  $w=w_1 w_2  \cdots  w_n $ is in $L(\m{A}(\{q\},F_{\m{M}}))$. (Notice that for every word $w \in L(\m{A}(\{q\},F_{\m{M}}))$ there are  unique words $w_1,\ldots,w_n \in \mathit{Stack}(\m{M})$ such that $w=w_1 \cdots w_n$.) 
The set of all configurations recognized by $\m{A}$ is denoted by $L_{\m{M}}(\m{A})$. A set of configurations of $\m{M}$ is said to be recognizable if and only if it is accepted by some $\m{M}$-automaton. Finally, it is easy to see that  the class of $\m{M}$-automata is closed under  all the boolean operations and that  emptiness and membership problems are decidable in polynomial time.

\subsection{Ordered multi-pushdown automata}
An ordered multi-pushdown automaton is a multi-pushdown automaton in which one can pop only from the first non-empty stack (i.e., all preceding stacks are  equal to $\bot$).

\begin{defi}[Ordered multi-pushdown automata]
An {ordered multi-pushdown automaton} (OMPA for short) is a multi-pushdown automaton 
  $(n,Q,\Sigma,\Gamma,\Delta,q_0,\gamma_0,F)$ where, for each transition  $\co q,\gamma_1,\ldots, \gamma_n \cf \by{a}_{\m{M}}\, \co q',\alpha_1,\ldots,\alpha_n\cf$, there is an $i \in [1,n]$ such that $\gamma_1=\cdots=\gamma_{i-1}=\bot$, $\gamma_i \in \Gamma_{\epsilon}$, and $\gamma_{i+1}= \cdots = \gamma_n= \epsilon$.

\end{defi}

\medskip

We introduce the following  abbreviations:  $(1)$ For $n \ge 1$, we call an 
MPA/OMPA an $n$-MPA/$n$-OMPA,
respectively, if its number of stacks is $n$, and $(2)$  An MPA \emph{over} $\Sigma$
is an MPA with input alphabet $\Sigma$.

\medskip

  In the following,  we consider only ordered multi-pushdown automata in some normal form. This normal form is used   only  to  simplify the presentation. (Observe that this form is slightly more general than the one considered   in \cite{ABH-dlt08}.) In such normal form, any transition, that pops a symbol from the $i^{th}$ stack with $i \in [2,n]$, is only allowed to push a symbol on the first stack. Furthermore, pushing symbols on the stacks from $1$ to $n$ is only allowed  while popping a symbol from the first stack.

\begin{defi}
An OMPA 
  $(n,Q,\Sigma,\Gamma,\Delta,q_0,\gamma_0,F)$ is in normal form if   $\Delta$ contains only the following types of transitions:

\medskip
  
\begin{iteMize}{$\bullet$}

\item $\co q,\gamma,\epsilon,\ldots,\epsilon \cf \by{a}_{\m{M}}
\, \co q',\alpha_1,\ldots,\alpha_n\cf$ for some  $q,q' \in Q$, $\gamma \in \Gamma $, $a \in \Sigma_{\epsilon}$,  $\alpha_1 \in \Gamma^*$ and $\alpha_j \in (\Gamma_\epsilon \setminus \{\bot\})$  for all $j \in [2,n]$. This transition pops a symbol from the first stack  while pushing at most two symbols on the first stack and  at most one  symbol on the stacks from $2$ to $n$.

%
%\item $\co q,\gamma,\epsilon,\ldots,\epsilon \cf \by{a}_{\m{M}} \co q',\gamma'' \gamma',\epsilon,\ldots,\epsilon \cf$
%for some $q,q' \in Q$, $\gamma,\gamma',\gamma'' \in (\Gamma \setminus \{\bot\})$ and $a \in \Sigma_{\epsilon}$. This transition only modifies the content of the first stack. 
%
%\smallskip
%
%\item $\co q,\gamma,\epsilon,\ldots,\epsilon \cf \by{a}_{\m{M}} \co q',\epsilon,\ldots,\epsilon,\gamma',
%\epsilon,\ldots,\epsilon \cf$ for some $q,q' \in Q$, $\gamma,\gamma' \in (\Gamma \setminus \{\bot\})$ and $a \in \Sigma_{\epsilon}$. This transition pops the stack symbol $\gamma$ from the first stack and  pushes $\gamma'$  on one of the stacks from  $2$ to $n$.
%
%
\medskip

\item $\co q,\bot,\ldots,\bot,\gamma,\epsilon,\ldots,\epsilon \cf \by{a}_{\m{M}}
\co q',\gamma'\bot,\bot,\ldots,\bot,\epsilon,\epsilon,\ldots,\epsilon\cf$ for some $q,q' \in Q$, $\gamma,\gamma' \in (\Gamma \setminus \{\bot\})$ and $a \in \Sigma_{\epsilon}$. This transition pops the stack symbol $\gamma$  from one of the stacks from $2$ to $n$ and pushes the stack symbol $\gamma'$ on the first stack.
%
%
%\smallskip
%
%
%\item $\co q,\gamma,\epsilon,\ldots,\epsilon \cf \by{a} \co q',\epsilon,\ldots,\epsilon \cf$
%for some $q,q' \in Q$, $\gamma \in (\Gamma \setminus \{\bot\})$ and $a \in \Sigma_{\epsilon}$. This transition can only pop a symbol from the first stack.
\end{iteMize}
\end{defi}

 \medskip
 
\noindent We can show the equivalence (with respect to language acceptance) between the class of OMPA and OMPA in the normal form.

 \begin{lem}
 An $n$-OMPA $\m{M}$ can be transformed into an $n$-OMPA $\m{M}'$ in normal form with linear blowup in its size such that $L(\m{M})=L(\m{M}')$.
 \end{lem}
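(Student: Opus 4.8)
The plan is to construct $\m{M}'$ by replacing each transition $t$ of $\m{M}$ with a short \emph{macro-sequence} of normal-form transitions that simulates $t$ one micro-step at a time. For every $t \in \Delta$ I would introduce $O(n)$ fresh control states and fresh stack symbols, all private to $t$, and arrange that every micro-step moves into one of these private states. Since a multi-pushdown automaton fires a single transition at a time and enables a transition only according to its source state, the private-state discipline forces the simulation of $t$ to run to completion before any other simulated transition can begin; the macro-sequence therefore behaves atomically and introduces no spurious interleavings. The first micro-step reads the input letter $a$ of $t$ and all later ones read $\epsilon$, so the macro-sequence reads exactly $a$; the fresh states are made non-final and the fresh symbols never occur in a final configuration, so the construction preserves acceptance, the initial configuration, and hence the language.

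The heart of the construction is to \emph{route every pop through the first stack}, which is exactly what the normal form insists on. If $t$ pops a symbol $\gamma_i \neq \bot$ from a stack $i \ge 2$, the normal form lets this pop deposit only a single symbol on stack $1$; so I let a transition of the second kind perform the pop of $\gamma_i$ and push onto stack $1$ a single auxiliary symbol $X_t$ that encodes the pushes $t$ still has to carry out. Stack $1$ is now non-empty, so a chain of transitions of the first kind can pop $X_t$ and its successor markers $X_t^{(1)}, X_t^{(2)}, \dots$ and realize the pending pushes one stack at a time: each step pushes at most one symbol onto a higher stack and threads the next marker back onto stack $1$, while the final step pops the last marker and restores the intended content $\alpha_1$ of stack $1$. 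A transition that pops from stack $1$ is handled identically, except that the initial pop of $\gamma_1$ is itself a first-kind transition. Since a first-kind transition pushes at most one symbol onto each higher stack, a length-two push $\alpha_j$ onto a stack $j \ge 2$ is spread over two steps of the chain (pushing the deeper symbol first), and the marker $\bot$ is preserved automatically because every $X_t^{(k)}$ is distinct from $\bot$ and lies above it. This part costs only $O(n)$ fresh objects per transition.

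The awkward case is a transition of $\m{M}$ that reads $\epsilon$ and merely pushes, since there is then no popped symbol to seed the marker chain. I would simulate such a transition by a \emph{read-and-restore} gadget on stack $1$: for each possible top symbol $X \in \Gamma$ a first-kind transition that pops $X$, runs the push chain, and finally restores $X$ together with the intended content; the instance $X = \bot$ covers the case in which stack $1$ is empty, and it is precisely this instance that lets the simulation witness emptiness of stack $1$. Correctness is then established by a step-by-macro-step correspondence: every step of $\m{T}(\m{M})$ is matched by the macro-sequence simulating the transition used, every intermediate configuration of $\m{M}'$ carries a private auxiliary state (so is neither initial nor final), and conversely every maximal run of $\m{M}'$ between auxiliary-free configurations projects onto a single step of $\m{M}$; the two runs read the same word, whence $L(\m{M}) = L(\m{M}')$, with only a linear blow-up in size.

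The step I expect to be the main obstacle is keeping the ordering discipline of the OMPA intact through the decomposition, in particular the \emph{emptiness guards}. Every push micro-step is a first-kind transition that pops stack $1$, so I must guarantee that at each such moment stack $1$ is genuinely the first non-empty stack; this is the whole reason for depositing the marker $X_t$ on stack $1$ before any pushing begins, and for the private-state mechanism that forbids interleaving. The truly delicate point is that the normal form can test emptiness only of the \emph{first} stack (by reading $\bot$ with a first-kind transition) and never of a higher stack. Consequently an $\epsilon$-reading transition whose guard requires stacks $1,\dots,i-1$ empty for some $i > 2$ cannot have its guard tested directly; making the reduction of these higher-level $\epsilon$-reads to the testable, stack-$1$ case rigorous — by exploiting the invariant that the pop order of an OMPA imposes on the emptiness pattern of its stacks — is where the main work of the proof lies.
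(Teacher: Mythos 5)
Your gadgets for transitions that actually pop something are sound and are the standard way to do this: a pop from stack $i\ge 2$ is realized by a second-kind transition that deposits a marker on stack $1$, the pending pushes are unwound by first-kind transitions consuming successive markers, and private control states enforce atomicity. (One minor quibble: since a single first-kind transition may push one symbol on \emph{each} of the stacks $2,\dots,n$ simultaneously, $O(1)$ fresh states and symbols per transition suffice; your one-stack-at-a-time chain costs $O(n)$ per transition, i.e.\ $O(n\cdot|\Delta|)$ overall, which is not linear in $|\m{M}|$ as the paper measures size.) The genuine gap is the case you yourself flag and postpone: a transition with $\gamma_1=\cdots=\gamma_{i-1}=\bot$ and $\gamma_i=\epsilon$ for some $i\ge 2$ (and similarly one reading $\bot$ on a stack $i \geq 2$), i.e.\ a transition that pops nothing but is guarded by emptiness of the lower stacks. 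Acknowledging that ``this is where the main work lies'' does not discharge it, and the route you propose for it cannot work: there is no ``invariant that the pop order imposes on the emptiness pattern of the stacks,'' because pushes are completely unconstrained by the ordering discipline. Every emptiness pattern is reachable (push on stack $3$, drain stack $1$, push on stack $2$, \dots), so nothing in the control flow determines whether stacks $2,\dots,i-1$ are empty; and the guard cannot simply be dropped, since a guarded and an unguarded push generate different languages.

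The missing idea is that such a guard can be converted into the \emph{enabling condition of a second-kind transition}. To simulate a guarded no-pop transition $t$ with guard index $i\ge 2$: first, using your own read-and-restore gadget on stack $1$, push a fresh marker $\sharp_t$ onto stack $i$ (a pure push needs no guard, and since the guard asserts stack $1$ is empty, the restored top symbol is $\bot$); then pop $\sharp_t$ by a second-kind transition. By the semantics of second-kind transitions, this pop is enabled exactly when stacks $1,\dots,i-1$ all show $\bot$ --- precisely the guard of $t$ --- and it deposits a seed symbol on stack $1$ from which the pending pushes of $t$ are unwound by your usual marker chain. If the guard fails, the gadget is stuck in a private state and the run dies, which is harmless for language equivalence. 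Note that even this trick does not cover the extreme case of a transition reading $\bot$ on stack $n$ with all other stacks required empty (a test that \emph{all} stacks are empty): there is no stack $n+1$ on which to place the marker, and in normal form no transition can ever observe $\bot$ on a stack of index $\ge 2$, so that case needs a separate argument (e.g.\ splitting accepting runs at all-empty configurations). For what it is worth, the paper itself gives no construction at all --- it appeals in one line to the Chomsky-normal-form argument --- and the cases above are exactly where that one-liner, like your proposal, leaves the real work undone.
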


\begin{proof}
An easy generalization of the proof for the Chomsky normal form for
context-free grammars.
\end{proof}

{\em In the rest of the paper, we  assume  that any  OMPA  is  in the normal form.} Next, we recall some properties of the class of languages recognized by
$n$-OMPA.

\begin{lem}[\cite{multi96}]
\label{closurepropOMPA3}
 If   $\m{M}_1$  and $\m{M}_2$ are  two $n$-OMPAs over an  alphabet $\Sigma$, then it is possible to construct an $n$-OMPA $\m{M}$  over $\Sigma$ such that: $(1)$ $L(\m{M})=L(\m{M}_1) \cup L(\m{M}_2)$ and   $|\m{M}|=O(|\m{M}_1|+ |\m{M}_2|)$.
\end{lem}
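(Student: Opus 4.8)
The plan is to give a standard product-free union construction that keeps the number of stacks fixed at $n$. The naive idea of taking a disjoint union of states and adding an $\epsilon$-move from a fresh initial state to the two old initial states is the right starting point, but I must be careful about two constraints imposed by the definitions: the OMPA format requires a \emph{single} designated initial stack symbol $\gamma_0 \in (\Gamma \setminus \{\bot\})$, and the initial configuration is fixed as $(q_0,\gamma_0\bot,\bot,\ldots,\bot)$; moreover acceptance requires reaching a final \emph{state} with all stacks equal to $\bot$. So the construction cannot simply be a disjoint union, since I cannot have two separate initial stack symbols.

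First I would rename states so that $\m{M}_1=(n,Q_1,\Sigma,\Gamma_1,\Delta_1,q_1,\gamma_1,F_1)$ and $\m{M}_2=(n,Q_2,\Sigma,\Gamma_2,\Delta_2,q_2,\gamma_2,F_2)$ have disjoint state sets and disjoint stack alphabets except for the shared bottom marker $\bot$. I then build $\m{M}=(n,Q,\Sigma,\Gamma,\Delta,q_0,\gamma_0,F)$ with $Q=Q_1\cup Q_2\cup\{q_0\}$ for a fresh $q_0$, with $\Gamma=\Gamma_1\cup\Gamma_2\cup\{\gamma_0\}$ for a fresh initial symbol $\gamma_0$, and $F=F_1\cup F_2$. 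The transition relation is $\Delta_1\cup\Delta_2$ together with two new $\epsilon$-transitions that, while popping the fresh symbol $\gamma_0$ from the (otherwise empty) first stack, push the chosen machine's real initial symbol and hand control to its initial state: namely $\co q_0,\gamma_0,\epsilon,\ldots,\epsilon\cf \by{\epsilon}_{\m{M}} \co q_k,\gamma_k,\epsilon,\ldots,\epsilon\cf$ for $k\in\{1,2\}$. Each such transition pops from the first stack and pushes one symbol onto the first stack only, so it satisfies both the ordering condition of an OMPA and (after the normalizing convention) the normal-form shape; I would remark that the resulting machine can be put back in normal form by the preceding lemma if desired.

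The key steps, in order, are: (i) perform the disjointness renaming, which is harmless since OMPA membership is insensitive to the names of states and non-bottom stack symbols; (ii) check that every old transition of $\m{M}_1$ and $\m{M}_2$ still meets the OMPA ordering constraint inside $\m{M}$ — this is immediate because the constraint is a local condition on each transition and the stack alphabets are disjoint, so no run of one summand can ever touch the stacks of the other; (iii) verify the language equality by analyzing any accepting run of $\m{M}$: its first move must consume the fresh bottom-of-first-stack symbol $\gamma_0$ via one of the two new $\epsilon$-transitions, committing the run entirely to $Q_k$ and $\Gamma_k$ thereafter, whence the remaining run is in bijection with an accepting run of $\m{M}_k$; the converse inclusion prepends the matching $\epsilon$-transition to any accepting run of $\m{M}_k$. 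Finally, (iv) bound the size: $|Q|$, $|\Gamma|$, and $|\Delta|$ each grow by an additive constant plus the sizes of the two inputs, giving $|\m{M}|=O(|\m{M}_1|+|\m{M}_2|)$.

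I do not expect a genuine obstacle here, since this is essentially the Chomsky-style union construction; the only point demanding care is the interaction between the single-initial-symbol convention and the all-stacks-empty acceptance condition. Concretely, I must ensure the two summands cannot interfere through the shared symbol $\bot$: the disjoint stack alphabets guarantee that once the run enters $\m{M}_k$ it never pushes a symbol readable by $\m{M}_{3-k}$, and the fresh $\gamma_0$ guarantees the choice of summand is made exactly once at the start. Verifying these invariants, together with checking that the new $\epsilon$-transitions respect the ordering and normal-form restrictions, is the main — though routine — part of the argument.
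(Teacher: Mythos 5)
Your proposal is correct. Note, however, that the paper itself gives no proof of this lemma at all: it is stated as a known result, imported by citation from Breveglieri et al.\ \cite{multi96}, so there is no in-paper argument to compare yours against. Your construction is the standard one and it is sound: a fresh initial state $q_0$ and fresh initial stack symbol $\gamma_0$, disjoint copies of the two machines, and two $\epsilon$-transitions that replace $\gamma_0$ by the chosen summand's initial symbol on the first stack. You correctly identify and discharge the only non-trivial points: the single-initial-stack-symbol convention forces the fresh-symbol trick rather than a plain disjoint union; the two new transitions pop from the first stack and push one non-$\bot$ symbol onto it, hence satisfy both the OMPA ordering constraint (take $i=1$) and the normal-form shape of the first transition type, so the result stays inside the class the paper works with; and the disjointness of states (which alone already suffices, independently of the disjointness of stack alphabets) guarantees that an accepting run of $\m{M}$ commits to exactly one summand after its first move, giving the bijection between accepting runs and hence $L(\m{M})=L(\m{M}_1)\cup L(\m{M}_2)$ with only an additive constant overhead in size.
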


\begin{lem}[\cite{multi96}]
\label{closurepropOMPA2}
 Let $\Sigma$ be an alphabet. Given an $n$-OMPA $\m{M}$ over $\Sigma$ and a finite state automaton $\m{A}$ over $\Sigma$, then it is possible  to construct an $n$-OMPA 
$\m{M}'$ such that:  $L(\m{M}')=L(\m{M}) \cap L(\m{A})$ and $|\m{M}'|=O(|\m{M}| \cdot |{\m{A}}|)$.

\end{lem}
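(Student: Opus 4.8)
The plan is to use the standard synchronized product construction that pairs the finite control of $\m{M}$ with the states of $\m{A}$, leaving all stack operations untouched. Since $\m{A}$ carries no stack and merely reads the same input word that $\m{M}$ consumes, the product modifies only the finite-state component of each configuration; the stack-rewriting part $\gamma_i \swi \alpha_i$ of every transition is copied verbatim. Consequently the ordered-pop constraint and the two normal-form transition shapes, which are conditions on the tuples $(\gamma_1,\ldots,\gamma_n)$ and $(\alpha_1,\ldots,\alpha_n)$ alone, are inherited automatically, so $\m{M}'$ will again be an $n$-OMPA in normal form.

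First I would normalize $\m{A}=(Q_{\m{A}},\Sigma,\Delta_{\m{A}},I_{\m{A}},F_{\m{A}})$: without loss of generality it is $\epsilon$-free and has a single initial state $s_0$, both achievable with only a constant-factor increase in $|\m{A}|$. Then I would define $\m{M}'=(n,Q\times Q_{\m{A}},\Sigma,\Gamma,\Delta',(q_0,s_0),\gamma_0,F\times F_{\m{A}})$, where $\Delta'$ is obtained from $\Delta$ as follows. For each transition $\co q,\gamma_1,\ldots,\gamma_n\cf \by{a}_{\m{M}} \co q',\alpha_1,\ldots,\alpha_n\cf$ with $a\in\Sigma$ and each $s \by{a}_{\m{A}} s'$ in $\m{A}$, put $\co (q,s),\gamma_1,\ldots,\gamma_n\cf \by{a}_{\m{M}'} \co (q',s'),\alpha_1,\ldots,\alpha_n\cf$; and for each $\epsilon$-transition $\co q,\gamma_1,\ldots,\gamma_n\cf \by{\epsilon}_{\m{M}} \co q',\alpha_1,\ldots,\alpha_n\cf$ and each $s\in Q_{\m{A}}$, put $\co (q,s),\gamma_1,\ldots,\gamma_n\cf \by{\epsilon}_{\m{M}'} \co (q',s),\alpha_1,\ldots,\alpha_n\cf$, keeping the $\m{A}$-component frozen because no input is consumed.

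For correctness I would argue both inclusions by projecting runs. Any run of $\m{M}'$ projects, on its first component, to a run of $\m{M}$ on the same label $\tau$, and on its second component (reading the non-$\epsilon$ letters in order) to a run of $\m{A}$ on $\tau$; an accepting run of $\m{M}'$ ends in $F\times F_{\m{A}}$ with all stacks equal to $\bot$, so $\tau\in L(\m{M})\cap L(\m{A})$. Conversely, given an accepting run of $\m{M}$ labelled $\tau$ together with an accepting run of $\m{A}$ on $\tau$, I would interleave them by advancing $\m{A}$ exactly on the steps where $\m{M}$ reads a genuine letter and freezing it on $\m{M}$'s $\epsilon$-steps; since $\m{A}$ is $\epsilon$-free this synchronization is unambiguous and yields an accepting run of $\m{M}'$, so $\tau\in L(\m{M}')$. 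For the size bound, $|Q\times Q_{\m{A}}|=|Q|\cdot|Q_{\m{A}}|$ and $|\Delta'|=O(|\Delta|\cdot|\Delta_{\m{A}}|+|\Delta|\cdot|Q_{\m{A}}|)$, giving $|\m{M}'|=O(|\m{M}|\cdot|\m{A}|)$.

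The construction is essentially routine; the only points demanding care are the bookkeeping around $\epsilon$-moves --- ensuring $\m{A}$ stands still precisely when $\m{M}$ does not consume input, which is why $\epsilon$-freeness of $\m{A}$ is convenient --- and the verification that no product transition can violate the ordered or normal-form restrictions. The latter is immediate since those restrictions constrain only the stack symbols, which the product leaves exactly as in $\m{M}$. Thus I expect no genuine obstacle beyond this verification.
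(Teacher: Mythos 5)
The paper itself offers no proof of this lemma---it is imported from \cite{multi96}---so there is no internal argument to compare against; your synchronized product is certainly the intended one. Its core is sound: correctness by projecting runs of $\m{M}'$ and interleaving runs of $\m{M}$ and $\m{A}$ is standard, and your key observation---that the ordered-pop condition and the two normal-form transition shapes constrain only the tuples $(\gamma_1,\ldots,\gamma_n)$ and $(\alpha_1,\ldots,\alpha_n)$ of stack symbols, which the product copies verbatim from $\m{M}$---is exactly why the classical PDA$\times$NFA construction lifts to OMPA unchanged.

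The one step that fails as written is the opening ``without loss of generality'': the paper's FSAs may contain $\epsilon$-transitions, and $\epsilon$-elimination is \emph{not} a constant-factor operation on $|\m{A}|$. Take states $s_1,\ldots,s_m,t_1,\ldots,t_m$, an $\epsilon$-chain $s_1 \by{\epsilon} s_2 \by{\epsilon} \cdots \by{\epsilon} s_m$, and transitions $s_m \by{a} t_j$ for $j\in[1,m]$ on a single letter $a$; this automaton has size $O(m)$, but after $\epsilon$-elimination one needs $s_i \by{a} t_j$ for every pair $(i,j)$, i.e.\ $\Theta(m^2)=\Theta(|\m{A}|^2)$ transitions. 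Feeding this into your product against an $\m{M}$ with many $a$-labelled transitions yields $|\m{M}'|=\Theta(|\m{M}|\cdot|\m{A}|^2)$, so the claimed bound $O(|\m{M}|\cdot|\m{A}|)$ is not established by your argument (your transition count $O(|\Delta|\cdot|\Delta_{\m{A}}|+|\Delta|\cdot|Q_{\m{A}}|)$ is correct only for the already $\epsilon$-free $\m{A}$). The repair stays inside your construction: do not normalize $\m{A}$ at all, and instead absorb its $\epsilon$-transitions into the product directly, adding for every $s \by{\epsilon}_{\m{A}} s'$ and every $q \in Q$ the stack-neutral transition $\co (q,s),\epsilon,\ldots,\epsilon \cf \by{\epsilon}_{\m{M}'} \co (q,s'),\epsilon,\ldots,\epsilon \cf$. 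Such transitions are legal in a general OMPA (take $i=1$ and $\gamma_1=\epsilon$ in the ordering condition) and contribute only $|Q|\cdot|\Delta_{\m{A}}|$ transitions; the result is no longer in normal form, but the paper's normal-form lemma restores that with linear blowup, so both $L(\m{M}')=L(\m{M})\cap L(\m{A})$ and $|\m{M}'|=O(|\m{M}|\cdot|\m{A}|)$ survive.
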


%%%%%%%%%%%%%%%%%%%%%%%%%%%%%%%%%%%%%%%%%%%
%%%%%%%%%%%%%%%%%%%%%%%%%%%%%%%%%%%%%%%%%%%

%%%%%%%%%%%%%%%%%%%%%%%%%%%%%%%%%%%%%%%%%%%
%%%%%%%%%%%%%%%%%%%%%%%%%%%%%%%%%%%%%%%%%%%

\section{The emptiness problem for  a $n$-OMPA is in 2ETIME}
\label{chap2-dir1-2ETIME}
In this section, we show that the emptiness problem for ordered pushdown automata is in 2ETIME.  (We provide here a simpler proof of the 2ETIME upper bound than the one given in \cite{ABH-dlt08}.) To this aim, we  show that, given an OMPA $\m{M}$ with $n>1$ stacks, it is possible  to construct an effective generalized pushdown automaton   $\m{P}$, whose  pushed languages are defined by   OMPA with $(n-1)$ stacks of size $O(|\m{M}|^2)$, such that  the following invariant is preserved:  The state  and the stack content of $\m{P}$ are respectively the same as the state and the content of the $n^{th}$ stack of $\m{M}$ when its first $(n-1)$ stacks are empty (and so, $L(\m{P}) \neq \emptyset$ if and only if $L(\m{M})\neq \emptyset$).  Let $C$ be a recognizable set of configurations of $\m{M}$, and $\mathit{Pre}_{\m{T}(\m{M})}^*(C)$ the set of predecessors of $C$.  
Then, we can apply the saturation procedure  to $\m{P}$ to show that  the set of configurations  $C_n$,  consisting of   $\mathit{Pre}_{\m{T}(\m{M})}^*(C)$  restricted to the configurations in which   the first $(n-1)$ empty stacks are empty, is   recognizable and effectively constructible.  To compute the intermediary configurations in  $\mathit{Pre}_{\m{T}(\m{M})}^*(C)$ where  the first $(n-1)$ stacks   are not empty, we construct an ordered multi-pushdown automaton  $\m{M}'$ with $(n-1)$ stacks that: $(1)$ performs  the same transitions on its stacks as the ones   performed by $\m{M}$ on its first $(n-1)$ stacks, and $(2)$ simulates a push transition  of $\m{M}$ over  its  $n^{th}$ stack  by a transition of  the finite-state automaton accepting  the recognizable  set of configurations $C_n$. Now, we can apply the induction hypothesis to $\m{M}'$ and construct a finite-state automaton accepting the set of all  predecessors $\mathit{Pre}_{\m{T}(\m{M})}^*(C)$. 

Then, we  prove, by induction on $n$, that the emptiness problem for  the  $n$-OMPA $\m{M}$ is in 2ETIME with respect to the number of stacks.  For that, we assume that the emptiness problem  for  $(n-1)$-OMPA can be solved in  2ETIME. This implies that the generalized pushdown automaton $\m{P}$ (that simulates $\m{M}$) is effective (see Definition \ref{def-effec} and Lemma \ref{closurepropOMPA2}). Now, we can use  Theorem \ref{emptiness-EPDA} to prove   the decidability of the emptiness problem of the effective generalized pushdown automaton $\m{P}$ (and so,   of  the $n$-OMPA $\m{M}$). To show that the emptiness problem of $\m{P}$ and $\m{M}$ is in 2ETIME,  we  estimate the running time of  our saturation procedure, given in section \ref{sat}, under the assumption that the emptiness problem  for  $(n-1)$-OMPA can be solved in  2ETIME.

Let us give in more details  of the proof  described above.

\subsection{Simulation of an OMPA  by  an GPA}
In the following, we prove that, given an OMPA  $\m{M}$, we can construct a GPA   $\m{P}$,  with   transition languages defined by    $(n-1)$-OMPAs  of   size  $O(|\m{M}|^2)$, such that the emptiness problem for $\m{M}$ is reducible to the emptiness problem for $\m{P}$. (Recall that any OMPA is assumed to be in  normal form.)

\begin{thm}
\label{ompa-gpa}
Given an OMPA $\m{M}=(n,Q,\Sigma,\Gamma,\Delta,q_0,\gamma_0,F)$     with $n>1$,  it is possible to construct an GPA  $\m{P}=(P,\Sigma',\Gamma,\delta,p_0,\bot,\{p_f\})$  such that $P=Q \cup \{p_0,p_f\}$, $\Sigma'=\emptyset$, and  we have:

\begin{iteMize}{$\bullet$}

\item $L(\m{M})\neq \emptyset$ if and only if $L(\m{P}) \neq \emptyset$, and

\smallskip

\item For every $p_1,p_2 \in P$ and $\gamma \in \Gamma$,  there is an $(n-1)$-OMPA $\m{M}_{(p_1,\gamma,p_2)}$ over $\Gamma$  such that $L(\m{M}_{(p_1,\gamma,p_2)})=\big(\delta(p_1,\gamma,\epsilon,p_2)\big)^R$  and  $|\m{M}_{(p_1,\gamma,p_2)}|=O( |\m{M}|^2)$.

\end{iteMize}

\end{thm}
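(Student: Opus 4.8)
The plan is to make the GPA $\m{P}$ mirror the state and the $n$-th stack of $\m{M}$ exactly at those moments when the first $n-1$ stacks are all empty; I will call such a configuration of $\m{M}$ an \emph{interface configuration}. The structural fact that makes this work is the following consequence of the normal form: a symbol is popped from the $n$-th stack only by a transition of the second type, which is enabled only when stacks $1,\ldots,n-1$ equal $\bot$; and during any stretch of computation in which stacks $1,\ldots,n-1$ are not all empty the $n$-th stack is only pushed onto, never read (transitions of the first type read $\epsilon$ from stacks $2,\ldots,n$). Hence, starting from an interface configuration whose $n$-th stack is $\gamma w$, the computation up to the \emph{next} interface configuration first pops $\gamma$, then works using only stacks $1,\ldots,n-1$ while pushing some word onto the $n$-th stack, and never touches $w$; the word pushed and the reached state depend only on $\gamma$ and on the states at the two endpoints, not on $w$. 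This is precisely the shape of one GPA move, popping $\gamma$ and pushing a word.

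Concretely I define $\delta(p_1,\gamma,\epsilon,p_2) := \big(L(\m{M}_{(p_1,\gamma,p_2)})\big)^R$, where $\m{M}_{(p_1,\gamma,p_2)}$ is the $(n-1)$-OMPA built as follows. Its stacks are the first $n-1$ stacks of $\m{M}$, and its input alphabet is $\Gamma$, used to \emph{record} the symbols that $\m{M}$ pushes onto its $n$-th stack. For $p_1,p_2\in Q$ and $\gamma\neq\bot$, it starts from a fresh state and a fresh initial stack symbol; its first move guesses a transition $\co p_1,\bot,\ldots,\bot,\gamma\cf \by{a}_{\m{M}} \co q',\gamma'\bot,\bot,\ldots,\bot\cf$ of $\m{M}$ (the pop of $\gamma$ from the $n$-th stack), installs $\gamma'$ on its first stack and moves to $q'$. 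Thereafter it simulates every first-type transition of $\m{M}$ by retaining the pushes on stacks $1,\ldots,n-1$ and emitting the single symbol $\alpha_n$ pushed on the $n$-th stack as an input letter, and every second-type transition popping from stacks $2,\ldots,n-1$ verbatim; its only final state is $p_2$, so it accepts exactly when the first $n-1$ stacks are empty again in state $p_2$. The start of the whole run is handled by $\delta(p_0,\bot,\epsilon,p)$, whose OMPA begins directly in $(q_0,\gamma_0\bot,\bot,\ldots,\bot)$ (no initial pop) and accounts for the bottom marker of the $n$-th stack; acceptance of $\m{M}$ is handled by $\delta(q_f,\bot,\epsilon,p_f)=\{\epsilon\}$ for $q_f\in F$; every remaining triple yields the empty language. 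The reversal is forced by the last-in-first-out discipline: if $\m{M}$ pushes $a_1,\ldots,a_k$ onto the $n$-th stack in this chronological order, the resulting top segment is $a_k\cdots a_1$, while $\m{M}_{(p_1,\gamma,p_2)}$ reads $a_1\cdots a_k$; thus $L(\m{M}_{(p_1,\gamma,p_2)})=(\delta(p_1,\gamma,\epsilon,p_2))^R$, which is the second bullet by construction. A direct count of states (the set $Q$ together with a constant number of fresh states), alphabets (essentially $\Gamma$), and transitions (a copy of a subset of $\Delta$ together with the guessing moves, followed by a linear-cost conversion to normal form) yields $|\m{M}_{(p_1,\gamma,p_2)}|=O(|\m{M}|^2)$.

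For the first bullet I establish $L(\m{M})\neq\emptyset$ if and only if $L(\m{P})\neq\emptyset$ by a two-way simulation on runs; the input labels of $\m{M}$ are irrelevant here, since we care only about reachability and $\m{P}$ reads $\epsilon$ throughout. In one direction I cut an accepting run of $\m{M}$ at its successive interface configurations: the prefix up to the first interface configuration matches a $\delta(p_0,\bot,\epsilon,\cdot)$ move, each intermediate block matches a $\delta(p_1,\gamma,\epsilon,p_2)$ move by the independence-of-$w$ observation of the first paragraph, and the terminal interface configuration $(q_f,\bot,\ldots,\bot)$ with $q_f\in F$ matches the move $\delta(q_f,\bot,\epsilon,p_f)=\{\epsilon\}$ into $(p_f,\epsilon)$. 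In the other direction I expand each GPA move of an accepting run of $\m{P}$ into the accepting computation of the corresponding $\m{M}_{(p_1,\gamma,p_2)}$, reinterpreting its input letters as genuine pushes on the $n$-th stack; because the part $w$ of the $n$-th stack sitting below the popped symbol is never inspected during a block, these expansions compose into a single accepting run of $\m{M}$.

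I expect the main obstacle to be the correctness of this block decomposition rather than the construction itself. One must argue carefully that a push performed inside a block cannot interact with the portion of the $n$-th stack lying below the symbol popped at the block's start, and that the first instant at which stacks $1,\ldots,n-1$ all become empty is the correct cut point; in particular the bottom-marker bookkeeping at the initial move (where the GPA must re-install $\bot$) and at acceptance (where it must remove it) has to be matched precisely against the reversed languages of the auxiliary $(n-1)$-OMPAs.
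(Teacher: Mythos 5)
Your proposal is correct and follows essentially the same route as the paper: the same invariant (the GPA's state and stack mirror the state and $n$-th stack of $\m{M}$ exactly when the first $n-1$ stacks are empty), the same transition languages (reversals of languages of $(n-1)$-OMPAs that replay $\m{M}$'s behavior on its first $n-1$ stacks while emitting the symbols pushed on the $n$-th stack as input letters, with the same special initial and accepting moves), and the same two-way run decomposition for the emptiness equivalence; your device of guessing the initial pop transition inside $\m{M}_{(p_1,\gamma,p_2)}$ with a fresh start state is just an inlined form of the paper's union over the set $\Xi$ of pop transitions of the $n$-th stack. The one point needing care, which you yourself flag, is that the cut points of the run must be the pop transitions of the $n$-th stack rather than every instant at which the first $n-1$ stacks become empty: under the stated normal form a first-type transition may read $\bot$ on the first stack and push on the $n$-th stack without any pop, so a block between two consecutive such instants need not begin by popping $\gamma$ — this is harmless because your pushed languages (like the paper's) absorb such segments into the run following the preceding pop, and the paper's own proof of its key lemma glosses over exactly the same point.
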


\noindent The remaining part of this subsection is devoted to the proof of Theorem \ref{ompa-gpa}. 
Let us present the main steps of the construction of the generalized pushdown automaton $\m{P}$. For that, let us consider an accepting  run   $\rho$ of $\m{M}$. This  run    can be seen  as   a sequence of runs  of the form $\varsigma_1 \sigma_1 \varsigma_2 \sigma_2 \cdots \varsigma_m \sigma_m$   such that  the pop  transitions  operations are exclusive to  the first $(n-1)$-stacks   (resp. the $n^{th}$ stack) of $\m{M}$  along  the sequence of runs $\varsigma_1,\varsigma_2, \ldots, \varsigma_m$ (resp.  $\sigma_1,\sigma_2, \ldots,\sigma_m$). Observe that, by definition,    the first $(n-1)$-stacks  of $\m{M}$ are  empty  along the runs $\sigma_1,\sigma_2, \ldots,\sigma_m$.  Moreover, at the beginning   of  the runs $\varsigma_1,\varsigma_2, \ldots, \varsigma_m$,  the OMPA $\m{M}$ is in  some configuration $c$  where the first stack of $\m{M}$ contains just one symbol and the stacks from 2 to $n-1$ are empty (i.e., $c$ of the form $(q,\gamma \bot,\bot,\ldots,\bot,w)$). Observe that this is an immediate consequence of the normal form that we have considered (since  $\m{M}$ is  only allowed to push just one symbol on the first stack while popping a symbol from the stacks from $2$ to $n$). In the case that $\m{M}$ is not in the normal form, notice that the set of all possible contents of the first $(n-1)$-stacks, at the beginning   of  the runs $\varsigma_1,\varsigma_2, \ldots, \varsigma_m$,  is still  finite. Later, we will use  this observation to show how we can adapt our construction to the  general case (when $\m{M}$ is not in the normal form).

\medskip

 Then,  we construct   $\m{P}$ such that  the following invariant is preserved during the simulation of $\m{M}$: The  state and the  content  of the stack of $\m{P}$ are  the same     as the state and the  content of the $n^{th}$ stack of  $\m{M}$ when its  first $(n-1)$-stacks are empty (and so, $L(\m{P}) \neq \emptyset$ if and only if $L(\m{M})\neq \emptyset$). To this aim, a pushdown   transtion  of $\m{M}$ that pops a symbol $\gamma$  from  its $n^{th}$ stack  is  simply simulated by  a  pushdown  transition of $\m{P}$ that pops the same symbol  $\gamma$. This implies that a run of the form $\sigma_i$, with $1\leq i\leq m$, that pops the word  $u_i$ from the $n^{th}$ stack of $\m{M}$ is simulated  by a run  of $\m{P}$ that pops  the same word $u_i$.  Now,  for every $j \in [1,m]$, we need to  compute     the pushed word  $v_j$  into  the  $n$-th stack of $\m{M}$    during the  run $\varsigma_j $ in order to be  pushed also by $\m{P}$. For that,   let $L_{(q,\gamma,q')}$ be the set of all possible pushed words $u$ into  the $n^{th}$ stack of $\m{M}$  by a run of the form  $(q,\gamma\bot,\bot,\ldots,\bot,w)\,  \lby{\tau}{}_{\m{T}(\m{M})}^* \, (q',\bot,\bot,\ldots,\bot,uw)$ where pop transitions are exclusive to  the first $(n-1)$-stacks  of $\m{M}$.  We show that this language $L_{(q,\gamma,q')}$ can be  characterized  by  an $(n-1)$-OMPA $\m{M}'(q,\gamma,q')$ over the stack alphabet  of $\m{M}$  that: $(1)$ performs the same transitions on its state and  $(n-1)$-stacks  as the one performed by $\m{M}$ on its state and its first $(n-1)$ stacks while discarding the pop transitions  of $\m{M}$ over the  $n^{th}$ stack, and $(2)$ makes visible as transition labels the pushed symbols over the $n^{th}$ stack of $\m{M}$. Now, to simulate the run $\varsigma_j=(q,\gamma\bot,\bot,\ldots,\bot,w)\,  \lby{\tau_j}{}_{\m{T}(\m{M})}^* \, (q',\bot,\bot,\ldots,\bot,uw)$  of $\m{M}$  ( which is equivalent to say that $\varsigma_j=(q,\gamma\bot,\bot,\ldots,\bot,\bot)\,  \lby{\tau_j}{}_{\m{T}(\m{M})}^* \, (q',\bot,\bot,\ldots,\bot,u)$),  $\m{P}$  can  push  into its stack the word $u$ such that $u^R \in L({\m{M}'}(q,\gamma,q'))$. If $\m{M}$ is not in the normal form, the run $\varsigma_j$ will be of the form $(q,\alpha_1,\ldots,\alpha_{n-1},w)\,  \lby{\tau_j}{}_{\m{T}(\m{M})}^* \, (q',\bot,\bot,\ldots,\bot,uw)$ (i.e., $(q,\alpha_1,\ldots,\alpha_{n-1},\bot)\,  \lby{\tau_j}{}_{\m{T}(\m{M})}^* \, (q',\bot,\bot,\ldots,\bot,u)$). In this case,  we can construct an $(n-1)$-OMPA $M_{(q',\alpha_1, \ldots,\alpha_{n-1},q')}$, which is precisely $\m{M}'(q,\gamma,q')$ with $(q,\alpha_1,\ldots,\alpha_{n-1})$ as initial configuration, characterizing the set of all possible pushed words $u$ on the $n^{th}$-stack. Thus,  the $(n-1)$-OMPA, occurring in Theorem \ref{ompa-gpa}, will be indexed by tuples of the form $(p_1,\alpha_1, \ldots,\alpha_{n-1},p_2)$ where $p_1, p_2 \in P$ and $\alpha_1, \ldots,\alpha_{n-1} \in (\Gamma_{\epsilon} \cup \Gamma^2)$.

The proof of Theorem \ref{ompa-gpa} will be structured as follows. First, we define an $(n-1)$-OMPA $\m{M}'$ over the alphabet $\Gamma$ that: $(1)$ performs the same transitions on its state and  $(n-1)$-stacks  as the one performed by $\m{M}$ on its state and its first $(n-1)$ stacks while discarding the pop transitions  of $\m{M}$  on  the  $n^{th}$ stack, and $(2)$ makes visible as transition labels the pushed symbols over the $n^{th}$ stack of $\m{M}$. Intuitively,  depending on the initial and final configurations of $\m{M}'$, the ``language'' of $\m{M}'$ summarizes the effect of a sequence of  pop transitions of $\m{M}$  over the first $(n-1)$-stacks   on the   $n^{th}$ stack of $\m{M}$.  So, if we are interested only by the configurations of $\m{M}$ where the first $(n-1)$ stacks are empty, a run of $\m{M}$ can be seen as a sequence of alternations of a pop transition of $\m{M}$ over the $n^{th}$ stack and a push  operation   over the $n^{th}$ stack of a word in  the ``language'' of  $\m{M}'$.

 Then, we construct a generalized pushdown automaton $\m{P}$ such that the state and  the stack content of $\m{P}$ are the same as the state and the $n^{th}$-stack content of $\m{M}$ when the first $(n-1)$ stacks of $\m{M}$ are empty. In the definition  of $\m{P}$, we use  the $(n-1)$-OMPA $\m{M'}$ to  characterize the pushed word  on the $n^{th}$ stack of $\m{M}$  due to   a sequence of  pop transitions of $\m{M}$  on  the $(n-1)$ first stacks of $\m{M}$. This implies that the emptiness problem for $\m{M}$ is reducible to its corresponding problem for $\m{P}$.

 %Finally, we show that the emptiness problem for $\m{M}$ is reducible to its corresponding problem for $\m{P}$.

\paragraph{\bf Constructing the $(n-1)$-OMPA $\m{M}'$:}
\label{sect.ompan-1}
Let us introduce  the following   $n$-OMPA $\m{M}_{[1,n[}=(n,Q,\Sigma,\Gamma,\Delta_{[1,n[},q_0,\gamma_0,F)$   such that $\Delta_{[1,n[}=\Delta \cap \big(Q \times  (\Gamma_{\epsilon})^{n-1} \times \{\epsilon\}) \times \Sigma_{\epsilon} \times (Q \times (\Gamma^*)^{n}) \big)$. Intuitively, $\m{M}_{[1,n[}$  is built up  from $\m{M}$ by discarding pop transitions of $\m{M}$ over the $n^{th}$ stack. Then, let $\m{M'}=(n-1,Q,\Gamma,\Gamma,\Delta',q_0,\gamma_0,F)$  be the $(n-1)$-OMPA,  built out from  $\m{M}_{[1,n[}$, which $(1)$ performs the same transitions on the first ($n-1)$ stacks of $\m{M}_{[1,n[}$, and $(2)$ makes visible as transition label the pushed stack  symbol over the $n^{th}$ stack of $\m{M}_{[1,n[}$. Formally,  $\Delta'$ is defined as  the smallest transition relation satisfying the following condition:

\medskip

\begin{iteMize}{$\bullet$}
\item  If  $\co q,\gamma_1,\ldots,\gamma_{n-1},\epsilon \cf \by{a}_{\m{M}_{[1,n[}}  \co q',\alpha_1,\ldots,\alpha_{n-1},\alpha_n \cf$ for some  $q,q' \in Q$, $\gamma_1,\ldots,\gamma_{n-1} \in \Gamma_{\epsilon}$, $a \in \Sigma_{\epsilon}$, and $\alpha_1,\ldots,\alpha_{n} \in \Gamma^*$, then $ \co q,\gamma_1,\ldots,\gamma_{n-1} \cf\by{\alpha_n}_{\m{M}'} \co q',\alpha_1,\ldots,\alpha_{n-1}\cf$.

%\item If  $\co q,\gamma,\epsilon,\ldots,\epsilon\cf \by{a}_{\m{M}_{[1,n[}} \co q',\epsilon,\ldots,\epsilon,\gamma' \cf$ for some $q,q' \in Q$, $a \in \Sigma_{\epsilon}$, and $\gamma, \gamma' \in (\Gamma \setminus \{\bot\})$, then $\co q,\gamma,\epsilon,\ldots,\epsilon \cf\,\by{\gamma'}_{\m{M}'} \, \co q',\epsilon,\ldots,\epsilon \cf$.

\end{iteMize}

\medskip

\noindent Observe that the pushed word $\alpha_n$ over the $n^{th}$-stack consists in at most one symbol (i.e., $\alpha_n \in \Gamma_{\epsilon}$). If $\m{M}$ is not in the normal formal, $\alpha_n$ can be  of size two (say $\alpha_n=\gamma \gamma'$), and in this case we need to associate two transitions to $\m{M}'$ which first read the symbol $\gamma'$ and then the symbol $\gamma$.

\medskip

Let us now give the relation between  the effect of a sequence of operations   of $\m{M}_{[1,n[}$ on the   $n^{th}$-stack  and the language of $\m{M}'$. 
\begin{lem}
\label{lemm-rest-n-1}
For every $q,q' \in Q$,  and $w_1, w'_1,\ldots, w_n, w'_n \in \mathit{Stack}(\m{M}_{[1,n[})$, $(q,w_1,\ldots,w_n) $ $\lby{\tau}{}_{\m{T}(\m{M}_{[1,n[})}^* $ $(q',w'_1,\ldots,w'_n)$ for some $\tau \in \Sigma^*$ if and only if there is $u \in \Gamma^*$  such that $(q,w_1,\ldots,w_{n-1})$ $ \lby{u}{}_{\m{T}(\m{M}')}^* $ $(q',w'_1,\ldots,w'_{n-1})$ and $w'_n=u^Rw_n$.
\end{lem}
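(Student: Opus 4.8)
The plan is to prove both implications by induction on the length of the run, exploiting the one-to-one correspondence between single steps of $\m{M}_{[1,n[}$ and of $\m{M}'$ that is built into the defining rule for $\Delta'$. The crucial observation is that, by construction of $\Delta_{[1,n[}$, every transition of $\m{M}_{[1,n[}$ reads $\epsilon$ from its $n^{th}$ stack, hence never pops from it and merely pushes a single symbol $\alpha_n \in \Gamma_{\epsilon}$ (single because $\m{M}$ is in normal form) on top; the matching $\m{M}'$ transition performs the identical update on the first $(n-1)$ stacks while emitting that very symbol $\alpha_n$ as its label. Consequently a $k$-step run of $\m{M}_{[1,n[}$ pushes symbols $\alpha_n^{(1)},\ldots,\alpha_n^{(k)}$ onto the $n^{th}$ stack in order, leaving the content $\alpha_n^{(k)}\cdots\alpha_n^{(1)}w_n$, whereas the corresponding $\m{M}'$ run carries the label $u=\alpha_n^{(1)}\cdots\alpha_n^{(k)}$; since each $\alpha_n^{(i)}$ is a letter or empty, $u^R=\alpha_n^{(k)}\cdots\alpha_n^{(1)}$, which is exactly why the relation $w'_n=u^R w_n$ appears. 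The base case $k=0$ is trivial, giving $u=\epsilon$ and $w'_n=w_n$.

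For the forward implication I would peel off the first step of the $\m{M}_{[1,n[}$ run, say $(q,w_1,\ldots,w_n)\by{a}_{\m{T}(\m{M}_{[1,n[})}(q_1,\ldots)$, which uses a transition with $\gamma_n=\epsilon$ and pushes some $\beta_n$ on the $n^{th}$ stack. The rule for $\Delta'$ supplies the matching step $(q,w_1,\ldots,w_{n-1})\by{\beta_n}_{\m{T}(\m{M}')}(q_1,\ldots)$. Applying the induction hypothesis to the remaining $(k-1)$-step run, whose $n^{th}$ stack now starts from $\beta_n w_n$, yields an $\m{M}'$ run with label $u'$ such that $w'_n=u'^R(\beta_n w_n)$. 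Prepending the first step gives the full label $u=\beta_n u'$, and indeed $u^R w_n=u'^R\beta_n w_n=w'_n$. The backward implication is entirely symmetric: by minimality of $\Delta'$ each step of the given $\m{M}'$ run arises from some transition of $\m{M}_{[1,n[}$ with $\gamma_n=\epsilon$ (I am free to choose any input letter $a$ for it), and I reassemble these into an $\m{M}_{[1,n[}$ run of the same length whose $n^{th}$ stack accumulates exactly $u^R w_n$.

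Throughout I would verify that all intermediate stack contents remain in $\mathit{Stack}(\m{M}_{[1,n[})$, i.e.\ keep the bottom marker $\bot$; this is immediate since neither machine ever pops $\bot$ and the first $(n-1)$ stacks evolve identically in the two systems. The only point demanding genuine care is the reversal $u^R$: the symbols emitted earliest by $\m{M}'$ are pushed deepest onto the $n^{th}$ stack, so the order in which they are read is opposite to the order in which they sit in $w'_n$. This is the single place where an off-by-a-reversal error is easy, and it is precisely what forces the mirror operation instead of plain concatenation; once the transition-level bijection is fixed, everything else is routine bookkeeping.
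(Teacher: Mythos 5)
Your proof is correct and takes essentially the same approach as the paper: the paper's own proof consists precisely of the single-step correspondence you isolate (one transition of $\m{M}_{[1,n[}$ pushing some $b \in \Gamma_{\epsilon}$ corresponds to one $\m{M}'$-step labelled $b$, with the $n^{th}$ stack changing from $w_n$ to $b\,w_n$), with the induction on run length and the reversal bookkeeping left implicit as ``easily established.'' You have simply spelled out the chaining argument and the off-by-a-reversal point that the paper omits.
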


\begin{proof}
To prove Lemma \ref{lemm-rest-n-1}, it suffices to   observe that the following holds: For every $q,q' \in Q$, $w_1,\ldots,w_n \in \mathit{Stack}(\m{M}_{[1,n[})$, and $w'_1,\ldots,w'_n \in \mathit{Stack}(\m{M}_{[1,n[})$,  $(q,w_1,\ldots,w_n)  \by{a}_{\m{T}(\m{M}_{[1,n[})} (q',w'_1,\ldots,w'_n)$ for some $a \in \Sigma_{\epsilon}$ if and only if there is $b \in \Gamma_{\epsilon}$  such that $(q,w_1,\ldots,w_{n-1})$ $ \by{b}_{\m{T}(\m{M}')} $ $(q',w'_1,\ldots,w'_{n-1})$ and $w'_n=bw_n$. This observation can be established easily using the definition of $\m{M}'$ and $\m{M}_{[1,n[}$.
 \end{proof}

\paragraph{\bf Constructing the GPA $\m{P}$:}

We are ready now  to define  the generalized pushdown automaton   $\m{P}=(P,\Sigma',\Gamma,\delta,p_0,\bot,\{p_f\})$, with $P=Q \cup \{p_0,p_f\}$ and $\Sigma'=\emptyset$, that  keeps track of the state and   content of  the   $n^{th}$ stack of $\m{M}$ when the first $(n-1)$ stacks are empty.
Formally, $\m{P}$ is built from $\m{M}$   as follows: For every $p,p'\in P$ and $\gamma \in \Gamma$,  we have:

\medskip

\begin{iteMize}{$\bullet$}
\item If $p=p_0$, $\gamma=\bot$, and $p' \in Q$,  then  $\delta(p,\gamma,\epsilon,p')=\{u^R \bot \mid u \in L(\m{M}'(q_0,\gamma_0,p')) \}$.

\smallskip

\item  If  $p \in F$, $\gamma=\bot$, and $p'=p_f$, then $\delta(p,\gamma,\epsilon,p')=\{\epsilon\}$.

\smallskip

\item  If $p,p' \in Q$ and   $\gamma \neq \bot$ then $\delta(p,\gamma,\epsilon,p')=\bigcup_{(q,\gamma') \in \Xi} \big(L(\m{M}'(q,\gamma',p'))\big)^R$ where $\Xi=\{(q,\gamma') \in (Q \times \Gamma)  \mid \exists a \in \Sigma_{\epsilon},\,\co p,\bot,\ldots,\bot,\gamma \cf \, \by{a}_{\m{M}} \, \co q,\gamma'\bot,\bot,\ldots,\bot,\epsilon\cf\}$.

\smallskip

\item Otherwise, $\delta(p,\gamma,\epsilon,p')=\emptyset$.
\end{iteMize}

\medskip

Observe that  for every $p_1,p_2 \in P$,  and $\gamma \in \Gamma$, we can construct an $(n-1)$-OMPA $\m{M}_{(p_1,\gamma,p_2)}$ over $\Gamma$  such that $L(M_{(p_1,\gamma,p_2)})=\big(\delta(p_1,\gamma,\epsilon,p_2)\big)^R$  and  $|\m{M}_{(p_1,\gamma,p_2)}|=O( |\m{M}|^2)$. This can be easily proved using Lemma \ref{closurepropOMPA3}. 

\medskip

To complete the proof of Theorem \ref{ompa-gpa}, it remains to show   that the emptiness problem for $\m{M}$  is reducible to its corresponding problem for $\m{P}$. This is stated by Lemma \ref{lemm.2.4-main-empty}.
\begin{lem}
\label{lemm.2.4-main-empty}
$L(\m{M}) \neq \emptyset$ if and only if $L(\m{P})\neq \emptyset$.
\end{lem}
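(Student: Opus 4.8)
The plan is to prove both directions of the equivalence by relating accepting runs of $\m{M}$ to accepting runs of $\m{P}$, using Lemma~\ref{lemm-rest-n-1} as the central bridge. The key structural observation, already sketched in the construction, is that any run of $\m{M}$ can be decomposed according to its interaction with the $n^{th}$ stack: between consecutive pop operations on the $n^{th}$ stack, the first $(n-1)$ stacks do all the popping, and the net effect on the $n^{th}$ stack is simply to push some word $u$. I would first make precise the invariant that $\m{P}$ maintains: whenever $\m{M}$ is in a configuration $(q,\bot,\ldots,\bot,w)$ with the first $(n-1)$ stacks empty, $\m{P}$ is in the configuration $(q,w)$. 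The transitions of $\m{P}$ are exactly designed so that a push-phase of $\m{M}$ (a $\varsigma_j$ run pushing $u$ onto the $n^{th}$ stack via first-$(n-1)$-stack activity) corresponds to pushing $u^R\bot$ (or $u^R$) via a transition drawn from $L(\m{M}'(\cdots))^R$, while a pop of $\gamma$ from the $n^{th}$ stack is folded into the $\Xi$-transition.

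For the forward direction ($L(\m{M})\neq\emptyset \Rightarrow L(\m{P})\neq\emptyset$), I would take an accepting run $\rho$ of $\m{M}$ and decompose it as $\varsigma_1\sigma_1\cdots\varsigma_m\sigma_m$ where pops alternate between the first $(n-1)$ stacks (during $\varsigma_j$) and the $n^{th}$ stack (during $\sigma_j$). Starting from $(p_0,\bot)$, the initial transition $\delta(p_0,\bot,\epsilon,q_0')$ lets $\m{P}$ push $u^R\bot$ where $u\in L(\m{M}'(q_0,\gamma_0,q_0'))$ is the word $\varsigma_1$ pushes onto the $n^{th}$ stack; here I invoke Lemma~\ref{lemm-rest-n-1} to guarantee that this $\m{M}'$-run exists precisely when the corresponding $\m{M}_{[1,n[}$-run exists. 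Each subsequent $\sigma_j$ pop of $\gamma$ together with the push-phase $\varsigma_{j+1}$ is matched by a $\Xi$-based transition of $\m{P}$ that pops $\gamma$ and pushes the reversed word generated by $\m{M}'$. Since $\rho$ is accepting, it ends in a final state with all stacks $\bot$, so $\m{P}$ can fire $\delta(p,\bot,\epsilon,p_f)=\{\epsilon\}$ and accept. The reverse direction runs symmetrically: given an accepting run of $\m{P}$, I read off the states and pushed/popped words and reassemble an accepting run of $\m{M}$, again using Lemma~\ref{lemm-rest-n-1} in the other direction to realize each $\m{M}'$-word as a genuine first-$(n-1)$-stack computation of $\m{M}$ that pushes the correct (reversed) word onto the $n^{th}$ stack.

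The main obstacle I anticipate is the careful bookkeeping of the reversals and of the $\bot$ markers. Because $\m{P}$ pushes $u^R$ rather than $u$, the order in which symbols sit on $\m{P}$'s stack is the reverse of their conceptual order on $\m{M}$'s $n^{th}$ stack; one must check that popping by $\m{P}$ then correctly corresponds to the LIFO discipline of $\m{M}$'s $n^{th}$ stack, and that the net composition over all phases yields the same final stack content. The second subtlety is showing that the decomposition into $\varsigma_j\sigma_j$ phases is exhaustive and that the boundaries coincide exactly with moments when the first $(n-1)$ stacks are empty, which is where the normal-form assumption (each $\varsigma_j$ begins from a configuration with only one symbol on the first stack and stacks $2$ through $n-1$ empty) does the real work. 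I would therefore state an auxiliary claim formalizing the step-by-step correspondence between a single phase $\varsigma_j\sigma_j$ of $\m{M}$ and a pair of transitions of $\m{P}$, prove it directly from the definitions of $\delta$ and $\Xi$ together with Lemma~\ref{lemm-rest-n-1}, and then lift it to full runs by a straightforward induction on $m$.
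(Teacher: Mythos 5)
Your proposal is correct and follows essentially the same route as the paper: the same invariant (state and stack of $\m{P}$ mirror the state and $n^{th}$ stack of $\m{M}$ when the first $(n-1)$ stacks are empty), the same decomposition of runs at pops of the $n^{th}$ stack, Lemma~\ref{lemm-rest-n-1} as the bridge, and induction on the number of phases. The paper merely packages your ``auxiliary claim'' as two explicit lemmas (Lemma~\ref{lemm.2.gpa->ompa} for the phase-by-phase correspondence and Lemma~\ref{lemm.2.intial.ompa=gpa} for the initial transition), but the content is the same.
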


\begin{proof}

To prove   that $L(\m{M}) \neq \emptyset$ iff $L(\m{P}) \neq \emptyset$, we will show that the following invariant is preserved: The state and  content  of $\m{P}$ are the same as the state and content of the last stack of $\m{M}$ when its first $(n-1)$-stacks are empty. Thus, we will  split  the run  of $\m{M}$   at the transitions that pop some symbol from the last stack. This implies that  a run of $\m{M}$ can be decomposed as follows: $(1)$ First a   run from the initial configuration to the first reachable configuration where the first $(n-1)$-stacks are empty, and $(2)$ a sequence of runs  that are starting from a configuration where the first $(n-1)$-stacks are empty to the first reachable  configuration with empty first $(n-1)$-stacks. Lemma \ref{lemm.2.gpa->ompa}  and Lemma \ref{lemm.2.intial.ompa=gpa} establish the relation between these two kind of runs of $\m{M}$ and the runs of $\m{P}$. Lemma \ref{lemm.2.gpa->ompa} shows that   $\m{M}$ can move from a configuration of the form $(q,\bot,\ldots,\bot,w) $ to  a  configuration of the form $(q',\bot,\ldots,\bot,w') $  if and only if  $\m{P}$ can move from the configuration $(q,w)$ to the configuration $(q',w')$. Lemma \ref{lemm.2.intial.ompa=gpa} proves that $\m{P}$ can move, in one step, from the initial configuration $(p_0,\bot) $ to the configuration $(q,w)$ if and only if  $\m{M}_{[1,n[}$ can move from the initial configuration $(q_0,\gamma_0 \bot,\bot,\ldots,\bot)$ to the  configuration  $(q,\bot,\bot,\ldots,\bot,w)$. As an immediate consequence of Lemma \ref{lemm.2.gpa->ompa}  and Lemma \ref{lemm.2.intial.ompa=gpa}, we obtain    that $L(\m{M}) \neq \emptyset$ if and only if $L(\m{P}) \neq \emptyset$.

\medskip

\begin{lem}
\label{lemm.2.gpa->ompa}
For every  $q, q' \in Q$ and  $w,w' \in \mathit{Stack}(\m{M})$,  $(q,w) \lby{\epsilon}{}_{\m{T}(\m{P})}^* (q',w')$ if and only if  $(q,\bot,\ldots,\bot,w) $ $\lby{\tau}{}_{\m{T}(\m{M})}^* $ $(q',\bot,\ldots,\bot,w')$ for some $\tau \in \Sigma^*$.
\end{lem}

\medskip

\begin{proof}
({\bf The Only if direction}) In the following, we show that  if $(q,w) \lby{\epsilon}{}_{\m{T}(\m{P})}^* (q',w')$ then $(q,\bot,\ldots,\bot,w) $ $\lby{\tau}{}_{\m{T}(\m{M})}^* $ $(q',\bot,\ldots,\bot,w')$ for some $\tau \in \Sigma^*$. Assume that $(q,w)\, {\bby{i}{\epsilon}}{}_{\m{T}(\m{P})}\, (q',w')$. We proceed by induction on $i$.

\medskip
\noindent
{\bf Basis.} $i=0$. Then $q=q'$ and  $w=w'$. This implies that $(q,\bot,\ldots,\bot,w) $ $\lby{\tau}{}_{\m{T}(\m{M})}^* $ $(q',\bot,\ldots,\bot,w')$ holds with  $\tau=\epsilon$.

\medskip
\noindent
{\bf Step.} $i>0$. Then there is a configuration    $(p,v) \in \mathit{Conf}(\m{P})$ such that:

\begin{equation}
(q,w)\, {\bby{i-1}{\epsilon}}{}_{\m{T}(\m{P})}\, (p,v)\, \by{\epsilon}{}_{\m{T}(\m{P})} \,(q',w') \end{equation}

\medskip
\noindent
From the definition of $\m{P}$, we can  show  that  $p \in Q$ and $v \in \mathit{Stack}({\m{M}})$ since $q,q' \in Q$ and  $w,w' \in \mathit{Stack}(\m{M})$. Thus, 
we can apply the induction hypothesis to $(q,w)\, {\bby{i-1}{\epsilon}}{}_{\m{T}(\m{P})}\, (p,v)$, and we obtain:

\begin{equation}
\label{eq.2.1}
(q,\bot,\ldots,\bot,w)\, \lby{\tau'}{}_{\m{T}(\m{M})}^*\, (p,\bot,\ldots,\bot,v)\;\;\;\;\; \text{for some}  \;\; \tau' \in \Sigma^*
\end{equation}

\medskip
\noindent
Since $(p,v)\, \by{\epsilon}{}_{\m{T}(\m{P})} \,(q',w')$, there are $\gamma \in \Gamma$ and  $u,v' \in \Gamma^*$ such that:

\begin{equation}
v=\gamma v' ,\;\;\;\; w'=uv', \;\;\;\; \text{and}\;\; u \in \delta(p,\gamma,\epsilon,q')
\end{equation}

\medskip
\noindent
Using the definition of $\m{P}$,  we can show that  there are  $q'' \in Q$, $b \in \Sigma_{\epsilon}$, and   $\gamma' \in \Gamma$ such that:

\begin{equation}
\label{eq.2.2}
\co p,\bot,\ldots,\bot,\gamma \cf \by{b}{}_{\m{M}} \co q'',\gamma'\bot,\ldots,\bot,\epsilon\cf \;\; \text{and} \;\;u^R \in L(\m{M'}(q'',\gamma',q'))
\end{equation}

\medskip
\noindent
Since $u^R \in L(\m{M'}(q'',\gamma',q'))$,  we obtain:

\begin{equation}
\label{eq.11}
(q'',\gamma' \bot,\bot,  \ldots, \bot) \, \lby{u^R}{}_{\m{T}(\m{M'})}^* \,(q',\bot,\bot,\ldots,\bot)
\end{equation}

\medskip
\noindent
Now, we can apply Lemma \ref{lemm-rest-n-1}, to the computation of $\m{M'}$ given in   Equation \ref{eq.11} and the stack content $v'$, and we obtain:

\begin{equation}
\label{eq.2.3}
(q'',\gamma'\bot,\bot,\ldots,\bot,v') \,\lby{\tau''}{}_{\m{T}(M_{[1,n[})}^*\, (q',\bot,\bot,\ldots,\bot,uv') 
\end{equation}

\medskip
\noindent
Since $\co p,\bot,\ldots,\bot,\gamma\cf \by{b}{}_{\m{M}} \co q'',\gamma'\bot,\ldots,\bot,\epsilon\cf$ and  $v=\gamma v'$, we obtain the following computation of $\m{M}$:

\begin{equation}
\label{eq.2.4}
(p,\bot,\bot,\ldots,\bot,v) \by{b}{}_{\m{T}(\m{M})} (q'',\gamma'\bot,\bot,\ldots,\bot,v')
\end{equation}

\medskip
\noindent
Putting together  Equation \ref{eq.2.1}, Equation \ref{eq.2.3}, and Equation  \ref{eq.2.4}, we obtain:

\begin{equation}
(q,\bot,\ldots,\bot,w)\, \lby{\tau}{}_{\m{T}(\m{M})}^*\, (q',\bot,\ldots,\bot,w') \;\; \text{with} \;\; \tau=\tau' b \tau''
\end{equation}

This terminates the Only if direction of Lemma  \ref{lemm.2.gpa->ompa}.

\medskip
\noindent
({\bf The If direction}) In the following, we show that  if $(q,\bot,\ldots,\bot,w)$ $\lby{\tau}{}_{\m{T}(\m{M})}^*$ $ (q',\bot,\ldots,\bot,w')$ for some $\tau \in \Sigma^*$, then $(q,w) $ $\lby{\epsilon}{}_{\m{T}(\m{P})}^*$ $(q',w')$. Let us  assume that $\rho=(q,\bot,\ldots,\bot,w)$ $\lby{\tau}{}_{\m{T}(\m{M})}^*$ $ (q',\bot,\ldots,\bot,w')$ for some $\tau \in \Sigma^*$. The proof is by induction on the number of times that a pop transition over the $n^{th}$ stack of $\m{M}$ is used in   the run $\rho$.  Let $m$ be the number of times that a transition in $\Delta_n=(\Delta \setminus \Delta_{[1,n[})$ is used in  $\rho$.

\medskip
\noindent
{\bf Basis.} $m=0$. Then, $q=q'$, $w=w'$, and $\tau=\epsilon$ since no transitions from $\Delta_{n}$ are used in  $\rho$, and no transitions from $\Delta_{[1,n[}$ can be applied along the run $\rho$. This implies that $(q,w)\, \lby{\epsilon}{}_{\m{T}(\m{P})}^*\, (q',w')$ holds.

\medskip
\noindent
{\bf Step.} $m>0$. Then, there are $\gamma, \gamma' \in (\Gamma \setminus \{\bot\})$, $v \in \Gamma^*$, and $q_1,q_2 \in Q$  such that:

\begin{equation}
\rho_1=(q,\bot,\ldots,\bot,w) \, \lby{\tau'}{}_{\m{T}(\m{M})}^*\, (q_1,\bot,\ldots,\bot,\gamma v)
\end{equation}

\begin{equation}
\rho_2=(q_1,\bot,\ldots,\bot,\gamma v) \, \by{a}{}_{\m{T}(\m{M}_n)}\, (q_2,\gamma'\bot,\ldots,\bot, v)
\end{equation}

\begin{equation}
\rho_3=(q_2,\gamma'\bot,\ldots,\bot,v) \, \lby{\tau''}{}_{\m{T}(\m{M}_{[1,n[})}^*\, (q',\bot,\ldots,\bot,w')
\end{equation}

\medskip
\noindent 
for some $\tau', \tau'' \in \Sigma^*$ and $a \in \Sigma_{\epsilon}$ such that $\tau=\tau' a \tau''$.

\medskip

Observe that such decomposition of  $\rho$ is possible since in order to apply   a transition in $\Delta_n$,   the first $(n-1)$ stacks of $\m{M}$ must be  empty.

\medskip

\noindent
Now, we can apply the induction hypothesis to $\rho_1$, and we obtain the following run of   $\m{P}$:

\begin{equation}
\label{eq.2.32}
(q,w)\,  \lby{\epsilon}{}_{\m{T}(\m{P})}^* \,(q_1,\gamma v)
\end{equation}

\medskip

\noindent
We can also apply Lemma \ref{lemm-rest-n-1} to the run $\rho_3$, and we obtain that there is $u \in \Gamma^*$ such that:

\begin{equation}
u \in L(\m{M'}(q_2,\gamma',q')), \;\; \text{and} \;\; w'=u^R v
\end{equation}

\medskip
\noindent
From the run $\rho_2$, we get $\co q_1,\bot,\bot,\ldots,\bot,\gamma\cf\, \by{a}_{\m{M}}\, \co q_2,\gamma' \bot,\bot,\ldots,\bot,\epsilon\cf$. Moreover, we have $u \in L(\m{M'}(q_2,\gamma',q'))$. This implies that $u^R \in \delta(q_1,\gamma,\epsilon,q')$. This means that:

\begin{equation}
\label{eq.2.33}
(q,\gamma v) \by{\epsilon}{}_{\m{T}(\m{P})} (q',u^R v)=(q',w')
\end{equation}

\medskip
\noindent
Putting together Equations \ref{eq.2.32} and \ref{eq.2.33}, we obtain:

\begin{equation}
(q,w)\,  \lby{\epsilon}_{\m{T}(\m{P})}^*\,  (q,\gamma v) \by{\epsilon}{}_{\m{T}(\m{P})} (q',w')
\end{equation}

\medskip
\noindent
This terminates the If direction of Lemma  \ref{lemm.2.gpa->ompa}.
 \end{proof}

\medskip

Next, we prove that $\m{P}$ can perform a transition  from the initial configuration $(p_0,\bot) $ to a configuration of the form  $(q,w)$ if and only if  $\m{M}_{[1,n[}$ can move from the initial configuration $(q_0,\gamma_0 \bot,\bot,\ldots,\bot)$ to the  configuration  $(q,\bot,\bot,\ldots,\bot,w)$. 

\medskip

\begin{lem}
\label{lemm.2.intial.ompa=gpa}
For every $q \in Q$ and $w \in \mathit{Stack}(\m{M})$, $(p_0,\bot) \by{\epsilon}_{\m{T}(\m{P})}  (q,w)$  if and only if $(q_0,\gamma_0 \bot,\bot,\ldots,\bot)$ $ \lby{\tau}{}_{\m{T}(\m{M}_{[1,n[})}^*$ $(q,\bot,\bot,\ldots,\bot,w)$ for some $\tau \in \Sigma^*$.
\end{lem}

\medskip

\begin{proof} 
({\bf The If direction}) Assume that $\rho=(q_0,\gamma_0 \bot,\bot,\ldots,\bot)$ $ \lby{\tau}{}_{\m{T}(\m{M}_{[1,n[})}^*$ $(q,\bot,\bot,\ldots,\bot,w)$ for some $\tau \in \Sigma^*$. Then, we can apply Lemma  \ref{lemm-rest-n-1} to the run $\rho$. This implies that there is $u \in \Gamma^*$ such that:

\begin{center}
$(q_0,\gamma_0,\bot,\ldots,\bot) \lby{u}{}_{\m{T}(\m{M'})}^* (q,\bot,\ldots,\bot)$ \;\;and\;\;  $w=u^R \bot$
\end{center}

\medskip
\noindent
This means that $u \in L(\m{M'}(q_0,\gamma_0,q))$, and therefore  $u^R \bot \in \delta(p_0,\bot,\epsilon,q)$. This implies that the system $\m{T}({\m{P}})$ can move from  the configuration $(p_0,\bot)$ to the configuration $(q,u^R \bot)$:

$$(p_0,\bot) \by{\epsilon}{}_{\m{T}(\m{P})} (q,u^R\bot)=(q,w) $$

\medskip
\noindent
This terminates the proof of the If direction.

\medskip
\noindent
({\bf The Only if direction}) Assume that $(p_0,\bot) \by{\epsilon}_{\m{T}(\m{P})} (q,w)$. Then, from the definition of $\m{P}$, there is an $u \in \Gamma^*$ such that:

\begin{center}
$w=u^R \bot $\;\; and \;\;$u \in L(\m{M}'(q_0,\gamma_0,q))$
\end{center}

\medskip

From the   definition of $L(\m{M}'(q_0,\gamma_0,q))$, we have  $\rho=(q_0,\gamma_0 \bot,\bot,\ldots,\bot)$ $ \lby{u}{}_{\m{T}(\m{M'})}^* $ $(q,\bot,\bot,\ldots,\bot)$. We can apply Lemma \ref{lemm-rest-n-1} to the run $\rho$, and we obtain that there is $\tau \in \Sigma^*$ such that:

$$(q_0,\gamma_0\bot,\bot,\ldots,\bot) \lby{\tau}{}_{\m{T}(\m{M}_{[1,n[})}^* (q,\bot,\ldots,\bot,u^R \bot)= (q,\bot,\ldots,\bot,w) $$

\noindent
This terminates the Only if direction and the proof of Lemma \ref{lemm.2.intial.ompa=gpa}.
 \end{proof}

\medskip

Now, we are ready to prove  that the emptiness problem for $\m{M}$ is reducible to the emptiness problem for $\m{P}$.

\medskip

\noindent
 ({\bf The If direction}) Assume that $L(\m{P}) \neq \emptyset$. This implies that:

\begin{equation}
(p_0,\bot) \;\lby{\epsilon}{}_{\m{T}(\m{P})}^* \; (p_f,\epsilon) 
\end{equation}

\noindent
This means that there is a state $q \in F$ such that:

\begin{equation}
(p_0,\bot) \;\lby{\epsilon}{}_{\m{T}(\m{P})}^* \; (q,\bot) \by{\epsilon}{}_{\m{T}(\m{P})}  (p_f,\epsilon)
\end{equation}

\noindent
From the definition of the transition function of $\m{P}$,  there are  $q' \in Q$  and $w \in \mathit{Stack}(\m{M})$ such that:

\begin{equation}
\rho_1=(p_0,\bot)  \by{\epsilon}{}_{\m{T}(\m{P})} (q',w)
\end{equation}

\begin{equation}
\rho_2=(q',w)  \;\lby{\epsilon}{}_{\m{T}(\m{P})}^* \; (q,\bot) 
\end{equation}

\noindent
We can apply Lemma \ref{lemm.2.intial.ompa=gpa} to the run $\rho_1$, and we obtain that there is $\tau' \in \Sigma^*$ such that:

\begin{equation}
\label{2.41}
(q_0,\gamma_0 \bot,\ldots,\bot)\, \lby{\tau'}{}_{\m{T}(\m{M})}^*\, (q',\bot,\ldots,\bot,w)
\end{equation}

\noindent
We can also apply Lemma \ref{lemm.2.gpa->ompa} to the run $\rho_2$, and we obtain that there is $\tau'' \in \Sigma^*$ such that:

\begin{equation}
\label{2.42}
(q', \bot,\ldots,\bot,w)\, \lby{\tau''}{}_{\m{T}(\m{M})}^*\, (q,\bot,\ldots,\bot)
\end{equation}

\noindent
Putting together Equations \ref{2.41} and \ref{2.42}, we get:

\begin{equation}
(q_0,\gamma_0 \bot,\ldots,\bot)\, \lby{\tau'}{}_{\m{T}(\m{M})}^*\, (q', \bot,\ldots,\bot,w)\, \lby{\tau''}{}_{\m{T}(\m{M})}^*\, (q,\bot,\ldots,\bot)
\end{equation}

\noindent
This shows  that $L(\m{M}) \neq \emptyset$ since $q \in F$, and this terminates the proof of the If direction.

\medskip

\noindent
({\bf The Only if direction}:) Assume that $L(\m{M})\neq \emptyset$. Then,  there is a state $q \in F$ such that $(q_0,\gamma_0\bot,\bot,\ldots,\bot) $ $ \lby{\tau}{}_{\m{T}(\m{M})}^* $ $(q,\bot,\ldots,\bot)$ for some $\tau \in \Sigma^*$.

\noindent
This implies that there is a state $q' \in Q$,  $\tau', \tau'' \in \Sigma^*$, and $w \in \mathit{Stack}(\m{M})$ such that:

\begin{equation}
\rho_1=(q_0,\gamma_0 \bot,\ldots,\bot)\, \lby{\tau'}{}_{\m{T}(\m{M}_{[1,n[})}^*\, (q', \bot,\ldots,\bot,w)
\end{equation}

\begin{equation}
 \rho_2=(q', \bot,\ldots,\bot,w)\, \lby{\tau''}{}_{\m{T}(\m{M})}^*\, (q,\bot,\ldots,\bot)
\end{equation}

\noindent
We can apply Lemma \ref{lemm.2.intial.ompa=gpa} to the run $\rho_1$, and we obtain:

\begin{equation}
\label{2.43}
(p_0,\bot)\, \by{\epsilon}{}_{\m{T}(\m{P})}\, (q',w)
\end{equation}

\noindent
We can also apply Lemma \ref{lemm.2.gpa->ompa} to the run $\rho_2$, and we obtain that:

\begin{equation}
\label{2.44}
(q',w)\, \lby{\epsilon}{}_{\m{T}(\m{P})}^*\, (q,\bot)
\end{equation}

\noindent
Putting together Equations \ref{2.43} and \ref{2.44}, we get that:

\begin{equation}
(p_0,\bot)\, \by{\epsilon}{}_{\m{T}(\m{P})}\,(q',w)\, \lby{\epsilon}{}_{\m{T}(\m{P})}^*\, (q,\bot)
\end{equation}

\noindent
Moreover, we can apply the transition function $\delta(q,\bot,\epsilon,p_f)=\epsilon$ to the configuration $(q,\bot)$, and we obtain the following computation of $\m{T}(\m{P})$:

\begin{equation}
(p_0,\bot)\, \by{\epsilon}{}_{\m{T}(\m{P})}\,(q',w)\, \lby{\epsilon}{}^*_{\m{T}(\m{P})}\, (q,\bot)\, \by{\epsilon}{}_{\m{T}(\m{P})}\, (p_f,\epsilon)
\end{equation}

\noindent
This shows  that $L(\m{P}) \neq \emptyset$, and this terminates the proof of the Only if direction. \end{proof}

%%%%%%%%%%%%%%%%%%%%%%%%%%%%%%%%%%%%%%%%%%
%%%%%%%%%%%%%%%%%%%%%%%%%%%%%%%%%%%%%%%%%%
%%%%%%%%%%%%%%%%%%%%%%%%%%%%%%%%%%%%%%%%%%
%%%%%%%%%%%%%%%%%%%%%%%%%%%%%%%%%%%%%%%%%%
%%%%%%%%%%%%%%%%%%%%%%%%%%%%%%%%%%%%%%%%%%

\subsection{Emptiness of a $n$-OMPA is in 2ETIME}
In the following, we show that the emptiness problem  for  a $n$-OMPA  is in 2ETIME with respect to the number of stacks. The proof is done by induction on the number of stacks. First, we use the induction hypothesis, that the emptiness problem for OMPA with $(n-1)$-stacks is decidable, to show that the generalized pushdown automaton $\m{P}$ is effective  (and so  the emptiness problem for $\m{P}$ is decidable). Once the effectiveness property of $\m{P}$ has been established, we  estimate the running time of  our saturation procedure for  $\m{P}$, given in section \ref{sat},  under the assumption that the emptiness problem.   for  $(n-1)$-OMPA can be solved in  2ETIME. We show that  the emptiness problem of $\m{P}$ (and so  $\m{M}$) is in 2ETIME

\begin{thm}
\label{ompa-complexity}
The emptiness problem  for an $n$-OMPA $\m{M}$ can be solved  in time $O(  {|\m{M}|}^{2^{dn}})$ for some constant $d$.
\end{thm}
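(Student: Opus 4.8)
The plan is to prove the bound by induction on the number $n$ of stacks, using Theorem~\ref{ompa-gpa} to peel off the last stack at each step. Write $s = |\m{M}|$. For the base case $n=1$, a $1$-OMPA is an ordinary pushdown automaton, whose emptiness is decidable in polynomial time $O(s^{c})$ for a fixed constant $c$; choosing $d$ with $2^{d} \ge c$ makes this $O(s^{2^{d}})$. For the inductive step I would assume that emptiness of every $(n-1)$-OMPA of size $m$ is decidable in time $O(m^{2^{d(n-1)}})$ and deduce the bound for $\m{M}$.

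First I would invoke Theorem~\ref{ompa-gpa} to build the GPA $\m{P}$, with $|P| = O(s)$, stack alphabet $\Gamma$, empty input alphabet, and transition languages $\delta(p,\gamma,\epsilon,p') = \big(L(\m{M}_{(p,\gamma,p')})\big)^{R}$ given by $(n-1)$-OMPAs $\m{M}_{(p,\gamma,p')}$ of size $O(s^{2})$, so that $L(\m{M}) \neq \emptyset$ iff $L(\m{P}) \neq \emptyset$. The next step is to verify that $\m{P}$ is effective in the sense of Definition~\ref{def-effec}. Given an FSA $\m{A}$ over $\Gamma$, the test $L(\m{A}) \cap \delta(p,\gamma,\epsilon,p') \neq \emptyset$ is equivalent, after reversing, to $L(\m{A})^{R} \cap L(\m{M}_{(p,\gamma,p')}) \neq \emptyset$; since regular languages are closed under reversal I can build an FSA $\m{A}^{R}$ with $|\m{A}^{R}| = O(|\m{A}|)$, use Lemma~\ref{closurepropOMPA2} to construct an $(n-1)$-OMPA for $L(\m{M}_{(p,\gamma,p')}) \cap L(\m{A}^{R})$ of size $O(s^{2} \cdot |\m{A}|)$, and settle the test by the induction hypothesis. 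Hence $\m{P}$ is effective and, by Theorem~\ref{emptiness-EPDA}, its emptiness --- and therefore that of $\m{M}$ --- is decidable.

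To obtain the quantitative bound I would run the saturation procedure of Section~\ref{sat} to compute a $\m{P}$-automaton for $Pre^{*}_{\m{T}(\m{P})}(\{p_f\}\times\{\epsilon\})$ and test membership of $(p_0,\bot)$. The saturated automaton $\m{A}_{pre^*}$ keeps the $O(s)$ states of the initial $\m{P}$-automaton and acquires only transitions in $P \times \Gamma \times Q_{\m{A}}$, so it reaches a fixpoint after at most $O(s^{3})$ additions and always has size $O(s^{3})$. Each saturation scan performs one effectiveness test per tuple $(p,p',\gamma,q)$, i.e.\ $O(s^{4})$ tests, and every such test (by the previous paragraph) amounts to deciding emptiness of an $(n-1)$-OMPA of size $O(s^{2}\cdot|\m{A}_{pre^*}|) = O(s^{5})$. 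Collecting these costs gives the recurrence $T(n,s) = O(s^{7}) \cdot T(n-1, O(s^{5}))$. Using the induction hypothesis, $T(n-1,O(s^{5})) = O\big(s^{\,c_0\, 2^{d(n-1)}}\big)$ for a constant $c_0$ absorbing the polynomial size factor, whence $T(n,s) = O\big(s^{\,7 + c_0 2^{d(n-1)}}\big)$; since $2^{dn} = 2^{d}\cdot 2^{d(n-1)}$, any $d$ with $2^{d} > c_0$ and $2^{d} \ge 8$ gives $7 + c_0 2^{d(n-1)} \le 2^{dn}$ for all $n \ge 2$, yielding $T(n,s) = O(s^{2^{dn}})$ and closing the induction.

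The conceptual work --- the reduction to an effective GPA and the correctness of the saturation --- is already supplied by Theorems~\ref{ompa-gpa}, \ref{emptiness-EPDA} and~\ref{pred-EPDA}, so the only delicate point is the complexity accounting: one must check that the polynomial blow-up incurred at each of the three sources (the quadratic size of $\m{M}_{(p,\gamma,p')}$, the cubic size of $\m{A}_{pre^*}$, and the polynomially many effectiveness tests) is small enough that compounding it through the $n$ levels of recursion still fits inside a double exponential. The key observation making this go through is that raising a polynomially larger argument to the power $2^{d(n-1)}$ only multiplies the exponent by a constant, which is comfortably absorbed by the extra factor $2^{d}$ separating $2^{d(n-1)}$ from $2^{dn}$; this is exactly where the freedom in the choice of $d$ is essential.
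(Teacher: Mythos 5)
Your proposal is correct and follows essentially the same route as the paper's own proof: induction on $n$, reduction to an effective GPA via Theorem~\ref{ompa-gpa}, effectiveness established through Lemma~\ref{closurepropOMPA2} plus the induction hypothesis, and a complexity analysis of the saturation procedure in which the polynomial blow-up per level is absorbed by choosing $d$ large enough. Your bookkeeping of the polynomial factors (e.g.\ the $O(s^5)$ size of the intersection OMPA and the recurrence $T(n,s)=O(s^7)\cdot T(n-1,O(s^5))$) is in fact slightly more explicit than the paper's, but the argument is the same.
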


\begin{proof}
Let $\m{M}=(n,Q,\Sigma,\Gamma,\Delta,q_0,\gamma_0,F)$ be an $n$-OMPA.    To prove Theorem \ref{ompa-complexity}, we proceed by induction on the number of stacks $n$.

\medskip
\noindent
{\bf Basis.} $n=1$. Then, $\m{M}$  is   a pushdown automaton. From \cite{BEM97}, we know  that the emptiness problem for $\m{M}$ can be solved in polynomial time  in $|\m{M}|$.

\medskip
\noindent
{\bf Step.} $n> 1$. Then, we can apply  Theorem \ref{ompa-gpa} to construct a  generalized pushdown automaton  $\m{P}=(P,\emptyset,\Gamma,\delta,p_0,\bot,\{p_f\})$, with $P=Q \cup \{p_0,p_f\}$, such that:

\begin{iteMize}{$\bullet$}

\item $L(\m{M})\neq \emptyset$ if and only if $L(\m{P}) \neq \emptyset$, and

\smallskip

\item For every $p_1,p_2 \in P$ and $\gamma \in \Gamma$,  there is an $(n-1)$-OMPA $\m{M}_{(p_1,\gamma,p_2)}$ over $\Gamma$  such that $L(\m{M}_{(p_1,\gamma,p_2)})=\big(\delta(p_1,\gamma,\epsilon,p_2)\big)^R$  and  $|\m{M}_{(p_1,\gamma,p_2)}|=O( |\m{M}|^2)$.

\end{iteMize}\medskip

\noindent It is easy to observe   that   $\m{P}$ is an effective generalized pushdown automaton. This is established by the following lemma.

\begin{lem}
\label{eff-gpa-ompa}
$\m{P}$ is an effective generalized pushdown automaton.

\end{lem}

\medskip

\begin{proof}
To prove the effectiveness property of $\m{P}$, we need to show that for every finite state automaton $\m{A}$ over the alphabet $\Gamma$, the problem of checking whether $L(\m{A}) \cap \delta(p_1,\gamma,\epsilon,p_2) \neq \emptyset$ is decidable for all $p_1,p_2 \in P$ and  $\gamma \in \Gamma$. It can be easy shown that $L(\m{A}) \cap \delta(p_1,\gamma,\epsilon,p_2) \neq \emptyset$ if and only if $L(\m{A})^R \cap (\delta(p_1,\gamma,\epsilon,p_2))^R \neq \emptyset$.

Let  $\m{A}$ be  a given  finite state automaton,   $p_1,p_2 \in P$ two states of $\m{P}$, and $\gamma \in \Gamma$ a stack symbol of $\m{P}$. Using   Lemma \ref{closurepropOMPA2}, we can  construct an $(n-1)$-OMPA $\m{M}'$  such that $L(\m{M}')=(L(\m{A}))^R \cap L(M_{(p_1,\gamma,p_2)})=(L(\m{A}))^R \cap (\delta(p_1,\gamma,\epsilon,p_2))$ since we have $M_{(p_1,\gamma,p_2)}=(\delta(p_1,\gamma,\epsilon,p_2))^R$. Now, we can   apply the induction hypothesis to $\m{M}'$ to show that the checking whether $L(\m{M}') \neq \emptyset$ is decidable. Thus,   $\m{P}$ is an effective generalized pushdown automaton.  \end{proof}

From Theorem  \ref{emptiness-EPDA}, Theorem \ref{ompa-gpa}, and  Lemma \ref{eff-gpa-ompa}, we deduce that the emptiness problem for the $n$-OMPA $\m{M}$ is decidable.  

Next, we will  estimate the running time of the decision procedure. From Theorem  \ref{emptiness-EPDA}, we know that  the emptiness problem of $\m{P}$  is reducible to compute  the  set of predecessors of the configuration $(p_f,\epsilon)$ since  $L(\m{P}) \neq \emptyset$ if and only if $(p_0,\bot) \in \mathit{Pre}_{\m{T}(\m{P})}^*(\{p_f\} \times \{\epsilon\})$.

Let $\m{A}$ be the $\m{P}$-automaton that recognizes the configuration $(p_f,\epsilon)$ of $\m{P}$. It  is easy to see that such $\m{P}$-automaton $\m{A}$, with $|\m{A}|=O(|\m{M}|)$, is effectively constructible. Now, we need to analysis the running time of the saturation procedure (given in section \ref{sat}) applied to $\m{A}$.  For that, let $\m{A}_0,\ldots, \m{A}_i$ be the sequence of $\m{P}$-automaton obtained from the saturation procedure such that $\m{A}_0=\m{A}$ and $L_{\m{P}}(\m{A}_{i})=\mathit{Pre}_{\m{T}(\m{P})}^*(L_{\m{P}}(\m{A}))$. Then, we have  $i =O ( |\m{M}|^3)$ since the number of possible new  transitions of $\m{A}$ is finite. Moreover, at each step $j$, with $0\leq j \leq i$, we need to check, for every state $q $ of $\m{A}$, $p,p' \in P$, and  $\gamma \in \Gamma$, whether $L(\m{A}_{j})(\{p'\},\{q\}) \cap \delta(p,\gamma,\epsilon,p') \neq \emptyset$.

Using   Lemma \ref{closurepropOMPA2}, we can    construct, in polynomial  time  in $|\m{M}|$, an $(n-1)$-OMPA $\m{M}'_{(q,p,\gamma,p')}$  such that $L(\m{M}'_{(q,p,\gamma,p')})=(L(\m{A}_j)(\{p'\},\{q\}))^R \cap L(M_{(p,\gamma,p')})$ and $|\m{M}'_{(q,p,\gamma,p')}| \leq c (|\m{M}|^3)$ for some constant $c$. Now, we can apply the induction hypothesis to $\m{M}'_{(q,p,\gamma,a,p')}$, and we obtain that the problem of  checking whether $L(\m{M}'_{(q,p,\gamma,a,p')})\neq \emptyset$ can be solved in time $O\big((c\,  |\m{M}|^3)^{2^{d(n-1)}}\big)$.  Putting together all these equations, we obtain that the problem of  checking whether $(p_0,\bot) \in \mathit{Pre}_{\m{T}(\m{P})}^*(\{p_f\} \times \{\epsilon\})$ can be solved in time $O\big(|\m{M}|^3 |\m{M}|^5 (c \, |\m{M}|^3)^{2^{d(n-1)}}\big)$. By  taking a constant $d$ as big  as needed, we can show that the problem of    checking whether $L(\m{M}) \neq \emptyset$ can be solved in time $O(|\m{M}|^{2^{d n}})$. \end{proof}

%%%%%%%%%%%%%%%%%%%%%%%%%%%%%%%%%%%%%%%%%%%
%%%%%%%%%%%%%%%%%%%%%%%%%%%%%%%%%%%%%%%%%%%

%%%%%%%%%%%%%%%%%%%%%%%%%%%%%%%%%%%%%%%%%%%%%%%%%%
%%%%%%%%%%%%%%%%%%%%%%%%%%%%%%%%%%%%%%%%%%%%%%%%%%
%%%%% Prestar
%%%%%%%%%%%%%%%%%%%%%%%%%%%%%%%%%%%%%%%%%%%%%%%%%%
%%%%%%%%%%%%%%%%%%%%%%%%%%%%%%%%%%%%%%%%%%%%%%%%%%

\section{Computing  the set of predecessors for  OMPA}

In this section, we show that the set of predecessors of a recognizable set $C$ of configurations of an OMPA is recognizable and effectively constructible (see Theorem \ref{coro.prestar-regularity}). To simplify the presentation, we can assume without loss of generality that  the set $C$ contains only one  configuration of the form $(q_f,\bot,\ldots,\bot)$ where all the stacks are empty. This result  is established by Lemma \ref{lemm.reduce-sec4}.

\begin{lem}
\label{lemm.reduce-sec4}
Let $\m{M}=(n,Q,\Sigma,\Gamma,\Delta,q_0,\gamma_0, F)$ be an OMPA and $\m{A}$ be an $\m{M}$-automaton. Then, it is possible  to  construct, in time and space polynomial in $(|\m{M}|+  |\m{A}|)$, an OMPA $\m{M}'=(n,Q'\cup \{q_f\},\Sigma,\Gamma',\Delta',q_0,\gamma_0, F)$  where $Q \subseteq Q'$, $q_f \notin Q'$,  and $|\m{M}'|=O(|\m{M}|  \cdot |\m{A}|)$ such that for every $c \in \mathit{Conf}(\m{M})$, $c \in Pre_{\m{T}(\m{M})}^*(L_{\m{M}}(\m{A}))$ if and only if  $c \in Pre_{\m{T}(\m{M}')}^*(\{(q_f,\bot,\ldots,\bot)\})$.\end{lem}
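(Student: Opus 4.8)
The plan is to make $\m{M}'$ behave in two modes: a \emph{simulation} mode in which it copies $\m{M}$ verbatim, and a \emph{draining} mode in which, from any configuration, it may nondeterministically start emptying all the stacks while running $\m{A}$ on their concatenated contents, arriving at the fresh sink state $q_f$ with all stacks equal to $\bot$ exactly when the configuration at which draining began lies in $L_{\m{M}}(\m{A})$. Writing $\m{A}=(Q_{\m{M}},\Gamma,\Delta_{\m{M}},I_{\m{M}},F_{\m{M}})$ with $I_{\m{M}}=Q$, I would keep all of $\m{M}$'s transitions on the state set $Q$, set $\Gamma'=\Gamma$, and adjoin draining states $(s,i)$ with $s\in Q_{\m{M}}$ an $\m{A}$-state and $i\in[1,n+1]$ the index of the stack currently being read, plus $q_f$. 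Thus $Q'=Q\cup(Q_{\m{M}}\times[1,n+1])$, which has size $O(|\m{M}|\cdot|\m{A}|)$ because $n\le|\m{M}|$ and $|Q_{\m{M}}|\le|\m{A}|$.

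The draining transitions I would add are: a \emph{start} move $\co q,\epsilon,\ldots,\epsilon\cf\by{\epsilon}_{\m{M}'}\co (q,1),\epsilon,\ldots,\epsilon\cf$ for each $q\in Q$ (legal since $I_{\m{M}}=Q$, so $q$ is $\m{A}$'s reading state); a \emph{pop} move, for each $\m{A}$-transition $s\by{\gamma}_{\m{A}}s'$ with $\gamma\in\Gamma\setminus\{\bot\}$ and each $i$, that pops $\gamma$ from stack $i$ and advances $\m{A}$, namely $\co (s,i),\bot,\ldots,\bot,\gamma,\epsilon,\ldots,\epsilon\cf\by{\epsilon}_{\m{M}'}\co (s',i),\bot,\ldots,\bot,\epsilon,\ldots,\epsilon\cf$ with $\gamma$ in position $i$; a \emph{separator} move, for each $\m{A}$-transition $s\by{\bot}_{\m{A}}s'$ and each $i$, that reads the bottom marker of the now-empty stack $i$ (keeping it) and passes to stack $i+1$, namely $\co (s,i),\bot,\ldots,\bot,\epsilon,\ldots,\epsilon\cf\by{\epsilon}_{\m{M}'}\co (s',i+1),\bot,\ldots,\bot,\epsilon,\ldots,\epsilon\cf$ with $\gamma_i=\bot$ kept; and a \emph{final} move $\co (s,n+1),\epsilon,\ldots,\epsilon\cf\by{\epsilon}_{\m{M}'}\co q_f,\epsilon,\ldots,\epsilon\cf$ for each $s\in F_{\m{M}}$. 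Each of these is a legal OMPA transition (a nop, a pop from the first non-empty stack, or an inspection of an empty stack), and their number is $O(|\Delta|+|\Delta_{\m{M}}|\cdot n)$, so $|\m{M}'|=O(|\m{M}|\cdot|\m{A}|)$ and the construction is polynomial. The crucial point is that, since $\bot$ cannot be popped, the index $i$ must be carried in the state: an invariant argument shows that stacks $1,\ldots,i-1$ are empty whenever control is in $(s,i)$, so that each stack is drained completely before its bottom marker is read once as the separator, and the $\m{A}$-run therefore sees exactly $w_1w_2\cdots w_n$ (the unique decomposition noted for $\m{M}$-automata, including the $n$ occurrences of $\bot$).

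For correctness I would prove the two inclusions separately. For the forward direction, if $c\,\lby{\tau}{}_{\m{T}(\m{M})}^*\,c'$ for some $\tau$ with $c'=(q,w_1,\ldots,w_n)\in L_{\m{M}}(\m{A})$, then the same run is available in $\m{M}'$ (it uses only $\m{M}$-transitions on $Q$); afterwards the draining moves trace an accepting $\m{A}$-run on $w_1\cdots w_n$, popping $u_i$ off stack $i$ (where $w_i=u_i\bot$), reading each separator $\bot$, and finally stepping to $q_f$ from an accepting $\m{A}$-state, leaving all stacks $\bot$, so $c\in Pre^*_{\m{T}(\m{M}')}(\{(q_f,\bot,\ldots,\bot)\})$. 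For the converse, I would use that $q_f$ is a sink reachable only through draining states, and that draining states are entered only from $Q$ via a start move and never lead back to $Q$; hence any run $c\,\lby{\tau}{}_{\m{T}(\m{M}')}^*\,(q_f,\bot,\ldots,\bot)$ splits uniquely into a prefix that stays in $Q$, and so uses only $\m{M}$-transitions, reaching some $c'=(q,w_1,\ldots,w_n)$, followed by a start move and a pure draining suffix. Since that suffix ends in $q_f$, by construction it mirrors an accepting $\m{A}$-run from $q$ on $w_1\cdots w_n$, whence $c'\in L_{\m{M}}(\m{A})$ and $c\,\lby{\tau}{}_{\m{T}(\m{M})}^*\,c'$, giving $c\in Pre^*_{\m{T}(\m{M})}(L_{\m{M}}(\m{A}))$.

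The main obstacle I anticipate is the faithful treatment of the bottom markers. Because the OMPA semantics forbids popping $\bot$, one cannot simply let $\m{A}$ consume the whole word by popping symbols; the separator moves together with the stack-index component of the draining states are precisely what guarantee that each $w_i$ is read left to right, that its trailing $\bot$ is consumed once and only once, and that control passes to stack $i+1$ only after stack $i$ is genuinely empty. Verifying that this bookkeeping reproduces $w_1\cdots w_n$ letter by letter is the technical heart of the argument; the run-decomposition used in the converse direction, though conceptually simple, likewise relies on the draining states being absorbing.
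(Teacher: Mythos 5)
Your construction follows the same two\-/phase scheme as the paper's proof: keep $\Delta$ on $Q$, adjoin draining states $Q_{\m{M}}\times[1,n+1]$ and a fresh sink $q_f$, and let the draining phase mirror an accepting run of $\m{A}$ on $w_1w_2\cdots w_n$; read as a construction of a \emph{general} OMPA, your transitions are legal and your correctness argument (the invariant on $(s,i)$ and the run decomposition) is sound. The genuine gap is that your $\m{M}'$ is not in the normal form that the paper imposes on every OMPA (``in the rest of the paper, we assume that any OMPA is in the normal form''), and this is not cosmetic here: the entire purpose of the lemma is to feed $\m{M}'$ into Theorem~\ref{prestar-regularity} (yielding Theorem~\ref{coro.prestar-regularity}), and that theorem's proof --- via Lemma~\ref{lemm-atig-ref}, Lemma~\ref{lemm.2.gpa->ompa} and the GPA reduction --- is carried out only for normal-form OMPAs. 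Three of your transition families break the normal form. First, the start move $\co q,\epsilon,\ldots,\epsilon\cf\by{\epsilon}_{\m{M}'}\co (q,1),\epsilon,\ldots,\epsilon\cf$ and the final move from $(s,n+1)$ are pure nops, whereas every normal-form transition must either pop a symbol of the first stack (type 1) or pop a non-$\bot$ symbol of some stack $i\geq 2$ (type 2). Second, your pop move on stack $i\geq 2$ pushes nothing onto the first stack, whereas a normal-form pop from a stack in $[2,n]$ must push exactly one symbol on stack $1$; this is precisely why the paper takes $\Gamma'=\Gamma\cup\{\sharp\}$ (note the prime in the lemma statement) and has the $\m{A}$-simulating pop push the fresh symbol $\sharp$ on the first stack, which a companion transition immediately pops. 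Third, your separator move reads the bottom marker $\bot$ of a stack $i\geq 2$, which no normal-form transition can do; the paper circumvents this by precomputing runs $p \,\lby{\bot^{j}}{}_{\m{T}(\m{A})}^*\, p'$ of the finite automaton and compressing each such run into a single transition that reads only the first stack's $\bot$ (the start move is likewise fused with either an actual pop of stack $1$'s top symbol or such a compressed $\bot$-run).

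One might hope to repair your proof by normalizing $\m{M}'$ afterwards, but the paper's normalization lemma only asserts language equivalence $L(\m{M})=L(\m{M}')$; what you need is preservation of predecessor sets at the level of configurations of $\m{M}$, namely that for every $c\in\mathit{Conf}(\m{M})$, $c$ reaches $(q_f,\bot,\ldots,\bot)$ in the normalized automaton if and only if it does in yours. That is plausible but needs its own argument (normalization introduces intermediate states and reshapes transitions), and your proposal neither makes this argument nor flags the issue. As written, the proof therefore does not produce an object usable where the lemma is applied; adopting the $\sharp$-trick and the compressed-$\bot^{j}$ transitions fixes it with essentially no change to your correctness reasoning.
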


\begin{proof}
The proof  is similar to the case of standard pushdown automata. Technically, this can be done by adding to the OMPA $\m{M}$ some  pop transitions that check, in nondeterministic way, if the current configuration belongs to $L_{\m{M}}(\m{A})$ by  simulating the $\m{M}$-automaton $\m{A}$.  Let $\m{A}=(Q_{\m{M}}, \Gamma, \Delta_{\m{M}},I_{\m{M}},F_{\m{M}})$  be the $\m{M}$-automaton.  We assume w.l.o.g that $\m{A}$ has no transition leading to its initial states and that there is no transition of $\m{A}$ labeled by the empty word.

We construct the OMPA $\m{M}'=(n,Q'\cup \{q_f\},\Sigma,\Gamma',\Delta',q_0,\gamma_0, F)$ as follows:

\medskip

\begin{iteMize}{$\bullet$}

\item $Q'=Q \cup (Q_{\m{M}} \times [1,n+1])$. The set of states  $Q'$  is precisely the union of the   set of states of  $\m{M}$ and the set of states of $\m{A}$ indexed by the stack identities. (The index $n+1$ is used to mark the end of  the simulation  of $\m{A}$ by $\m{M}'$). Moreover, we assume that $\m{M}'$ has  a  fresh state $q_f \notin Q'$.

\medskip

\item $\Gamma'=\Gamma \cup \{\sharp\}$ such that $\sharp \notin \Gamma$. The fresh stack symbol $\sharp$ is used  to ensure that $\m{M}'$ respects the constraints imposed by the  normal formal. Intuitively, this symbol will be pushed on the first stack whenever a symbol is popped from a stack with an index from $2$ to $n$, during the simulation of the $\m{M}$-automaton $\m{A}$, and then this symbol  will be popped from the  first stack.

\medskip

\item  $\Delta'$ is the smallest transition relation  such that the following conditions are satisfied:

\medskip

\begin{iteMize}{$-$}

\item {\bf First Phase:}  In this phase the OMPA $\m{M}'$  behaves exactly as the OMPA $\m{M}'$.  This corresponds to    $\Delta \subseteq \Delta'$.

\medskip

\item {\bf Second phase:} In the second phase, $\m{M}'$ checks if the current configuration is accepted by $\m{A}$. This is done  by allowing $\m{M}'$ to start, in non-deterministically way,  the simulation of $\m{A}$ while popping  the read  symbols from their corresponding stacks. Formally, we have:

\medskip

\begin{iteMize}{$*$}

\item For every  transition $q\, \by{\gamma}_{\m{A}} \,p$ with $q \in Q$ and $\gamma \in (\Gamma \setminus \{\bot\})$, we have  $\co q,\gamma,\epsilon,\ldots,\epsilon \cf \by{\epsilon}_{\m{M}'}  \co (p,1), \epsilon,\epsilon,\ldots, \epsilon \cf$. This means that, in non-deterministically way, the checking  of whether the current configuration  is accepted   by $\m{A}$ can be  started by the  simulation  of a transition of $\m{A}$ from the current state $q$.  This transition of $\m{M}'$ corresponds to the case where the  first stack is not empty

\medskip

\item For every $q  \,\lby{\bot^{i} }{}_{\m{T}(\m{A})}^*  \,p$ with $i \in [1,n]$  and $q \in Q$, we have  $\co q,\bot,\epsilon,\ldots,\epsilon \cf \by{\epsilon}_{\m{M}'}  \co (p,i+1), \bot,\epsilon,\ldots, \epsilon \cf$.  This means that the simulation of $\m{A}$ by $\m{M}$ can be started and the first $i$ stacks are empty.  Observe that  the state $(p,i+1)$ of $\m{M}'$  corresponds to the fact that the current state of $\m{A}$ is $p$ and  that we are currently checking  the stack $i+1$.

\medskip

\item For every $i \in [1,n]$ and  $p  \,\lby{\bot^{j} }{}_{\m{T}(\m{A})}^*  \,p'$ for some $j \in [1,n-i+1]$, we have  $\co (p,i),\bot,\epsilon,\ldots,\epsilon \cf \by{\epsilon}_{\m{M}'}  \co (p',i+j), \bot,\epsilon,\ldots, \epsilon \cf$.  This corresponds to the simulation of a sequence of transitions of $\m{A}$ that checks if the stacks from $i$  to $(i+j-1)$ are empty. In  this case, we move the current state from $p$ to $p'$ and we start checking the stack  of index $(i+j)$.

\medskip

\item For every $i \in [1,n]$ and  $p  \,\by{\gamma }_{\m{A}}  \,p'$ for some $\gamma \in (\Gamma \setminus\{\bot\})$, we have  $\co (p,i),\gamma_1,\ldots,\gamma_n \cf \by{\epsilon}_{\m{M}'}  \co (p',i), \alpha_1,\ldots, \alpha_n \cf$ with $\gamma_1=\cdots=\gamma_{i-1}=\bot$, $\gamma_i=\gamma$, $\gamma_{i+1}=\cdots=\gamma_n=\epsilon$, $\alpha_1=\sharp \cdot \bot$, $\alpha_2=\cdots=\alpha_{i-1}=\bot$,  and $\alpha_{i}=\cdots=\alpha_n=\epsilon$.  The simulation of a   transition  of $\m{A}$, that reads the  symbol $\gamma$ from the $i^{th}$-stack,  is performed by $\m{M}'$ by a  transition that pops  $\gamma$ form the $i^{th}$-stack  and pushes the fresh symbol $\sharp$ into the first stack.

\medskip

\item For every index $i \in [1,n+1]$ and state $p \in Q_{\m{A}}$, we have  $\co (p,i),\sharp,\epsilon,\ldots, \epsilon \cf \by{\epsilon}_{\m{M}'}  \co (p,i), \epsilon,\ldots, \epsilon \cf$. This transition  pops the fresh symbol $\sharp $ from the first stack. Recall that this fresh  symbol is introduced in the only aim of ensuring the normal form of $\m{M}'$.

\medskip

\item For every   $p \in F$, we have  $\co (p,n+1),\bot,\epsilon,\ldots, \epsilon \cf \by{\epsilon}_{\m{M}'}  \co q_f, \bot,\epsilon,\ldots, \epsilon \cf$. This transition ends the simulation of $\m{A}$ by $\m{M'}$ after verifying that the current configuration of $\m{M}$ is accepted by $\m{A}$.

\end{iteMize}

\end{iteMize}
\end{iteMize}

\medskip

\noindent Then it is easy to see that for every $c \in \mathit{Conf}(\m{M})$, $c \in Pre_{\m{T}(\m{M})}^*(L_{\m{M}}(\m{A}))$ if and only if  $c \in Pre_{\m{T}(\m{M}')}^*(\{(q_f,\bot,\ldots,\bot)\})$.
\end{proof}

\medskip

In the following, we show  that the set  of configurations $C'$ of the form $(q',\bot,\ldots,\bot,w')$  from which the OMPA $\m{M}$ can  reach a configuration  of the form   $(q,\bot,\ldots,\bot)$, where all the stacks are empty, is recognizable and effectively constructible. 

\begin{lem}
\label{lemm-atig-ref}
Let $\m{M}=(n,Q,\Sigma,\Gamma,\Delta,q_0,\gamma_0, F)$ be an OMPA and  $q \in Q$ be a state. Then, it is possible to construct, in time   $O(|\m{M}|^{2^{d n}})$ where $d$ is a constant,  
an $\m{M}$-automaton $\m{A}$ such that $|\m{A}|=O(|\m{M}|)$ and $c \in L_{\m{M}}(\m{A})$ if and only if $c \in Pre_{\m{T}(\m{M})}^*(\{(q,\bot,\ldots,\bot)\})$ and $c=(q',\bot,\ldots,\bot,w)$ for some  $q' \in Q$ and $w \in \mathit{Stack}{(\m{M})}$.
\end{lem}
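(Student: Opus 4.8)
The plan is to reduce the problem to a single-configuration $Pre^*$ computation for the generalized pushdown automaton that simulates $\m{M}$, and then to read the result back as an $\m{M}$-automaton. For $n=1$ the automaton $\m{M}$ is an ordinary pushdown automaton and the claim is the classical $Pre^*$-construction of \cite{BEM97}, so I assume $n>1$. First I would apply Theorem~\ref{ompa-gpa} to build from $\m{M}$ the GPA $\m{P}=(P,\emptyset,\Gamma,\delta,p_0,\bot,\{p_f\})$ with $P=Q\cup\{p_0,p_f\}$, whose state and stack content track the state and the content of the $n^{th}$ stack of $\m{M}$ whenever its first $(n-1)$ stacks are empty. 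The key observation is that the configurations to be recognized are exactly the image of a $Pre^*$ set of $\m{P}$: by Lemma~\ref{lemm.2.gpa->ompa}, for $q'\in Q$ and $w\in\mathit{Stack}(\m{M})$ we have $(q',\bot,\ldots,\bot,w)\,\lby{\tau}{}_{\m{T}(\m{M})}^*\,(q,\bot,\ldots,\bot)$ for some $\tau$ if and only if $(q',w)\,\lby{\epsilon}{}_{\m{T}(\m{P})}^*\,(q,\bot)$. Hence a configuration $c=(q',\bot,\ldots,\bot,w)$ lies in $Pre_{\m{T}(\m{M})}^*(\{(q,\bot,\ldots,\bot)\})$ iff $(q',w)\in Pre_{\m{T}(\m{P})}^*(\{(q,\bot)\})$.

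Next I would compute this $\m{P}$-side predecessor set. The singleton $\{(q,\bot)\}$ is recognized by a $\m{P}$-automaton of $O(|\m{M}|)$ states (initial states $P$, one $\bot$-transition from $q$ to a fresh final state), and $\m{P}$ is effective by Lemma~\ref{eff-gpa-ompa}. Therefore Theorem~\ref{pred-EPDA} yields, by saturation and without adding states, a $\m{P}$-automaton $\m{A}_{pre^*}$ with $L_{\m{P}}(\m{A}_{pre^*})=Pre_{\m{T}(\m{P})}^*(\{(q,\bot)\})$.

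It then remains to turn $\m{A}_{pre^*}$ into an $\m{M}$-automaton $\m{A}$ that accounts for the $(n-1)$ empty stacks. Recall that an $\m{M}$-automaton accepts $(q',w_1,\ldots,w_n)$ by reading the concatenation $w_1\cdots w_n$ from the initial state $q'$, and that an empty stack contributes the single symbol $\bot$. I would relabel the copy of each $p\in P$ inside $\m{A}_{pre^*}$ as $\hat p$, keep $Q$ as the required initial-state set $I_{\m{M}}$, and for each $q'\in Q$ prepend a chain of $n-1$ fresh $\bot$-transitions from $q'$ to $\hat{q'}$; the final states are inherited from $\m{A}_{pre^*}$. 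By the unique decomposability of an accepted word into stack contents, $\m{A}$ accepts $(q',w_1,\ldots,w_n)$ iff $w_1=\cdots=w_{n-1}=\bot$ and $w_n=w$ with $w\in L(\m{A}_{pre^*}(\{\hat{q'}\},F_{\m{A}_{pre^*}}))$, i.e.\ iff $c=(q',\bot,\ldots,\bot,w)$ with $(q',w)\in Pre_{\m{T}(\m{P})}^*(\{(q,\bot)\})$; combined with the equivalence above this is exactly the claimed characterization, and $\m{A}$ has $O(|\m{M}|)$ states (the chains add only $O(n|Q|)$).

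For the time bound I would reuse the saturation analysis from the proof of Theorem~\ref{ompa-complexity}: there are $O(|\m{M}|^3)$ candidate transitions, and each saturation step tests emptiness of an intersection $\delta(p,\gamma,\epsilon,p')\cap L(\cdot)$, which by Lemma~\ref{closurepropOMPA2} is the language of an $(n-1)$-OMPA of size $O(|\m{M}|^3)$, decidable in time $O((|\m{M}|^3)^{2^{d(n-1)}})$ by Theorem~\ref{ompa-complexity}; choosing $d$ large enough bounds the whole construction by $O(|\m{M}|^{2^{dn}})$. I expect the main obstacle to be the correctness of the third step: one must verify that prepending exactly $n-1$ $\bot$-transitions forces precisely the condition ``first $(n-1)$ stacks empty'', neither more nor fewer $\bot$'s being consumable before $\m{A}_{pre^*}$ starts reading $w_n$. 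This rests on the facts that each empty stack is the one-symbol word $\bot$ and that the decomposition of an accepted word into stack contents is unique, so that the first $n-1$ read symbols are all $\bot$ exactly when $w_1=\cdots=w_{n-1}=\bot$.
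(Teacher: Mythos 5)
Your proof is correct and follows essentially the same route as the paper's: reduce the question to computing $Pre^*_{\m{T}(\m{P})}(\{(q,\bot)\})$ for the effective GPA $\m{P}$ of Theorem~\ref{ompa-gpa}, using the invariant of Lemma~\ref{lemm.2.gpa->ompa} to identify this set with the desired configurations of $\m{M}$, and then apply the saturation result (Theorem~\ref{pred-EPDA}); in fact you make explicit two steps the paper leaves implicit, namely the translation of the saturated $\m{P}$-automaton back into an $\m{M}$-automaton via the prepended $\bot^{n-1}$-chains and the running-time analysis borrowed from Theorem~\ref{ompa-complexity}. One minor blemish: the chains contribute $O(n\cdot|Q|)$ states, which is not $O(|\m{M}|)$ in general, so your parenthetical size accounting does not support the claimed bound --- but this slack is harmless for the way the lemma is used later (it disappears into the $O(|\m{M}|^{2^{dn}})$ bounds), and the paper's own $O(|\m{M}|)$ claim is stated with the same looseness.
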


\begin{proof}
Lemma \ref{lemm.2.gpa->ompa}  shows  that, given an OMPA $\m{M}$ with $n$ stacks, it is possible  to construct an effective generalized pushdown automaton  $\m{P}$, whose  pushed languages are defined by   OMPA with $(n-1)$ stacks,  such that    the following invariant is preserved:  The state  and the stack's content of $\m{P}$ are the same as the state and the content of the $n^{th}$ stack of $\m{M}$ when its first $(n-1)$ stacks are empty.   Then, we can make use of   Theorem \ref{pred-EPDA}, which shows the set of all predecessors of a recognizable set of configurations is an effectively constructible recognizable set for effective generalized pushdown automata,    to show that Lemma \ref{lemm-atig-ref} holds. 
\end{proof}

\medskip

Next, we state our main theorem which is a  generalization of the result obtained in  for bounded-phase visibly multi-stack pushdown automata \cite{anil10}.

\begin{thm}
\label{prestar-regularity}
Let $\m{M}=(n,Q,\Sigma,\Gamma,\Delta,q_0,\gamma_0, F)$ be an OMPA and  $q \in Q$ be a state. Then, it is possible to construct, in time   $O(|\m{M}|^{2^{d n}})$ where $d$ is a constant,  
a $\m{M}$-automaton $\m{A}$ such that $|\m{A}|=O(|\m{M}|^{2^{d n}})$ and $L_{\m{M}}(\m{A})= \mathit{Pre}_{\m{T}(\m{M})}^*(\{(q,\bot,\ldots,\bot)\})$.
\end{thm}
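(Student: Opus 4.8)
The plan is to argue by induction on the number $n$ of stacks, reusing Lemma \ref{lemm-atig-ref} to capture the ``innermost'' behaviour (runs in which the first $n-1$ stacks are already empty) and reducing the remaining behaviour to an $(n-1)$-OMPA on which the induction hypothesis applies. For the base case $n=1$, $\m{M}$ is an ordinary pushdown automaton and $\mathit{Pre}_{\m{T}(\m{M})}^*(\{(q,\bot)\})$ is computed by the classical saturation construction of \cite{BEM97} (equivalently, by Theorem \ref{pred-EPDA} specialised to finite pushed languages), yielding an $\m{M}$-automaton of size $O(|\m{M}|)$ in polynomial time.

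For the inductive step ($n>1$), I would first observe that any run witnessing $(q',w_1,\ldots,w_n)\in \mathit{Pre}_{\m{T}(\m{M})}^*(\{(q,\bot,\ldots,\bot)\})$ splits at the first instant at which the stacks $1,\ldots,n-1$ become simultaneously empty: a prefix that empties the first $n-1$ stacks while pushing some word $v$ onto the $n$-th stack (during which, by the ordering constraint, no symbol is ever popped from stack $n$), followed by a suffix issued from a configuration $(q'',\bot,\ldots,\bot,v\,w_n)$ of the form handled by Lemma \ref{lemm-atig-ref}. Hence $(q',w_1,\ldots,w_n)$ is a predecessor if and only if there are a state $q''$ and a word $v$ such that $\m{M}$ moves from $(q',w_1,\ldots,w_{n-1},w_n)$ to $(q'',\bot,\ldots,\bot,v\,w_n)$ using pops on stacks $1,\ldots,n-1$ only, and $(q'',\bot,\ldots,\bot,v\,w_n)$ is accepted by the automaton $\m{A}_n$ produced by Lemma \ref{lemm-atig-ref} for the set $C_n$ of these innermost predecessors.

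By Lemma \ref{lemm-rest-n-1}, the prefix is exactly a run of the visible-push $(n-1)$-OMPA $\m{M}'$ that emits a word $u$ with $v=u^R$. I would therefore build a composite $(n-1)$-OMPA $\m{N}$ whose states are pairs $(p,s)$, where $p$ is a state of $\m{M}$ and $s$ a state of $\m{A}_n$: it mimics $\m{M}$ on the first $n-1$ stacks, and each time $\m{M}$ would push a symbol $\gamma$ onto the $n$-th stack it advances the $\m{A}_n$-component by a transition of $\m{A}_n$ reading $\gamma$, so that a run of $\m{N}$ ending with empty stacks and an accepting $\m{A}_n$-component corresponds precisely to the conjunction above. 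Applying the induction hypothesis (Theorem \ref{prestar-regularity} for $n-1$) to $\m{N}$ gives an $\m{N}$-automaton reading the contents $w_1\cdots w_{n-1}$; gluing it, at the gateway state that $\m{A}_n$ reaches after consuming the $n-1$ bottom markers $\bot^{n-1}$, to the part of $\m{A}_n$ that reads $w_n$ yields the desired $\m{M}$-automaton $\m{A}$ reading $w_1\cdots w_{n-1}w_n$. The bounds accumulate cleanly: $|\m{A}_n|=O(|\m{M}|)$ and $\m{N}$ is polynomial in $|\m{M}|$, so the induction hypothesis contributes a factor $O(|\m{N}|^{2^{d(n-1)}})$, which is $O(|\m{M}|^{2^{dn}})$ once the fixed polynomial blow-up is absorbed into a sufficiently large constant $d$ (the same device used in Theorem \ref{ompa-complexity}).

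The main obstacle will be the interface between the first $n-1$ stacks and the $n$-th stack, and specifically the reversal $v=u^R$ of Lemma \ref{lemm-rest-n-1}: since symbols are pushed onto stack $n$ in run order whereas $\m{A}_n$ reads stack $n$ from top to bottom, the $\m{A}_n$-component carried in the control of $\m{N}$ must be driven by $\m{A}_n$ run \emph{backwards}, with the gateway and final $\m{A}_n$-states guessed and later verified. The bottom markers $\bot$ and the nondeterministic detection of ``first $n-1$ stacks empty'' must likewise be threaded consistently through both the $\m{N}$-predecessor automaton and $\m{A}_n$, so that the glued automaton reads $w_1\cdots w_n$ without mismatch. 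Checking this bookkeeping, rather than any single new idea, is where the real work lies; correctness of each piece then follows from Lemma \ref{lemm-rest-n-1}, Lemma \ref{lemm-atig-ref}, and the induction hypothesis.
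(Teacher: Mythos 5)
Your proposal is correct and follows essentially the same route as the paper's proof: induction on $n$ with the base case from \cite{BEM97}, Lemma~\ref{lemm-atig-ref} for the configurations with the first $n-1$ stacks empty, a product $(n-1)$-OMPA synchronizing $\m{M}_{[1,n[}$ with the automaton for that set (your $\m{N}$ is the paper's $\m{M}'$, which indeed runs the $\m{A}'$-component backwards, i.e.\ $p_2 \lby{\alpha_n}{}_{\m{T}(\m{A}')}^* p_1$, exactly resolving the reversal issue you flag), the induction hypothesis applied to this product, and gluing at the state reached after reading $\bot^{n-1}$. The bookkeeping you identify as the remaining work is precisely what the paper's Lemma~\ref{ref-lemma-sec6} and Lemma~\ref{lemm-sec6-proof} carry out.
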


\begin{proof} 
From Lemma \ref{lemm-atig-ref}, we know that   the set of configurations  $C_n$,  consisting of   $\mathit{Pre}_{\m{T}(\m{M})}^*(\{(q,\bot,\ldots,\bot)\})$  restricted to the configurations in which   the first $(n-1)$ empty stacks are empty, is   recognizable and effectively constructible.  To compute the intermediary configurations in  $\mathit{Pre}_{\m{T}(\m{M})}^*(\{(q,\bot,\ldots,\bot)\})$ where  the first $(n-1)$ stacks   are not empty, we construct an ordered multi-pushdown automaton  $\m{M}'$ with $(n-1)$ stacks that: $(1)$ performs  the same transitions on its stacks as the ones   performed by $\m{M}$ on its first $(n-1)$ stacks, and $(2)$ simulates a push transition  of $\m{M}$ over  its  $n^{th}$ stack  by a transition of  the finite-state automaton accepting  the recognizable  set of configurations $C_n$. Now, we can apply the induction hypothesis to $\m{M}'$ and construct a finite-state automaton accepting the set of all  predecessors  $\mathit{Pre}_{\m{T}(\m{M})}^*(\{(q,\bot,\ldots,\bot)\})$.

\medskip

 We proceed by induction on the number of stacks of the OMPA $\m{M}$.

\medskip
\noindent
{\bf Basis.} $n=1$. Then, $\m{M}$  is   a pushdown automaton. From \cite{BEM97}, we know  that such an $\m{M}$-automaton  $\m{A}$  for $\m{M}$ can be constructed in polynomial time  in $|\m{M}|$.

\medskip
\noindent
{\bf Step.} $n> 1$. Then, we can use Lemma \ref{lemm-atig-ref}   to construct, in time   $O(|\m{M}|^{2^{d n}})$ where $d$ is a constant, 
an $\m{M}$-automaton $\m{A}'=(Q_{\m{A}'},\Gamma,\Delta_{\m{A}'},Q,F_{\m{A}'})$ such that $|\m{A}'|=O(|\m{M}|)$ and $(q'',\bot,\ldots,\bot,w) \in L_{\m{M}}(\m{A}')$ if and only if $(q'',\bot,\ldots,\bot,w) \, \lby{\tau'}_{\mathcal{T}(\mathcal{M})}^*\, (q,\bot,\ldots,\bot) $
 for some $\tau' \in \Sigma^*$. Afterwards, we assume without loss of generality  that the  $\m{M}$-automaton has no $\epsilon$-transitions.

\medskip

Let  $\m{M}_{[1,n[}=(n,Q,\Sigma,\Gamma,\Delta_{[1,n[},q_{0},\gamma_{0},F)$ be the OMPA      built  from $\m{M}$ by discarding the set of pop transitions of $\m{M}$ over the $n^{th}$ stack (as defined in Section \ref{sect.ompan-1}).  (Recall that   $\Delta_{[1,n[}=\Delta \cap \big((Q \times  (\Gamma_{\epsilon})^{n-1} \times \{\epsilon\})\times \Sigma_{\epsilon}  \times (Q \times (\Gamma^*)^{n}) \big)$).   Then, it is easy to see that for every  configuration $(q',w_1,\ldots,w_n)$ in $Pre_{\m{T}(\m{M})}^*(\{(q,\bot,\ldots,\bot)\})$, there are $q'' \in Q$, $w \in \mathit{Stack}(\m{M})$, and $\tau', \tau\in  \Sigma^*$ such that:

$$(q',w_1,\ldots,w_n) \, \lby{\tau}{}_{\m{T}(\mathcal{M}_{[1,n[})}^* \, (q'', \bot, \ldots,\bot,w ) \,  \lby{\tau'}{}_{\m{T}(\m{M})}^*\,(q,\bot,\ldots,\bot)$$

\medskip

Since the OMPA $\m{M}_{[1,n[}$ can only  have push transitions over its $n^{th}$ stack,  we have $(q',w_1,\ldots,w_n)\, \lby{\tau}{}_{\m{T}(\m{M}_{[1,n[})}^*\, (q'', \bot,\ldots,\bot,w)$ if and only if  there is $v \in (\Gamma \setminus \{\bot\})^*$ such that $w=v w_n$ and $(q',w_1,\ldots,w_{n-1},\bot)\, \lby{\tau}{}_{\m{T}(\m{M}_{[1,n[})}^*\, (q'', \bot,\ldots,\bot,v)$ (see Lemma \ref{lemm-rest-n-1}).

\medskip

 Let  $\m{M}'=(n-1,Q \times Q_{\m{A}'},\Sigma,\Gamma,\Delta',q'_{0}, \gamma_{0}, F')$ be an $(n-1)$-OMPA built  from the OMPA $\m{M}_{[1,n[}$ and the FSA $\m{A}'$ such that  $\co (q_1,p_1),\gamma_1,\ldots,\gamma_{n-1} \cf \by{a}_{\m{M}'}
 \co (q_2,p_2),\alpha_1,\ldots,\alpha_{n-1} \cf$ if and only  if   
  $\co q_1,\gamma_1,\ldots,\gamma_{n-1},\epsilon \cf \by{a}_{\m{M}_{[1,n[}}
 \co q_2,\alpha_1,\ldots,\alpha_{n-1},\alpha_n \cf$ and  $p_2 \, \lby{\alpha_n}{}_{\m{T}(\m{A}')}^* \, p_1$ for some $ \alpha_{n} \in \big((\Gamma \setminus \{\bot\}) \cup \{\epsilon\}\big)$.
 In fact, the OMPA  $\m{M}'$ defines a kind of synchronous product between the pushed word over the $n$-th stack of  OMPA $\m{M}_{[1,n[}$ and the reverse of the input word of the FSA $\m{A}'$. Observe that the size of the constructed $(n-1)$- OMPA $\m{M}'$ is $O(|\m{M}|^2))$.

 \medskip
 
 Then, the relation between $\m{M}'$,  $\m{M}_{[1,n[}$, and $\m{A}'$ is given by  Lemma \ref{ref-lemma-sec6} which 
 follows immediately from the definition of $\m{M}'$.

 \begin{lem}
 \label{ref-lemma-sec6}
$ ((q_1,p_1),w_1,\ldots,w_{n-1})\, \lby{\varsigma}{}_{\m{T}(\m{M}')}^*\, ((q_2,p_2), \bot,\ldots,\bot)$  iff there is a $v \in (\Gamma \setminus \{\bot\})^*$ such that $(q_1,w_1,\ldots,w_{n-1},\bot)\, \lby{\tau}{}_{\m{T}(\m{M}_{[1,n[})}^*\, (q_2, \bot,\ldots,\bot,v \bot)$ and $p_2 \, \lby{v}{}_{\m{T}(\m{A}')}^* \, p_1$.
 \end{lem}

\medskip

 Now, we can apply the induction hypothesis  to  $\m{M}'$  to show that for every $(q'',p'') \in Q \times Q_{\m{A}'}$, it possible to construct, in time   $O(|\m{M}|^{2^{d (n-1)+2}})$,   
an $\m{M}'$-automaton $\m{A}_{(q'',p'')}$ such that $|\m{A}_{(q'',p'')}|=O(|\m{M}|^{2^{d (n-1)+2}})$ and $L_{\m{M}'}(\m{A}_{(q'',p'')})= Pre_{\m{T}(\m{M}')}^*(\{((q'',p''),\bot,\ldots,\bot)\})$.

\medskip

From the $\m{M}'$-automata $\m{A}_{(q'',p'')}$  and the $\m{M}$-automaton $\m{A}'$, we can  construct an $\m{M}$-automaton $\m{A}$ such that $(q',w_1,\ldots,w_n) \in L_{\m{M}}(\m{A})$ if and only if  there are $q'' \in Q$ and $p', p'' \in Q_{\m{A}}$ such that: $(1)$ $q'' \lby{\bot^{n-1}}{}_{\m{T}({\m{A}'})}^* p''$, $(2)$ $((q',p'),w_1,\ldots,w_{n-1}) \in L_{\m{M}'}(\m{A}_{(q'',p'')})$,  and  $(3)$
  $p' \lby{w_n}{}_{\m{T}({\m{A}'})}^* p$ for some $p \in F_{\m{A}'}$. Observe that such an automaton $\m{A}$ of the  size  $O(|\mathcal{M}|^{2^{d n}})$ (by taking $d$ as big as needed)  is effectively constructible from $\m{A}_{(q'',p'')}$  and $\m{A}'$ using standard automata operations.  Moreover, we have:
  
  \begin{lem}
  \label{lemm-sec6-proof}
  $L_{\m{M}}(\m{A})= Pre_{\m{T}(\m{M})}^*(\{(q,\bot,\ldots,\bot)\})$.
  \end{lem}

\begin{proof}
{\bf ($\subseteq$)}  Let $(q',w_1,\ldots,w_n) \in L_{\m{M}}(\m{A})$. Then, there are  $q'' \in Q$ and $p', p'' \in Q_{\m{A}}$ such that: $(1)$ $q'' \lby{\bot^{n-1}}{}_{\m{T}({\m{A}'})}^* p''$, $(2)$ $((q',p'),w_1,\ldots,w_{n-1}) \in L_{\m{M}'}(\m{A}_{(q'',p'')})$, and  $(3)$
  $p' \lby{w_n}{}_{\m{T}({\m{A}'})}^* p$ for some $p \in F_{\m{A}'}$.

  So, we can apply Lemma \ref{ref-lemma-sec6} to the run  $ ((q',p'),w_1,\ldots,w_{n-1})\, \lby{\varsigma}{}_{\m{T}(\m{M}')}^*\, ((q'',p''), \bot,\ldots,\bot)$ to show that there is $v\in \Gamma^*$ such that $(q',w_1,\ldots,w_{n-1},\bot)\, \lby{\tau}{}_{\m{T}(\m{M}_{[1,n[})}^*\, (q'', \bot,\ldots,\bot,v)$ and $p'' \, \lby{v}{}_{\m{T}(\m{A}')}^* \, p'$. 
  Thus, we have   $(q',w_1,\ldots,w_{n-1},w_n)\, \lby{\tau}{}_{\m{T}(\m{M})}^*\, (q'', \bot,\ldots,\bot,vw_n)$.

Now, we can use the runs $q'' \lby{\bot^{n-1}}{}_{\m{T}({\m{A}'})}^* p''$, $p'' \, \lby{v}{}_{\m{T}(\m{A}')}^* \, p'$, and  $p' \lby{w_n}{}_{\m{T}({\m{A}'})}^* p$  to show that $(q'',\bot,\ldots,\bot,vw_n) \in L_{\m{M}}(\m{A'})$. This implies that $(q'',\bot, \ldots,\bot, v w_n)\, \lby{\tau'}{}_{\m{T}(\m{M})}^*\,(q,\bot,\ldots,\bot)$.

Hence, we have   $(q',w_1,\ldots,w_{n})\in Pre_{\m{T}(\m{M})}^*(\{(q,\bot,\ldots,\bot)\})$ and therefore  $L_{\m{M}}(\m{A})\subseteq Pre_{\m{T}(\m{M})}^*(\{(q,\bot,\ldots,\bot)\})$.

\medskip
\noindent
{\bf ($\supseteq$)}
 Let $(q',w_1,\ldots,w_n) \in Pre_{\m{T}(\m{M})}^*(\{(q,\bot,\ldots,\bot)\})$. Then, there are $q'' \in Q$, $v \in \Gamma^*$, and $\tau, \tau' \in \Sigma^*$ such that:

$$(q',w_1,\ldots,w_n)\, \lby{\tau}{}_{\m{T}(\m{M}_{[1,n[})}^*\, (q'',\bot,\ldots,\bot, v w_n)\, \lby{\tau'}{}_{\m{T}(\m{M})}^*\,(q,\bot,\ldots,\bot) $$

Since  $(q'', \bot, \ldots, \bot, v w_n) \lby{\tau'}{}^*_{\mathcal{T}(\mathcal{M})} (q, \bot, \ldots, \bot)$,  we have $(q'', \bot , \ldots, \bot, v w_n) \in L_{\m{M}}(\m{A}')$. This implies that there are $p',p'' \in Q_{\m{A}'}$ and $p \in F_{\m{A}'}$ such that $q'' \lby{\bot^{n-1}}{}_{\m{T}({\m{A}'})}^* p''$, $p'' \, \lby{v}{}_{\m{T}(\m{A}')}^* \, p'$, and  $p' \lby{w_n}{}_{\m{T}({\m{A}'})}^* p$. 

On the other hand,  we can show  $(q',w_1,\ldots,w_{n-1},\bot)\, \lby{\tau}{}_{\m{T}(\m{M}_{[1,n[})}^*\, (q'', \bot,\ldots,\bot,v)$ since we have $(q',w_1,\ldots,w_n)\, \lby{\tau}{}_{\m{T}(\m{M}_{[1,n[})}^*\, (q'', \bot,\ldots,\bot,vw_n)$. 

Then, 
 we can  apply Lemma \ref{ref-lemma-sec6} to  $(q',w_1,\ldots,w_{n-1},\bot)\, \lby{\tau}{}_{\m{T}(\m{M}_{[1,n[})}^*\, (q'', \bot,\ldots,\bot,v)$ and $p'' \, \lby{v}{}_{\m{T}(\m{A}')}^* \, p'$ to show that   $ ((q',p'),w_1,\ldots,w_{n-1})\, \lby{\varsigma}{}_{\m{T}(\m{M}')}^*\, ((q'',p''), \bot,\ldots,\bot)$. This implies that $((q',p'),w_1,\ldots,w_{n-1}) \in L_{\m{M}'}(\m{A}_{(q'',p'')})$. Now, we can use the definition of the $\m{M}$-automaton $\m{A}$ to show that $(q',w_1,\ldots,w_n) \in L_{\m{M}}(\m{A})$ since we have $q'' \lby{\bot^{n-1}}{}_{\m{T}({\m{A}'})}^* p''$,  $((q',p'),w_1,\ldots,w_{n-1}) \in L_{\m{M}'}(\m{A}_{(q'',p'')})$, and 
  $p' \lby{w_n}{}_{\m{T}({\m{A}'})}^* p$ with  $p \in F_{\m{A}'}$. Hence, we have 
$L_{\m{M}}(\m{A})\supseteq Pre_{\m{T}(\m{M})}^*(\{(q,\bot,\ldots,\bot)\})$. \end{proof}

%%%%%%%%%%%%%%%%%%%%%%%%%%%%%%%%%%%%%%%%%
%%%%%%%%%%%%%%%%%%%%%%%%%%%%%%%%%%%%%%%%%%%
%%%%%%%%%%%%%%%%%%%%%%%%%%%%%%%%%%%%%%%%%%

\medskip
\noindent
This terminates the proof of Theorem \ref{prestar-regularity}.   \end{proof}

As an immediate consequence of Theorem \ref{prestar-regularity} and Lemma \ref{lemm.reduce-sec4}, we obtain:

\begin{thm}
\label{coro.prestar-regularity}
Let $\m{M}=(n,Q,\Sigma,\Gamma,\Delta,q_{0},\gamma_{0},F)$ be an OMPA and  $\m{A}'$ be an  $\m{M}$-automaton. Then, it is possible to construct, in time   $O((|\m{M}| \cdot |\m{A}'|)^{2^{d n}})$ where $d$ is a constant,  
an $\m{M}$-automaton $\m{A}$ such that $|\m{A}|=O((|\m{M}| \cdot |\m{A}'|)^{2^{d n}})$ and $L_{\m{M}}(\m{A})= Pre_{\m{T}(\m{M})}^*(L_{\m{M}}(\m{A}'))$.

\end{thm}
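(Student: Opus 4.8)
The plan is to obtain this theorem as a routine composition of the reduction in Lemma~\ref{lemm.reduce-sec4} with the special-case construction of Theorem~\ref{prestar-regularity}. First I would apply Lemma~\ref{lemm.reduce-sec4} to the OMPA $\m{M}$ and the $\m{M}$-automaton $\m{A}'$, obtaining an OMPA $\m{M}'=(n,Q'\cup\{q_f\},\Sigma,\Gamma',\Delta',q_0,\gamma_0,F)$ of size $O(|\m{M}|\cdot|\m{A}'|)$ together with the equivalence: for every $c\in\mathit{Conf}(\m{M})$, we have $c\in Pre_{\m{T}(\m{M})}^*(L_{\m{M}}(\m{A}'))$ if and only if $c\in Pre_{\m{T}(\m{M}')}^*(\{(q_f,\bot,\ldots,\bot)\})$. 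This step is exactly what is needed, since it replaces the arbitrary recognizable target set $L_{\m{M}}(\m{A}')$ by the single configuration $(q_f,\bot,\ldots,\bot)$ with all stacks empty, which is the shape of target accepted by Theorem~\ref{prestar-regularity}.

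Next I would invoke Theorem~\ref{prestar-regularity} on the OMPA $\m{M}'$ and its distinguished state $q_f$. Since $\m{M}'$ still has $n$ stacks and $|\m{M}'|=O(|\m{M}|\cdot|\m{A}'|)$, this produces, in time $O(|\m{M}'|^{2^{dn}})$, an $\m{M}'$-automaton $\m{B}$ of the same size bound with $L_{\m{M}'}(\m{B})=Pre_{\m{T}(\m{M}')}^*(\{(q_f,\bot,\ldots,\bot)\})$. Finally, I would obtain the desired $\m{M}$-automaton $\m{A}$ from $\m{B}$ by keeping only the initial states that lie in $Q$ and the transitions labelled by symbols of $\Gamma$, discarding the auxiliary states of $Q_{\m{M}}\times[1,n+1]$ and the bookkeeping symbol $\sharp$ introduced by Lemma~\ref{lemm.reduce-sec4}. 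Because every configuration in $Pre_{\m{T}(\m{M})}^*(L_{\m{M}}(\m{A}'))$ is a configuration of $\m{M}$, hence has a state in $Q$ and stack contents over $\Gamma$, the equivalence from Lemma~\ref{lemm.reduce-sec4} guarantees that this restriction recognizes exactly $Pre_{\m{T}(\m{M})}^*(L_{\m{M}}(\m{A}'))$.

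For the complexity I would simply chain the two bounds: substituting $|\m{M}'|=O(|\m{M}|\cdot|\m{A}'|)$ into $O(|\m{M}'|^{2^{dn}})$ yields $O((|\m{M}|\cdot|\m{A}'|)^{2^{dn}})$ for both the running time and the size of $\m{B}$, and the final restriction does not increase either quantity. The multiplicative constant hidden in $|\m{M}'|=O(|\m{M}|\cdot|\m{A}'|)$ only inflates the base by a constant factor, which—raised to the power $2^{dn}$—can be absorbed into a slightly larger constant $d$ exactly as in the proofs of Theorems~\ref{ompa-complexity} and~\ref{prestar-regularity}; taking $d$ as big as needed gives the claimed bounds $|\m{A}|=O((|\m{M}|\cdot|\m{A}'|)^{2^{dn}})$ and running time $O((|\m{M}|\cdot|\m{A}'|)^{2^{dn}})$.

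I do not expect a genuine obstacle here, as the argument is a direct composition of already-established results; the only point demanding care is the correctness of the final restriction. One must verify that passing from $\m{B}$—which also accepts intermediate configurations of $\m{M}'$ whose states lie in $Q_{\m{M}}\times[1,n+1]$ or whose first stack carries the symbol $\sharp$—to the restricted automaton $\m{A}$ neither drops nor spuriously adds any $\m{M}$-configuration. This is immediate precisely because Lemma~\ref{lemm.reduce-sec4} states its equivalence for $c\in\mathit{Conf}(\m{M})$ only, so isolating the initial states in $Q$ and the alphabet $\Gamma$ selects exactly the $\m{M}$-configurations among the predecessors computed for $\m{M}'$, leaving the recognized set unchanged.
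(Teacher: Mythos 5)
Your proposal is correct and matches the paper's own argument: the paper derives Theorem~\ref{coro.prestar-regularity} precisely as an immediate consequence of Lemma~\ref{lemm.reduce-sec4} (replacing the recognizable target $L_{\m{M}}(\m{A}')$ by the single all-empty-stacks configuration of the enlarged OMPA) composed with Theorem~\ref{prestar-regularity}, with the constant $d$ absorbing the polynomial blowup. Your added care about restricting the resulting automaton back to states in $Q$ and the alphabet $\Gamma$ (discarding $\sharp$ and the auxiliary states) is a detail the paper leaves implicit, and you handle it correctly.
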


We  can extend the previous result to show that the operator $\mathit{Pre}^+$   preserves also   recognizability. 

\begin{thm}
\label{coro.prestar+regularity}
Let $\m{M}=(n,Q,\Sigma,\Gamma,\Delta,q_{0},\gamma_{0},F)$ be an OMPA and  $\m{A}'$ be an  $\m{M}$-automaton. Then, it is possible to construct, in time   $O((|\m{M}|\cdot |\m{A}'|)^{2^{d n}})$ where $d$ is a constant,  
an $\m{M}$-automaton $\m{A}$ such that $|\m{A}|=O((|\m{M}|\cdot |\m{A}'|)^{2^{d n}})$ and $L_{\m{M}}(\m{A})= Pre_{\m{T}(\m{M})}^+(L_{\m{M}}(\m{A}'))$.

\end{thm}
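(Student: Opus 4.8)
The plan is to reduce $Pre^+$ to the already-established $Pre^*$ (Theorem \ref{coro.prestar-regularity}) through the elementary identity $Pre^+ = Pre^* \circ Pre$. Although the paper defines $Pre_{\m{T}}^+ = Pre_{\m{T}} \circ Pre_{\m{T}}^*$, both compositions describe exactly the set of configurations admitting a run of length at least one into the target set, so $Pre_{\m{T}(\m{M})}^+(C) = Pre_{\m{T}(\m{M})}^*\big(Pre_{\m{T}(\m{M})}(C)\big)$ for every $C$. Hence it suffices to (i) compute a recognizable representation of the one-step predecessor set $Pre_{\m{T}(\m{M})}(L_{\m{M}}(\m{A}'))$ with only a \emph{polynomial} blow-up, and then (ii) feed the resulting $\m{M}$-automaton into Theorem \ref{coro.prestar-regularity}.

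For step (i) I would isolate a lemma: for any OMPA $\m{M}$ and any $\m{M}$-automaton $\m{C}$ one can construct, in polynomial time, an $\m{M}$-automaton $\m{D}$ with $L_{\m{M}}(\m{D}) = Pre_{\m{T}(\m{M})}(L_{\m{M}}(\m{C}))$ and $|\m{D}| = O(\mathrm{poly}(|\m{M}|,|\m{C}|))$. The construction keeps $I_{\m{D}} = Q$ and, from an initial state $q$, nondeterministically guesses a single transition $t$ of $\m{M}$ and then simulates $\m{C}$ on the unique one-step successor word while reading the predecessor encoding $w_1\cdots w_n$ of the candidate configuration. Since $\m{M}$ is in normal form, $t$ has one of the two admissible shapes; in both the pushed words $\alpha_i$ have length at most two, the popped symbol lies on the top of a single stack, and the bottom markers $\bot$ (exactly one per stack) let the automaton detect stack boundaries. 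Concretely, the popped top symbol is consumed by $\m{D}$ but corresponds to an $\epsilon$-move of $\m{C}$, whereas each pushed short word $\alpha_i$ is absent from the predecessor input and is therefore simulated through $\epsilon$-moves of $\m{C}$ inserted exactly at the matching boundary (at the start of the relevant stack, or just after the $\bot$ closing the preceding one). Tracking the current stack index in $[1,n]$, the guessed transition $t \in \Delta$, and the current $\m{C}$-state gives a state space of size $O(|\m{C}|\cdot|\Delta|\cdot n)$, hence the polynomial bound. This one-step backward construction is the technical heart and the main obstacle: one must verify that the insertion points of the $\alpha_i$ and the deletion of the popped symbol are correctly aligned with the $\bot$-delimited blocks for each transition type, in particular for a pop from stack $i$, where the leading $\bot^{i-1}$ read by $\m{D}$ must be synchronized with the $\bot^{i-1}$ read by $\m{C}$ \emph{after} the virtual push of $\gamma'$ on the first stack.

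For step (ii) I would apply Theorem \ref{coro.prestar-regularity} to $\m{D}$, obtaining an $\m{M}$-automaton $\m{A}$ with $L_{\m{M}}(\m{A}) = Pre_{\m{T}(\m{M})}^*(L_{\m{M}}(\m{D}))$, which by the identity above equals $Pre_{\m{T}(\m{M})}^+(L_{\m{M}}(\m{A}'))$. For the complexity, write $|\m{D}| = O((|\m{M}|\cdot|\m{A}'|)^{c})$ for the constant $c$ from step (i); Theorem \ref{coro.prestar-regularity} then yields $|\m{A}| = O((|\m{M}|\cdot|\m{D}|)^{2^{dn}}) = O((|\m{M}|\cdot|\m{A}'|)^{(c+1)2^{dn}})$, and since $(c+1)2^{dn} = 2^{dn + \log_2(c+1)} \le 2^{d'n}$ for all $n \ge 1$ with $d' = d + \log_2(c+1)$, the bound $O((|\m{M}|\cdot|\m{A}'|)^{2^{d'n}})$ of the statement follows by enlarging the constant $d$. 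The same estimate bounds the running time, which completes the argument.
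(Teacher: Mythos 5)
Your proposal is correct, but it takes a genuinely different route from the paper. Both arguments reduce $\mathit{Pre}^+$ to Theorem \ref{coro.prestar-regularity}, yet they install the mandatory ``at least one step'' on opposite sides of the reduction. The paper modifies the \emph{system}: it builds an OMPA $\m{M}'$ that has, for each state $q$, a fresh copy $q_{\mathit{copy}}$ with no outgoing transitions, duplicates every transition of $\m{M}$ so that one variant ends in the copy of its target state, and relabels the initial states of $\m{A}'$ by these copies to obtain $\m{A}''$; since $\m{A}''$ only accepts configurations whose state is a copy state, any accepted run from a configuration of $\mathit{Conf}(\m{M})$ must use at least one genuine transition, whence $\mathit{Pre}^+_{\m{T}(\m{M})}(L_{\m{M}}(\m{A}')) = \mathit{Pre}^*_{\m{T}(\m{M}')}(L_{\m{M}'}(\m{A}'')) \cap \mathit{Conf}(\m{M})$, and a single application of Theorem \ref{coro.prestar-regularity} to $\m{M}'$ finishes the proof. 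You instead modify the \emph{symbolic representation}: using the (correct) identity $\mathit{Pre}^+ = \mathit{Pre}^* \circ \mathit{Pre}$ (equivalent to the paper's $\mathit{Pre} \circ \mathit{Pre}^*$, as both describe runs of length at least one), you first build an $\m{M}$-automaton $\m{D}$ for the one-step predecessors and then apply Theorem \ref{coro.prestar-regularity} to the unchanged $\m{M}$. The trade-off is clear: the paper's copy-state trick makes correctness essentially immediate and never touches the word encoding $w_1\cdots w_n$ of configurations, at the cost of having to observe that $\m{M}'$ is still a normal-form OMPA of linear size; your route isolates a reusable lemma of independent interest (one-step backward images of recognizable sets are recognizable with polynomial blow-up), but its correctness rests entirely on the $\bot$-block bookkeeping you sketch --- inserting the pushed words $\alpha_i$ via $\epsilon$-moves of $\m{C}$ at the correct block boundaries, skipping the popped top symbol, and synchronizing the $\bot^{i-1}$ prefix with the virtual push of $\gamma'$ for pop transitions on stacks $2,\dots,n$. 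These details do go through under the normal-form assumption (note that a popped $\bot$ forces the corresponding block to be exactly $\bot$ and the pushed word to end in $\bot$, so your boundary detection remains sound), and your constant-juggling $c\cdot 2^{dn} \le 2^{(d+\log_2 c)n}$ for $n \ge 1$ correctly absorbs the polynomial preprocessing into the exponent, so the stated complexity bound holds in both approaches.
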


\begin{proof}
 In the following, we show that computing   the set $\mathit{Pre}^+_{\m{T}(\m{M})}(L_{\m{M}}(\m{A}'))$ can be reduced to computing the set $\mathit{Pre}^*_{\m{T}(\m{M'})}(L_{\m{M'}}(\m{A''}))$ for an OMPA $\m{M}'$ and an $\m{M}$-automaton  $\m{A}''$ such that $|\m{M}'|=O(|\m{M}|)$ and $|\m{A}''|=O(|\m{A}|)$. Intuitively, the OMPA has the same  stack and input alphabets as the ones of $\m{M}$. Corresponding to each state $q$ of $\m{M}$,   $\m{M}'$ has  $q$ and $q_{\mathit{copy}}$ as two states where $q_{\mathit{copy}}$ is a fresh symbol which was not used neither in the definition of $\m{M}$ nor in the definition of $\m{A}$. For any transition of the form $\co q,\gamma_1,\ldots,\gamma_n\cf \by{a}{}_{\m{M}} \co q',\alpha_1,\ldots,\alpha_n\cf$, $\m{M'}$ has two transitions $\co q,\gamma_1,\ldots,\gamma_n\cf \by{a}{}_{\m{M}'} \co q',\alpha_1,\ldots,\alpha_n\cf$ and $\co q,\gamma_1,\ldots,\gamma_n\cf \by{a}{}_{\m{M}'} \co q'_{\mathit{copy}},\alpha_1,\ldots,\alpha_n\cf$.  Any computation of $\m{M}'$ can be divided in two phases. In the first phase $\m{M}'$ mimics the behavior of the  OMPA $\m{M}'$ by performing the same sequence of transitions. In the second phase, the OMPA $\m{M}'$ performs a  transition from a state $q \in Q$  of $\m{M}$   to a state $q'_{\mathit{copy}}$ with $q' \in Q$ and halts. Formally, $\m{M}'$ is defined by the tuple $(n,Q \cup Q_{\mathit{copy}}, \Sigma, \Gamma,\Delta \cup  \Delta',q_0,\gamma_0,F)$ where $Q_{\mathit{copy}}=\{q_{\mathit{copy}} \,|\, q \in Q\}$ and $\Delta'= \{ \co q,\gamma_1,\ldots,\gamma_n\cf \by{a}{}_{\m{M}'} \co q'_{\mathit{copy}},\alpha_1,\ldots,\alpha_n\cf \,|\,   \co q,\gamma_1,\ldots,\gamma_n\cf \by{a}{}_{\m{M}} \co q',\alpha_1,\ldots,\alpha_n\cf\}$.
 
 Let $\m{A}'=(Q_{\m{M}},\Gamma,\Delta_{\m{M}},I_{\m{M}},F_{\m{M}})$ be the $\m{M}$-automaton. We assume here that $\m{A}$ has no transition leading to an initial state. 
 Now, we can construct the $\m{M}'$-automaton $\m{A''}=(Q_{\m{M}'},\Gamma,\Delta_{\m{M}'},I_{\m{M}'},F_{\m{M}'})$ from  the $\m{M}$-automaton $\m{A}'$  as follows: The set of states of $\m{A}''$  is the union of the set of states of $\m{A}'$ and the set of states of $\m{M}'$ (i.e., $Q_{\m{M}'}=Q_{\m{M}} \cup Q_{\mathit{copy}}$). The set of transitions of $\m{A}''$ contains any transition of $\m{A}$ that  does not involve a state of $Q$ (i.e., $ (\Delta_{\m{M}} \setminus (Q \times \Gamma_{\epsilon} \times Q_{\m{M}})) \subseteq \Delta_{\m{M}'}$). Moreover, corresponding to any transition  of the form $q \by{a}_{\m{A}'} p$ where $q \in Q$, the automaton $\m{A}''$ has a transition of the form  $q_{\mathit{copy}} \by{a}_{\m{A}''} p$. That is, the automaton $\m{A}''$ is precisely $\m{A}'$ where any initial state $q \in Q$ is relabeled by its copy $q_{\mathit{copy}}$. (Observe that there is no transition from/to a state $q \in Q$  in $\m{A}''$.) Since $\m{A}''$ is an $\m{M}'$-automaton, we have $I_{\m{M}'}=Q \cup Q_{\mathit{copy}}$. We have also $F_{\m{M}'}=F_{\m{M}}$.

Then it is easy to see that the set  $\mathit{Pre}^+_{\m{T}(\m{M})}(L_{\m{M}}(\m{A}'))$ is precisely the set $\mathit{Pre}^*_{\m{T}(\m{M}')}(L_{\m{M}'}(\m{A}'')) \cap \mathit{Conf}(\m{M})$. Thus, we can apply Theorem  \ref{coro.prestar-regularity} to show  that it is   possible to construct, in time   $O((|\m{M}|\cdot |\m{A}'|)^{2^{d n}})$ where $d$ is a constant,   an $\m{M}$-automaton $\m{A}$ such that $|\m{A}|=O((|\m{M}|\cdot |\m{A}'|)^{2^{d n}})$ and $L_{\m{M}}(\m{A})= Pre_{\m{T}(\m{M})}^+(L_{\m{M}}(\m{A}'))$.
\end{proof}

%%%%%%%%%%%%%%%%%%%%%%%%%%%%%%%%%%%%%%%%%%%%%%%%%%
%%%%%%%%%%%%%%%%%%%%%%%%%%%%%%%%%%%%%%%%%%%%%%%%%%
%%%%% LTL
%%%%%%%%%%%%%%%%%%%%%%%%%%%%%%%%%%%%%%%%%%%%%%%%%%
%%%%%%%%%%%%%%%%%%%%%%%%%%%%%%%%%%%%%%%%%%%%%%%%%%
\section{Linear-Time  Global Model Checking}
In this section, we show that the model-checking problem  of $\omega$-regular properties for OMPA is decidable and in 2ETIME. Observe that this result subsumes the 2ETIME upper  bound obtained for the emptiness problem of OMPA (see Theorem \ref{ompa-complexity}). In fact, we can see  the emptiness problem (i.e., the reachability problem) of OMPA as a particular instance of the LTL-model checking problem of OMPA for which the decision procedure (provided in Section \ref{chap2-dir1-2ETIME}) is simpler.

To prove the 2ETIME upper bound for the model-checking problem  of $\omega$-regular properties for OMPA,  we introduce the repeated state  reachability problem for OMPA.

We fix an OMPA   $\m{M}=(n,Q,\Sigma,\Gamma,\Delta,q_{0},\gamma_{0},F)$ for the rest of the paper such that $\Sigma=\Delta$ and $t=((q,\gamma_1,\ldots,\gamma_n),a,(q',\alpha_1,\ldots,\alpha_n))$  is in $\Delta$ if and only if $a=t$.

\subsection{The repeated state  reachability problem}

In the following, we are interested in  solving  {\em the repeated state  reachability problem} which consists in computing, for  a given   state  $q_f \in Q$, the  set of  all configurations  $c$ of  $\m{M}$ such that   there is an infinite run  of $\m{T}(\m{M})$ starting from $c$ that visits infinitely often the state $q_f$.

To this aim,  let us   introduce the following notation: For every $i \in [1,n]$, we denote by $\m{M}_{[1,i]}=(n,Q,\Sigma,\Gamma,\Delta_{[1,i]},q_{0},\gamma_{0},F)$  the OMPA built from $\m{M}$ by discarding   pop transitions  of $\m{M}$  over the  last $(n-i)$ stacks. Formally, we have $\Delta_{[1,i]}=\Delta \cap \big( \big(Q \times (\Gamma_{\epsilon})^{i} \times (\{\epsilon\})^{n-i} \big) \times \Sigma \times \big(Q \times (\Gamma^*)^n\big) \big)$.

 For every  $i \in [1,n]$, and every $(q,\gamma) \in Q \times (\Gamma \setminus \{\bot\})$, let $C_i^{(q,\gamma)}$ denote the set of all configurations $(q,w_1,\ldots,w_n) \in \mathit{Conf}({\m{M}})$ such that  $w_1= \cdots =w_{i-1}=\bot$ and $w_i= \gamma u$ for some $u \in  \mathit{Stack}(\m{M})$. Moreover,  let $c_i^{(q,\gamma)}$ be the configuration  $(q,w_1,\ldots,w_n) $  of $\m{M}$ such that   $w_i= \gamma \bot$  and $w_j=\bot$ for all $j \neq i$.
 
In the following, we show that detecting  an infinite computation of $\m{M}$  that visits infinitely often a state $q_f$   can be reduced to detecting an infinite computation  the form $\rho_1\cdot  \rho_2^{\omega}$  that eventually repeats the same sequence of transitions indefinitely and visits $q_f$ infinitely often  (and  where $\rho_1$ and $\rho_2$ are finite computations). Hence, a periodic computation is a run which, after a finite computation prefix $\rho_1$ (called stem),  ultimately repeats the same sequence of transitions  $ \rho_2$ (called lasso) over and over. Let us give some intuitions behind this reduction.

 Let us assume that  there is an infinite computation $\rho$ of $\m{T}(\m{M})$ starting from a configuration $c$ of $\m{M}$. Let $i$ be the maximal  index of the stack that is popped infinitely often. This means that, at some point of the computation, the stacks from $(n-i+1)$ to $n$ will never be popped. Let us concentrate on  the suffix of the computation $\rho$ which contains only push transitions on  the stacks  from $(n-i+1)$ to $n$.  Let    $c_1 c_2 \cdots $ be the sequence of configurations in this suffix of $\rho$ where  the first $(i-1)$ stacks  are empty. Applying a similar argument to the content  of the $i^{th}$-stack, along the sequence of configurations  $c_1 c_2 \cdots $,  as the one for standard pushdown automata \cite{BEM97}, we can deduce that the $i^{th}$-stack is {\em increasing}. This means that there are  indices  $j_1 <j_2$, a stack symbol $\gamma \in \Gamma$, and a state $q \in Q$ such that the configurations  $c_{j_1}$ and $c_{j_2}$ are  in $C_i^{(q,\gamma)}$ and the  symbol $\gamma$ at the top of the $i^{th}$-stack in $c_{j_1}$ and $c_{j_2}$ will never be popped. Furthermore, along the sub-computation $\rho_2$ from $c_{j_1}$ and $c_{j_2}$ the state $q_{f}$ is visited.  Observe that if we remove from the configuration $c_{j_1}$ all the stack symbols that will never be popped we obtain the configuration $c_i^{(q,\gamma)}$. Then, 
 the computation $\rho_2$ can be simplified, by dropping all the (useless) stack symbols that will never be popped from the configuration $c_{j_1}$, as follows : $c_i^{(q,\gamma)}  \lby{\tau_1}{}^+_{\m{T}(\m{M}_{[1,i]})}  (q_f, w_1, \ldots, w_2)  \lby{\tau_2}{}^*_{\m{T}(\m{M}_{[1,i]})} c'_{2}$ with $c'_2 \in C_i^{(q,\gamma)}$. This computation $\rho_2$ represents our lasso computation since from the configuration $c'_2$ we can repeat the sequence of transitions $\tau_1 \tau_2$ while visiting the same states (and in particular the state $q_f$).

 The existence of a such lasso is expressed by  the second item of Theorem \ref{sect-ltl-lemma} while the existence of a stem computation from the starting configuration $c$ to the configuration $c_{j_1} \in C_i^{(q,\gamma)}$ is stated by  the first  item of Theorem \ref{sect-ltl-lemma}.

  Then, the solution of the   {\em repeated state  reachability problem} is formally based on the following fact:

 \begin{thm}
 \label{sect-ltl-lemma}
 Let $c$ be a configuration of $\m{M}$  and $q_f$ be a state of $\m{M}$. There is an infinite run starting from $c$ that visits infinitely often the state $q_f$ if and only if there are  $i \in [1,n]$, $q \in Q$, and $\gamma \in \Gamma$ such that:

 \begin{enumerate}
 \item  $c \in \mathit{Pre}_{\m{T}(\m{M})}^*(C_i^{(q,\gamma)})$, and 
 \item $c_i^{(q,\gamma)} \in \mathit{Pre}_{\m{T}(\m{M}_{[1,i]})}^+\big( \mathit{Pre}_{\m{T}(\m{M}_{[1,i]})}^*(C_i^{(q,\gamma)}) \cap (\{q_f\} \times (\mathit{Stack}(\m{M}))^n)\big)$.
\end{enumerate}
 \end{thm}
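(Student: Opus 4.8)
The statement is an equivalence, so the plan is to prove the two implications separately, both resting on a single \emph{context-independence} property of $\m{M}_{[1,i]}$ that I would first isolate as an auxiliary lemma. Since $\m{M}_{[1,i]}$ never pops the stacks $i{+}1,\ldots,n$ and, by the normal form, a transition genuinely inspects the $i^{th}$ stack only when it pops a (non-$\bot$) symbol from it, any computation of $\m{M}_{[1,i]}$ that starts with the symbol $\gamma$ on top of the $i^{th}$ stack and never pops \emph{that} occurrence of $\gamma$ is insensitive both to the word lying below $\gamma$ on the $i^{th}$ stack and to the entire content of the stacks $i{+}1,\ldots,n$. Concretely: if $c_i^{(q,\gamma)}\lby{}{}_{\m{T}(\m{M}_{[1,i]})}^{*}d$ by a run that does not pop the initial $\gamma$, then replacing that $\gamma\bot$ by $\gamma w$ and prepending arbitrary words to the stacks $i{+}1,\ldots,n$ keeps the whole transition sequence firable and enlarges the end configuration accordingly; dually, such frozen context may always be erased. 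This lemma is the engine of both directions.

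For the forward implication, given an infinite run $\rho$ from $c$ visiting $q_f$ infinitely often, I would let $i$ be the largest index of a stack that is popped infinitely often along $\rho$. Beyond a finite prefix the stacks $i{+}1,\ldots,n$ are never popped again, so the corresponding suffix is a run of $\m{M}_{[1,i]}$ and still visits $q_f$ infinitely often. As every pop of the $i^{th}$ stack requires the stacks $1,\ldots,i{-}1$ to be empty and such pops are infinitely many, infinitely many configurations of this suffix lie in $\bigcup_{q,\gamma} C_i^{(q,\gamma)}$. I would then run the classical pushdown \emph{repeating head} argument of \cite{BEM97} on the $i^{th}$ stack---treating stacks $1,\ldots,i{-}1$ as auxiliary storage that is empty at these instants and stacks $i{+}1,\ldots,n$ as frozen side storage---to extract two configurations occurring at times $j_1<j_2$, both in $C_i^{(q,\gamma)}$ for a common state $q$ and top symbol $\gamma$, such that this top $\gamma$ is never popped after $j_1$ and $q_f$ is visited strictly between $j_1$ and $j_2$. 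The prefix of $\rho$ up to time $j_1$ gives item~(1). For item~(2) I would take the segment from $c_{j_1}$ through a state-$q_f$ configuration to $c_{j_2}$, erase (via the auxiliary lemma) the permanent context below the top $\gamma$ and the frozen stacks $i{+}1,\ldots,n$, and thereby obtain a run $c_i^{(q,\gamma)}\lby{}{}_{\m{T}(\m{M}_{[1,i]})}^{+}\widehat{m}\lby{}{}_{\m{T}(\m{M}_{[1,i]})}^{*}\widehat{c}$ with $\widehat m$ in state $q_f$ and $\widehat c\in C_i^{(q,\gamma)}$, which is exactly item~(2) (and is witnessed by a run that never pops the initial $\gamma$).

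For the converse, item~(1) provides a finite run $c\lby{}{}_{\m{T}(\m{M})}^{*}d_0$ with $d_0\in C_i^{(q,\gamma)}$, and item~(2) provides a nonempty run $\pi$ of the form $c_i^{(q,\gamma)}\lby{}{}_{\m{T}(\m{M}_{[1,i]})}^{+}(\text{a state-}q_f\text{ configuration})\lby{}{}_{\m{T}(\m{M}_{[1,i]})}^{*}c'$ with $c'\in C_i^{(q,\gamma)}$, which I would take to be increasing, i.e.\ not popping the initial $\gamma$. Because $\pi$ leaves the occurrence of $\gamma$ it started with in place, the auxiliary lemma lets me fire the same transition sequence from any configuration of $C_i^{(q,\gamma)}$, landing again in $C_i^{(q,\gamma)}$ while passing through state $q_f$. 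Iterating $\pi$ from $d_0$ then yields an infinite run $d_0\lby{}{}_{\m{T}(\m{M})}^{+}d_1\lby{}{}_{\m{T}(\m{M})}^{+}d_2\cdots$ with every $d_k\in C_i^{(q,\gamma)}$ and $q_f$ visited in each segment; prepending $c\lby{}{}_{\m{T}(\m{M})}^{*}d_0$ gives the desired infinite run from $c$ visiting $q_f$ infinitely often.

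The main obstacle is the repeating-head extraction in the forward direction: one must set up the bookkeeping of \emph{permanent} symbols of the $i^{th}$ stack so that the two selected configurations share both state and exposed top symbol \emph{and} sandwich an occurrence of $q_f$, while guaranteeing that the selected segment never pops the chosen $\gamma$. This permanence is precisely what makes the context-independence lemma applicable and what licenses both the erasing of context (forward) and the pumping (converse). The choice of $\m{M}_{[1,i]}$ rather than $\m{M}$ is essential here, since discarding the pops of the stacks $i{+}1,\ldots,n$ is what guarantees that the working region of the $i^{th}$ stack behaves independently of the surrounding frozen context, so that arbitrary context may be freely added below or stripped from it.
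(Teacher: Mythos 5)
Your forward direction is essentially the paper's own proof: you pick $i$ maximal among the stacks popped infinitely often, pass to the suffix of the run that lies in $\m{T}(\m{M}_{[1,i]})$, extract by the minimal-stack-height and pigeonhole argument of \cite{BEM97} two configurations of a common $C_i^{(q,\gamma)}$ whose exposed top $\gamma$ is never popped later and which sandwich a visit to $q_f$, and erase the frozen context to land on $c_i^{(q,\gamma)}$. The context-independence lemma you isolate is exactly what the paper uses implicitly when it ``drops all the (useless) stack symbols'', and it is correct as you state it.

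The genuine gap is in the converse, at the words ``which I would take to be increasing, i.e.\ not popping the initial $\gamma$''. Item (2) does not license this normalization, and it is not harmless: increasingness is exactly what your pumping needs, and it can fail. Concretely, take $n=2$, $\Gamma=\{\gamma,a,\bot\}$, and the normal-form transitions $\co q,\gamma,\epsilon\cf \ra \co q_1,\epsilon,\epsilon\cf$, $\co q_1,\bot,\epsilon\cf \ra \co q_f,\bot,\epsilon\cf$, $\co q_f,\bot,\epsilon\cf \ra \co q_3,a\bot,\epsilon\cf$, $\co q_3,a,\epsilon\cf \ra \co q,\gamma a,\epsilon\cf$ (all pop stack $1$, so all belong to $\m{M}_{[1,1]}$). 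With $i=1$ and $c=c_1^{(q,\gamma)}=(q,\gamma\bot,\bot)$, item (1) is trivial since $c \in C_1^{(q,\gamma)}$, and item (2) holds: the first two transitions take $c_1^{(q,\gamma)}$ to $(q_f,\bot,\bot)$, and the last two continue to $(q,\gamma a\bot,\bot)\in C_1^{(q,\gamma)}$. Yet the unique maximal run from $c$ is finite---after one lap it reaches $(q,\gamma a\bot,\bot)$, pops $\gamma$, and is stuck at $(q_1,a\bot,\bot)$ because the $\bot$-test now faces $a$---so there is no infinite run at all, and also no increasing witness of item (2) exists, since every run from $c_1^{(q,\gamma)}$ must begin by popping $\gamma$. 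Note that the obstruction is specific to $i=1$: for $i\geq 2$ your replay goes through even without increasingness, because no transition of $\m{M}_{[1,i]}$ can test $\bot$ on stack $i$ (such a test occurs only in pops of stacks of index greater than $i$, which $\m{M}_{[1,i]}$ discards), so any item-(2) witness can be re-fired from an arbitrary configuration of $C_i^{(q,\gamma)}$. Hence your converse needs either a strengthened item (2) recording that the final $\gamma$-topped configuration is reached without popping the initial $\gamma$ (this is what the forward construction actually delivers), or a separate treatment of $i=1$. In fairness, the paper proves this direction with the single sentence ``We can use (1) and (2) of Theorem \ref{sect-ltl-lemma} to construct a run \ldots'', which hides exactly the same difficulty; your write-up has the merit of making the needed property explicit, but assuming it is precisely the missing step, and it cannot be discharged from the statement as given.
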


\begin{proof}

$(\Ra):$ Let $\rho=c_0 t_0 c_1 t_1 c_2 t_2 \cdots$ be an infinite computation  of $\m{T}(\m{M})$ starting from the configuration $c_0=c$ of $\m{M}$. For every $j \in \mathbb{N}$,  $c_j$ is a configuration of $\m{M}$ and $t_j$ is a transition of $\m{M}$ such that $ c_j \,\by{t_{j}}{}_{\m{T}(\m{M})} \, c_{j+1}$ (recall that $\Sigma=\Delta$).  Let $i \in [1,n]$ be the maximal index 
 such that for every $j \in \mathbb{N}$, there is $k_j \geq j$ such that  $t_{k_j}$ is a pop transition over the $i^{th}$ stack of $\m{M}$. This implies that  $c_{k_j}$  is in $Q \times (\{\bot\})^{i-1} \times ((\Gamma \setminus \{\bot\})^* \cdot \mathit{Stack}(\m{M})) \times (\mathit{Stack}(\m{M}))^{n-i}$ (i.e., the first $(i-1)$-stacks are empty) since $t_{k_j}$ is a pop transition  from the $i^{th}$ stack of $\m{M}$.

From the  definition of $i$, there is $r \in \mathbb{N}$ such that for every $h\geq r$, there is $d_h \in [1,i]$ such that  the transition $t_h$ is a pop transition over the  stack $d_h$ of $\m{M}$ (i.e., the transition $t_h$ is not  a pop transition  from  the stack from $(n-i+1)$ to $n$).  This implies that for every $h\geq r$, we have $c_{h} \by{t_h}{}_{\m{T}(\m{M}_{[1,i]})} \, c_{h+1}$.

Then,  we construct a sequence  $\pi= c_{j_0} c_{j_1} c_{j_2} \cdots$ of configurations of $\m{M}$   as follows: $c_{j_0}$ is the first configuration of $\rho$ such that $j_0 \geq r$ and $t_{j_0}$ is a pop transition over the $i^{th}$-stack of $\m{M}$, for every $\ell >0$, $c_{j_{\ell}}$ is the first configuration of $\rho$ such that $j_{\ell}> j_{\ell-1}$ and $t_{j_{\ell}}$ is a pop transition over the $i$-stack of $\m{M}$. Recall that, by definition, we have for every $l \in \mathbb{N}$, $c_{j_{l}}$  is in $Q \times (\{\bot\})^{i-1} \times ((\Gamma \setminus \{\bot\})^* \cdot \mathit{Stack}(\m{M})) \times (\mathit{Stack}(\m{M}))^{n-i}$ (i.e.,  the first $(i-1)$ stacks are empty).

 Now, for every $l \geq 0$, let $\pi^{(l)}$ be the suffix of $\pi$ starting at $c_{j_l}$, and let $m^{(l)}$ be the minimal length of the configurations of $\pi^{(l)}$, where the length of a configuration is defined as the length of its $i^{th}$ stack.

Construct a subsequence $\pi'=c_{z_0} c_{z_1} c_{z_2} \cdots  $ of $\pi$ as follows: $c_{z_0}$ is the first configuration of $\pi$ of length $m^{(0)}$; for every $l >0$, $c_{z_l} $ is the first configuration of $\pi^{(z_{l-1}+1)}$ of length $m^{(z_{l-1}+1)}$.

 Since the number of states and stack symbols is finite, there exists a subsequence $\pi''=c_{x_0} c_{x_1} c_{x_2} \cdots  $ of $\pi'$ whose elements have all the same state $q$, and the same symbol $\gamma$ on the top of the $i^{th}$ stack. Observe that $c_{x_0}, c_{x_1}, c_{x_2}, \ldots$ are in $C_i^{(q,\gamma)}$.

 Since $\rho$ is an accepting run, there is an index $b\geq 1$ and a configuration $c_{q_f}$ with state $q_f$ such that:
 $$ c_0 \lby{\tau}{}_{\mathcal{T}(\mathcal{M})}^* \, c_{x_0} \, \lby{\tau'}{}_{\mathcal{T}(\mathcal{M})}^+ \, c_{q_f} \lby{\tau''}{}_{\m{T}(\m{M})}^*\, c_{x_b}$$

 Since $c_0=c$ and $c_{x_0} \in C_i^{(q,\gamma)}$, we have $c \in \mathit{Pre}_{\m{T}(\m{M})}^*(C_i^{(q,\gamma)})$, and  so $(1)$ holds.

 Due to the definition of $\pi$ (and so, $\pi'$ and $\pi''$), we have 
 
 $$c_{x_0} \lby{\tau'}{}_{\mathcal{T}(\mathcal{M}_{[1.i]})}^+  c_{q_f} \lby{\tau''}{}_{\m{T}(\m{M}_{[1,i]})}^*\, c_{x_b}$$

Since $c_{x_0} \in Q \times (\{\bot\})^{i-1} \times ((\Gamma \setminus \{\bot\})^* \cdot \mathit{Stack}(\m{M})) \times (\mathit{Stack}(\m{M}))^{n-i}$, then  there are $w_i,w_{i+1},\ldots,w_n \in \mathit{Stack}(\m{M})$ such that  $c_{x_0}=(q,\bot,\ldots,\bot,\gamma w_i, w_{i+1},\ldots,w_n)$.  Due to the definition of the subsequence $\pi'$ and $\pi''$ all the configurations of $\rho$ between $c_{x_0}$ and $c_{x_b}$ have a content of the $l$-th stack (with $i\leq l \leq k$) of the form $w'_l w_l$. In particular, the configuration $c_{q_f}$ is of the form $(q_f,u_1,\ldots,u_{i-1}, u_i w_i, u_{i+1} w_{i+1},\ldots, u_n w_{n})$ and the configuration $c_{x_b}$ is of the form $(q,\bot,\ldots,\bot, \gamma v_i w_i, v_{i+1} w_{i+1},\ldots, v_n w_{n})$. This implies:

$$c_i^{(q,\gamma)}=(q,\bot,\ldots,\bot,\gamma,\bot,\ldots,\bot)  \lby{\tau'}{}_{\m{T}(\m{M}_{[1,i]})}^+  (q_f,u_1,\ldots,u_{i-1}, u_i, u_{i+1} ,\ldots, u_n)  $$

 and 
$$ (q_f,u_1,\ldots,u_{i-1}, u_i, u_{i+1} ,\ldots, u_n) \lby{\tau''}{}_{\m{T}(\m{M}_{[1,i]})}^*\,(q,\bot,\ldots,\bot, \gamma v_i, v_{i+1},\ldots, v_n )$$

Consequently, $(2)$ holds, which concludes the proof.

\medskip

\noindent
$(\Leftarrow):$ We can use $(1)$ and $(2)$ of Theorem \ref{sect-ltl-lemma} to construct a run starting from $c$ that visits infinitely often the state $q_f$.\end{proof}

 Since the sets of configurations $C_i^{(q,\gamma)}$ and  $(\{q_f\} \times (\mathit{Stack}(\m{M}))^n)$ are   recognizable, we can use Theorem \ref{coro.prestar-regularity} and Theorem \ref{coro.prestar+regularity} to construct $\m{M}$-automata recognizing   $\mathit{Pre}_{\m{T}(\m{M})}^*(C_i^{(q,\gamma)})$ and $ \mathit{Pre}_{\m{T}(\m{M}_{[1,i]})}^+\big( \mathit{Pre}_{\m{T}(\m{M}_{[1,i]})}^*(C_i^{(q,\gamma)}) \cap (\{q\} \times (\mathit{Stack}(\m{M}))^n)\big)$. Hence, we can construct a $\m{M}$-automaton that recognizes the set of all configurations $c$ of  $\m{M}$ such that   there is an infinite run  of $\m{T}(\m{M})$ starting from $c$ that visits infinitely often the state $q_f$.   

\medskip

\begin{thm}
\label{repeated-thm}
Let $\m{M}=(n,Q,\Sigma,\Gamma,\Delta,q_{0},\gamma_{0},F)$ be an OMPA and  $q_f \in Q$ be a  state. Then, it is possible to construct, in time   $O((|\m{M}|)^{2^{d n}})$ where $d$ is a constant,  
an $\m{M}$-automaton $\m{A}$ such that $|\m{A}|=O((|\m{M}|)^{2^{d n}})$  and  for every configuration $c \in \mathit{Conf}(\m{M})$,  $c \in L_{\m{M}}(\m{A})$   if and only if   there is an infinite run  of $\m{T}(\m{M})$ starting from $c$ that visits $q_f$ infinitely often.  
\end{thm}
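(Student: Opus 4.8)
The plan is to read off the target set directly from the characterization in Theorem~\ref{sect-ltl-lemma}. The crucial observation is that condition~$(2)$ of that theorem refers only to the fixed configuration $c_i^{(q,\gamma)}$ and never mentions the starting configuration $c$; it therefore acts as a \emph{filter} on triples $(i,q,\gamma) \in [1,n] \times Q \times (\Gamma \setminus \{\bot\})$. Consequently, a configuration $c$ admits an infinite run visiting $q_f$ infinitely often if and only if $c \in \mathit{Pre}_{\m{T}(\m{M})}^*(C_i^{(q,\gamma)})$ for some triple that passes this filter. In other words, the set we must recognize is the finite union
\[
\bigcup_{(i,q,\gamma) \in T} \mathit{Pre}_{\m{T}(\m{M})}^*(C_i^{(q,\gamma)}),
\]
where $T$ is the set of triples satisfying condition~$(2)$.

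First I would dispose of the filter. For a fixed triple, the sets $C_i^{(q,\gamma)}$ and $\{q_f\} \times (\mathit{Stack}(\m{M}))^n$ are both recognizable by $\m{M}$-automata of size $O(|\m{M}|)$, and $\m{M}_{[1,i]}$ is again an $n$-OMPA obtained from $\m{M}$ by deleting transitions, so Theorems~\ref{coro.prestar-regularity} and~\ref{coro.prestar+regularity} apply to it. I would thus apply Theorem~\ref{coro.prestar-regularity} to $\m{M}_{[1,i]}$ to build an automaton for $\mathit{Pre}_{\m{T}(\m{M}_{[1,i]})}^*(C_i^{(q,\gamma)})$, intersect it with $\{q_f\} \times (\mathit{Stack}(\m{M}))^n$ using closure of $\m{M}$-automata under intersection, and then apply Theorem~\ref{coro.prestar+regularity} to obtain an automaton for the $\mathit{Pre}^+$ set appearing in condition~$(2)$. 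Since membership of the single configuration $c_i^{(q,\gamma)}$ in a recognizable set is decidable in polynomial time, this decides whether $(i,q,\gamma) \in T$.

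Next, for every triple in $T$ I would apply Theorem~\ref{coro.prestar-regularity} once more, this time to $\m{M}$ itself, to obtain an $\m{M}$-automaton recognizing $\mathit{Pre}_{\m{T}(\m{M})}^*(C_i^{(q,\gamma)})$, and finally take the union of these automata over $(i,q,\gamma) \in T$ using closure of $\m{M}$-automata under union. By Theorem~\ref{sect-ltl-lemma}, the resulting automaton $\m{A}$ recognizes exactly the configurations from which an infinite $q_f$-visiting run exists. For the complexity, there are only $O(|\m{M}|^3)$ triples; each input set is recognized by an automaton of polynomial size, so each invocation of Theorem~\ref{coro.prestar-regularity} yields an automaton of size $O(|\m{M}|^{2^{dn}})$. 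The nested $\mathit{Pre}^+ \circ \mathit{Pre}^*$ of condition~$(2)$ composes two such doubly-exponential constructions, but feeding an automaton of size $O(|\m{M}|^{2^{dn}})$ back into Theorem~\ref{coro.prestar+regularity} still yields size of the form $O(|\m{M}|^{2^{d'n}})$ for a larger constant $d'$. Taking the union of polynomially many automata then keeps the size within $O(|\m{M}|^{2^{dn}})$ after enlarging $d$, so both the running time and $|\m{A}|$ meet the stated bound.

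I expect the main subtlety to be precisely this two-level reading of Theorem~\ref{sect-ltl-lemma}: one must recognize that condition~$(2)$ is a triple-indexed yes/no test rather than a constraint on $c$, so that the answer set is a genuine union of $\mathit{Pre}^*$-images and not some object that mixes $c$ with its witness. The remaining work, namely verifying recognizability of $C_i^{(q,\gamma)}$ and $\{q_f\} \times (\mathit{Stack}(\m{M}))^n$ and checking that the polynomially many triples, the Boolean operations, and the nesting of $\mathit{Pre}$ operators are all absorbed into the $2^{dn}$ exponent, is routine given the earlier results.
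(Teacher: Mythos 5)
Your proposal is correct and follows essentially the same route as the paper: both treat condition~$(2)$ of Theorem~\ref{sect-ltl-lemma} as a $c$-independent filter on triples $(i,q,\gamma)$, decide it by building the $\mathit{Pre}^+\circ\mathit{Pre}^*$ automaton via Theorems~\ref{coro.prestar-regularity} and~\ref{coro.prestar+regularity} applied to $\m{M}_{[1,i]}$ and testing membership of $c_i^{(q,\gamma)}$, and then output the union over the surviving triples of the $\m{M}$-automata for $\mathit{Pre}_{\m{T}(\m{M})}^*(C_i^{(q,\gamma)})$ (the paper phrases the filter as substituting an empty automaton for failing triples, which is the same thing). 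Your complexity accounting, including the absorption of the nested doubly-exponential constructions into a larger constant $d$, also matches the paper's.
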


\smallskip

\begin{proof}
We know from  Theorem \ref{sect-ltl-lemma}  that there is an infinite run  of $\m{T}(\m{M})$ starting from $c$ that visits $q_f$ infinitely often if and only if  $c \in \mathit{Pre}_{\m{T}(\m{M})}^*(C_i^{(q,\gamma)})$, and 
$c_i^{(q,\gamma)} \in \mathit{Pre}_{\m{T}(\m{M}_{[1,i]})}^+\big( \mathit{Pre}_{\m{T}(\m{M}_{[1,i]})}^*(C_i^{(q,\gamma)}) \cap (\{q_f\} \times (\mathit{Stack}(\m{M}))^n)\big)$. (Observe that it is possible to construct  an $\m{M}$-automaton representing the set $C_i^{(q,\gamma)}$ and which size is linear in the size of $\m{M}$.)

Then, for every index $i \in [1,n]$, state $q \in Q$ and stack symbol $\gamma \in \Gamma$, we construct an $\m{M}$-automaton $\m{A}_1^{(q,i,\gamma)}$ recognizing the set $ \mathit{Pre}_{\m{T}(\m{M}_{[1,i]})}^*(C_i^{(q,\gamma)})$ and  such that $|\m{A}_1^{(q,i,\gamma)}|=O((|\m{M}|)^{2^{d' n}})$  where $d'$ is a constant. From Theorem \ref{coro.prestar-regularity}, we know that such an automaton $\m{A}_1^{(q,i , \gamma)}$  can be constructed  in time   $O((|\m{M}|)^{2^{d' n}})$.

Now, we can construct an  $\m{M}$-automaton $\m{A}_2^{(q,i,\gamma)}$ recognizing precisely  the set $\big( \mathit{Pre}_{\m{T}(\m{M}_{[1,i]})}^*(C_i^{(q,\gamma)}) \cap (\{q_f\} \times (\mathit{Stack}(\m{M}))^n)\big)$ and  such that $|\m{A}_2^{(q,i,\gamma)}|=O((|\m{M}|)^{2^{d' n}})$. Observe that $\m{A}_2^{(q,i,\gamma)}$ can be constructed   in time   $O((|\m{M}|)^{2^{d' n}})$ from the $\m{M}$-automaton   $\m{A}_1^{(q,i,\gamma)}$.

We can apply  Theorem \ref{coro.prestar+regularity} to $\m{M}$ and  $\m{A}_2^{(q,i,\gamma)}$ to show that we can construct an $\m{M}$-automaton $\m{A}_3^{(q,i,\gamma)}$ recognizing $ \mathit{Pre}_{\m{T}(\m{M}_{[1,i]})}^+\big( \mathit{Pre}_{\m{T}(\m{M}_{[1,i]})}^*(C_i^{(q,\gamma)}) \cap (\{q_f\} \times (\mathit{Stack}(\m{M}))^n)\big)$ and such that $|\m{A}_3^{(q,i,\gamma)}|=O((|\m{M}|)^{2^{d'' n}})$ for some constant $d''>d'$. Moreover, such an $\m{M}$-automaton $\m{A}_3^{(q,i,\gamma)}$ can be constructed in time   $O((|\m{M}|)^{2^{d'' n}})$. Then, checking whether $c_i^{(q,\gamma)}$  is in $L_{\m{M}}(\m{A}_3^{(q,i,\gamma)})$ can be performed in time polynomial in  $|\m{A}_3^{(q,i,\gamma)}|$  \cite{hu79}.

If $c_i^{(q,\gamma)}$ is not  in  $L_{\m{M}}(\m{A}_3^{(q,i,\gamma)})$ then let $\m{A}^{(q,i,\gamma)}$ be the $\m{M}$-automaton recognizing the empty set (i.e., $L_{\m{M}}(\m{A}^{(q,i,\gamma)})=\emptyset$). Otherwise let $\m{A}^{(q,i,\gamma)}$ be the $\m{M}$-automaton recognizing the set $\mathit{Pre}_{\m{T}(\m{M})}^*(C_i^{(q,\gamma)})$ and such that $|\m{A}^{(q,i,\gamma)}|=O((|\m{M}|)^{2^{d' n}})$. From Theorem \ref{coro.prestar-regularity}, we know that such an automaton $\m{A}^{(q,i , \gamma)}$  can be constructed  in time   $O((|\m{M}|)^{2^{d' n}})$.

By taking $d$ as big as needed, we can construct, in time   $O((|\m{M}|)^{2^{d n}})$ where $d$ is a constant,   the $\m{M}$-automaton $\m{A}$ such that $|\m{A}|=O((|\m{M}|)^{2^{d n}})$  and  for every configuration $c \in \mathit{Conf}(\m{M})$,  $c \in L_{\m{M}}(\m{A})$   if and only if   there is an infinite run  of $\m{T}(\m{M})$ starting from $c$ that visits $q_f$ infinitely often.  The $\m{M}$-automaton $\m{M}$ is just the union of all the $\m{M}$-automata $\m{A}^{(q,i,\gamma)}$.
\end{proof}

 \medskip

 \subsection{$w$-regular  properties}
 
In the following, we assume that the reader is familiar with $w$-regular properties expressed in the  linear-time temporal logics \cite{Pnu77} or the linear time $\mu$-calculus \cite{Var88}. For more details, the reader is
referred to \cite{Pnu77,VW86,Var88, Var95}.

Let   $\varphi$  be  an $w$-regular  formula built from  a set of atomic propositions $\mathit{Prop}$,  and  let  $\m{M}=(n,Q,\Sigma,\Gamma,\Delta,q_{0},\gamma_{0},F)$ be an  OMPA  with a labeling function $\Lambda: \, Q \rightarrow 2^{\mathit{Prop}}$  associating  to each state $q \in Q$  the set of atomic propositions that are true in  it.  Afterwards, we are interested in solving   {\em the global model checking problem}  which   consists in   computing the set of all configurations $c$ of $\m{M}$  such that every infinite run starting from $c$ satisfies    $\varphi$.

To solve this problem, we adopt an  approach similar to \cite{BM96,BEM97} and we construct   a Buchi   automaton $\mathcal{B}_{\neg \varphi}$ over the alphabet  $2^{\mathit{Prop}}$ accepting the negation of $\varphi$ \cite{VW86,Var95}. Then, we compute the product of the OMPA $\m{M}$ and of the B$\ddot{u}$chi automaton $\m{B}_{\neg \varphi}$ to obtain an $n$-OMPA $\m{M}_{\neg \varphi}$ with a set of repeating states $G$. Now, it is easy to see  that the original problem can be reduced  to the {\em repeated state  reachability problem}  which  compute the set of all configurations $c$ such that there  is an infinite run  of $\m{T}(\m{M})$ starting from $c$ that visits infinitely often a state in $G$.   Hence,  as an immediate consequence of Theorem \ref{repeated-thm}, we obtain:

\begin{thm}
\label{thm-ltlt-}
Let $\m{M}=(n,Q,\Sigma,\Gamma,\Delta,q_{0},\gamma_{0},F)$ be an OMPA with a labeling function $\Lambda$,  and  let  $\varphi $  be  a linear time $\mu$-calculus formula or linear time temporal formula. Then, it is possible to construct, in time   $O(( 2^{|\varphi|} \cdot |\m{M}|)^{2^{d n}})$ where $d$ is a constant,  
an $\m{M}$-automaton $\m{A}$ such that $|\m{A}|=O((2^{|\varphi|} \cdot |\m{M}|)^{2^{d n}})$  and  for every configuration $c \in \mathit{Conf}(\m{M})$,  $c \in L_{\m{M}}(\m{A})$   if and only if   there is an infinite run  of $\m{T}(\m{M})$ starting from $c$  does not satisfy $\varphi$.  \end{thm}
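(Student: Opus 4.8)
The plan is to follow the classical automata-theoretic recipe for linear-time model checking, reducing the problem to the repeated state reachability problem already solved in Theorem~\ref{repeated-thm}. First I would translate the negation of the specification into a B\"uchi automaton. Since $\varphi$ is a linear-time $\mu$-calculus or LTL formula, there is a B\"uchi automaton $\m{B}_{\neg\varphi}$ over the alphabet $2^{\mathit{Prop}}$, with $2^{O(|\varphi|)}$ states, initial state $s_{\mathit{init}}$, accepting set $F_{\m{B}}$, and transition relation written $s\by{X}_{\m{B}} s'$, that recognizes exactly the infinite words which are models of $\neg\varphi$ \cite{VW86,Var95}. Thus an infinite run $q_0 q_1 q_2 \cdots$ of $\m{M}$ fails to satisfy $\varphi$ if and only if the induced word $\Lambda(q_0)\Lambda(q_1)\cdots$ is accepted by $\m{B}_{\neg\varphi}$.

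Next I would form the synchronous product $\m{M}_{\neg\varphi}$ of $\m{M}$ with $\m{B}_{\neg\varphi}$: its state set is $Q\times S$ (where $S$ is the state set of $\m{B}_{\neg\varphi}$), its stack alphabet and all stack operations are inherited verbatim from $\m{M}$, and it carries a transition $\co (q,s),\gamma_1,\ldots,\gamma_n \cf \by{a}_{\m{M}_{\neg\varphi}} \co (q',s'),\alpha_1,\ldots,\alpha_n \cf$ whenever $\co q,\gamma_1,\ldots,\gamma_n \cf \by{a}_{\m{M}} \co q',\alpha_1,\ldots,\alpha_n \cf$ and $s\by{\Lambda(q)}_{\m{B}} s'$. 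Since the product only duplicates the control state and never touches the stacks, $\m{M}_{\neg\varphi}$ is again an $n$-OMPA in normal form, of size $O(|\m{M}|\cdot 2^{O(|\varphi|)})=O\big((2^{|\varphi|}\cdot|\m{M}|)^{c}\big)$ for some constant $c$. Taking the repeating set $G=Q\times F_{\m{B}}$, a configuration $c=(q,w_1,\ldots,w_n)$ of $\m{M}$ admits an infinite run violating $\varphi$ if and only if $((q,s_{\mathit{init}}),w_1,\ldots,w_n)$ admits an infinite run of $\m{M}_{\neg\varphi}$ visiting $G$ infinitely often.

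It then remains to compute, for $\m{M}_{\neg\varphi}$, the set of configurations from which some infinite run visits $G$ infinitely often, and to transport the answer back to $\m{M}$. Because $G$ is finite, such a run visits $G$ infinitely often if and only if it visits some single state $q_f\in G$ infinitely often; hence this set is the union over $q_f\in G$ of the sets produced by Theorem~\ref{repeated-thm}, and taking the union of the corresponding $\m{M}_{\neg\varphi}$-automata yields an $\m{M}_{\neg\varphi}$-automaton $\m{A}_{\mathit{rep}}$ of size $O\big((2^{|\varphi|}\cdot|\m{M}|)^{2^{d'n}}\big)$, computable in the same time bound (the polynomial factor $c$ from the product is absorbed by choosing $d'>d$ large enough that $c\cdot 2^{dn}\le 2^{d'n}$ for all $n\ge 1$). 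Finally I would relabel and restrict $\m{A}_{\mathit{rep}}$: pinning the B\"uchi component to $s_{\mathit{init}}$ and projecting the control component $Q\times S$ back onto $Q$ are standard finite-automaton operations that preserve the size bound and deliver the required $\m{M}$-automaton $\m{A}$.

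The hard part here is not conceptual but a matter of bookkeeping: one must verify that the product respects the OMPA normal form (which it does, as the B\"uchi component is stack-inert), that the initialization of the B\"uchi state is handled consistently when passing between configurations of $\m{M}$ and of $\m{M}_{\neg\varphi}$, and that the constant-power blowup $c$ incurred by the product is absorbed into the final constant $d$. All the genuine heavy lifting---the recognizability and effective, $2$ETIME-bounded construction of $\mathit{Pre}^{*}$, $\mathit{Pre}^{+}$, and the repeated-reachability set (Theorems~\ref{coro.prestar-regularity}, \ref{coro.prestar+regularity}, and~\ref{repeated-thm})---has already been discharged, so once the product is set up correctly the statement follows immediately.
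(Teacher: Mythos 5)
Your proposal is correct and follows essentially the same route as the paper: build a B\"uchi automaton $\m{B}_{\neg\varphi}$ for the negation of $\varphi$, form its product with $\m{M}$ to get an $n$-OMPA $\m{M}_{\neg\varphi}$ with repeating set $G$, and invoke Theorem~\ref{repeated-thm}, absorbing the polynomial blowup of the product into the constant $d$. The only difference is that you spell out details the paper leaves implicit (the union over the finitely many states of $G$, preservation of the normal form, and the projection back to configurations of $\m{M}$ with the B\"uchi component pinned to its initial state), all of which are handled correctly.
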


\begin{proof}
It is well known that it is possible to construct, in time exponential  in $|\varphi|$,   a B$\ddot{u}$chi automaton $\m{B}_{\neg \varphi}$ for the negation of ${ \varphi}$ having exponential size in $|\varphi|$ \cite{VW86,Var88}. Therefore,  the product of $\m{M}$ and $\m{B}_{\neg \varphi}$ has polynomial size in $|\m{M}|$ and exponential size in $|\varphi|$. Applying  Theorem \ref{repeated-thm}  to the $n$-OMPA $\m{M}_{\neg \varphi}$ (the product  of $\m{M}$ and $\m{B}_{\neg \varphi}$) of size $O(2^{|\varphi|} \cdot |\m{M}|)$  we obtain our complexity result. \end{proof}

Observe that we can also  construct an  $\m{M}$-automaton $\m{A}'$  such that for every configuration $c \in \mathit{Conf}(\m{M})$,  $c \in L_{\m{M}}(\m{A})$   if and only if   every  infinite run  of $\m{T}(\m{M})$ starting from $c$ that satisfies $\varphi$ since the class of $\m{M}$-automata is closed under boolean operations.

We are now ready to establish our result about the model checking problem for  $w$-regular properties  which consists in   checking whether, for a given  configuration $c$ of  the OMPS,  every infinite run starting from $c$ satisfies   the formula $\varphi$.
\begin{thm}
The model checking problem for the linear-time temporal logics or the linear-time $\mu$-calculus and OMPA is 2ETIME-complete.
\end{thm}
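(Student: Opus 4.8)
The plan is to establish the two halves of 2ETIME-completeness separately: membership in 2ETIME is obtained directly from the automata-theoretic construction of Theorem~\ref{thm-ltlt-} together with a membership test, while 2ETIME-hardness is inherited from the emptiness problem for OMPA.

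For the upper bound, I would reduce the model-checking question --- given a configuration $c$, does every infinite run of $\m{T}(\m{M})$ starting from $c$ satisfy $\varphi$? --- to a single membership test. By Theorem~\ref{thm-ltlt-}, I can construct an $\m{M}$-automaton $\m{A}$ such that $c' \in L_{\m{M}}(\m{A})$ if and only if there is an infinite run of $\m{T}(\m{M})$ starting from $c'$ that does \emph{not} satisfy $\varphi$. Hence every infinite run starting from $c$ satisfies $\varphi$ (vacuously so when no infinite run exists) precisely when $c \notin L_{\m{M}}(\m{A})$. Since membership in $L_{\m{M}}(\m{A})$ is decidable in time polynomial in $|\m{A}|$, it suffices to build $\m{A}$ and test $c \notin L_{\m{M}}(\m{A})$; alternatively, since the class of $\m{M}$-automata is closed under complementation, one may construct directly the automaton for the configurations satisfying the property. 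For the complexity, Theorem~\ref{thm-ltlt-} builds $\m{A}$ in time $O((2^{|\varphi|}\cdot|\m{M}|)^{2^{dn}})$; writing $N$ for the size of the whole input, we have $(2^{|\varphi|}\cdot|\m{M}|)^{2^{dn}} = 2^{\,2^{dn}(|\varphi|+\log|\m{M}|)} \le 2^{\,2^{d'N}}$ for a suitable constant $d'$, because $2^{dn}(|\varphi|+\log|\m{M}|) \le 2^{dN}\cdot 2N \le 2^{d'N}$. Thus the whole procedure runs in 2ETIME.

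For the lower bound, I would invoke the 2ETIME-hardness of the emptiness problem for OMPA established in \cite{ABH-dlt08}, and exhibit a polynomial-time reduction from emptiness to model checking. Given an OMPA $\m{M}$ with final set $F$, deciding $L(\m{M}) \neq \emptyset$ amounts to deciding the existence of a run from the initial configuration reaching a final configuration. To phrase this over infinite behaviours (as the model-checking problem here ranges over infinite runs), I would add a self-loop at each final configuration so that every finite accepting run extends to an infinite run visiting $F$ infinitely often; then $L(\m{M}) \neq \emptyset$ holds if and only if such an infinite run exists, which by the repeated-state formulation of Theorem~\ref{repeated-thm} is exactly the negation of an $\omega$-regular model-checking instance (``every infinite run eventually stops visiting $F$''). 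Since 2ETIME is closed under complementation, this polynomial reduction shows the model-checking problem is 2ETIME-hard, and combined with the upper bound it is 2ETIME-complete.

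I expect the genuinely delicate point to be the formulation of the hardness reduction rather than the upper bound, which is essentially immediate from Theorem~\ref{thm-ltlt-}. The subtlety is that the emptiness problem of \cite{ABH-dlt08} is a finite-acceptance (reachability) condition, whereas the model-checking problem quantifies over \emph{infinite} runs; the self-loop gadget at final configurations is what faithfully converts a finite accepting run into an infinite $\omega$-regular witness, and one must check that this gadget preserves the OMPA ordering constraint and adds only polynomially many transitions, so that the reduction remains polynomial-time and the 2ETIME lower bound transfers cleanly.
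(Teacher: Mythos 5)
Your upper bound coincides exactly with the paper's: build the $\m{M}$-automaton $\m{A}$ of Theorem~\ref{thm-ltlt-} and decide $c \notin L_{\m{M}}(\m{A})$ by a polynomial membership test, and your arithmetic $(2^{|\varphi|}\cdot|\m{M}|)^{2^{dn}} \le 2^{2^{d'N}}$ in the total input size $N$ is correct. The gap is in the hardness reduction. The paper itself only cites the 2ETIME-hardness of emptiness \cite{ABH-dlt08} and leaves the reduction implicit, but the explicit reduction you propose is wrong as stated: in this framework an $\omega$-regular property is evaluated on the sequence of \emph{states} of a run (via the labeling $\Lambda : Q \ra 2^{\mathit{Prop}}$, and Theorem~\ref{repeated-thm} likewise speaks of visiting a \emph{state} $q_f$ infinitely often), so the property cannot observe whether the stacks are empty. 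Hence ``there is an infinite run visiting $F$ infinitely often'' does \emph{not} imply $L(\m{M}) \neq \emptyset$. Concretely, take the one-state OMPA in normal form with $F=\{q_0\}$ and the single transition $\co q_0,\gamma_0,\epsilon,\ldots,\epsilon \cf \by{\epsilon}_{\m{M}} \co q_0,\gamma_0\gamma_0,\epsilon,\ldots,\epsilon \cf$: its unique infinite run sits in a state of $F$ at every step, yet the first stack grows forever, no final configuration is ever reached, and $L(\m{M})=\emptyset$, while your model-checking instance answers ``yes''. Your self-loop gadget does not repair this, because the offending run never uses the self-loops; the delicate point is not the legality of the gadget (which is what you flagged) but what the $\omega$-regular property is able to see.

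The repair is to mark the reaching of a final \emph{configuration} by a fresh state rather than by a loop. An OMPA transition may test $\bot$ on all $n$ stacks simultaneously (take $i=n$ and $\gamma_n=\bot$ in the definition of OMPA, and note that reading $\bot$ forces it to be rewritten), so one can add a fresh state $q_{\mathit{win}}$, a transition $\co q,\bot,\ldots,\bot \cf \by{\epsilon}_{\m{M}'} \co q_{\mathit{win}},\bot,\ldots,\bot \cf$ for each $q \in F$, and a self-loop $\co q_{\mathit{win}},\bot,\ldots,\bot \cf \by{\epsilon}_{\m{M}'} \co q_{\mathit{win}},\bot,\ldots,\bot \cf$. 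Since $q_{\mathit{win}}$ is entered only from final configurations, $L(\m{M}) \neq \emptyset$ if and only if $\m{M}'$ has an infinite run from the initial configuration visiting $q_{\mathit{win}}$ (equivalently, visiting it infinitely often), and ``every infinite run never visits $q_{\mathit{win}}$'' is the model-checking instance whose complement encodes emptiness; closure of the deterministic class 2ETIME under complementation then transfers hardness exactly as you argue. Two bookkeeping points remain: these new transitions are not of the two normal-form types, so the paper's normal-form lemma must be re-applied (linear blowup, so the reduction stays polynomial), and the labeling function must distinguish $q_{\mathit{win}}$ from all other states. With this marker-state device your argument becomes a correct instantiation of the paper's one-line hardness claim.
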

\begin{proof}
The 2ETIME upper bound is established  by Theorem \ref{thm-ltlt-}. To prove  hardness, we use the fact that the emptiness problem for ordered multi-pushdown automata is 2ETIME-complete \cite{ABH-dlt08}.   
\end{proof}

%%%%%%%%%%%%%%%%%%%%%%%%%%%%%%%%%%
%%%%%%%%%%%%%%%%%%%%%%%%%%%%%%%%%%%%%
% CONCLUSION
%%%%%%%%%%%%%%%%%%%%%%%%%%%%%%%%%%%%%%%

\section{Conclusion}
We  have shown that  the set of all  predecessors of a recognizable set of configurations  of an ordered multi-pushdown automaton  is an effectively constructible recognizable set. We  have also  proved that the set of all configurations of an ordered multi-pushdown automaton   that satisfy a given $w$-regular property is effectively recognizable. From these results  we have derived  an 2ETIME upper bound for the model checking problem of $w$-regular properties.

It may be interesting to see if our approach can be extended to solve the global  model-checking problem  for branching time properties expressed in CTL or CTL$^*$ by adapting the constructions given in  \cite{BEM97,AF-BW-PW-INF-97} for standard pushdown automata.

%\section*{Acknowledgement}
%  The authors wish to acknowledge fruitful discussions with Ahmed Bouajjani.
\bibliographystyle{alpha}

\bibliography{biblio}

\end{document}